\newtheorem{thm}{Theorem}
\newtheorem{lemma}{Lemma}
\newtheorem{cor}{Corollary}
\newtheorem{defn}{Definition}
\newtheorem{remark}{Remark}
\long\def\comment#1{}
\newfont{\bbb}{msbm10 scaled 800}
\newfont{\bb}{msbm10 scaled 1100}
\newcommand{\CC}{\mbox{\bb C}}
\newcommand{\RR}{\mbox{\bb R}}
\newcommand{\ZZ}{\mbox{\bb Z}}
\newcommand{\gv}{{\bf g}}
\newcommand{\sv}{{\bf s}}
\newcommand{\uv}{{\bf u}}
\newcommand{\vv}{{\bf v}}
\newcommand{\xv}{{\bf x}}
\newcommand{\yv}{{\bf y}}
\newcommand{\zv}{{\bf z}}
\newcommand{\Fm}{{\bf F}}
\newcommand{\Gm}{{\bf G}}
\newcommand{\Hm}{{\bf H}}
\newcommand{\Pm}{{\bf P}}
\newcommand{\Sm}{{\bf S}}
\newcommand{\Um}{{\bf U}}
\newcommand{\Vm}{{\bf V}}
\newcommand{\Ec}{{\cal E}}
\renewcommand{\arg}{{\hbox{arg}}}
\begin{document}

\title{Cellular Interference Alignment:  Omni-Directional Antennas and Asymmetric Configurations}

\vspace{-0.2in}
\author{Vasilis~Ntranos$^\dagger$\thanks{This work is the outcome of a collaboration that started while V. Ntranos was a research intern at Bell Labs, Alcatel-Lucent. Emails: ntranos@usc.edu,
mohammadali.maddah-ali@alcatel-lucent.com, caire@usc.edu.}, 
        Mohammad~Ali~Maddah-Ali$^\ast$,  and
        Giuseppe~Caire$^\dagger$ \\
        $^\dagger$University of Southern California, Los Angeles, CA,  USA\\ 
				$^\ast$Bell Labs, Alcatel-Lucent, Holmdel, NJ, USA }
        
\maketitle

\vspace{-12pt}

\begin{abstract}
Although interference alignment (IA) can theoretically achieve the optimal degrees of freedom (DoFs) in the $K$-user Gaussian interference channel, 
its direct application  comes at the prohibitive cost of precoding over exponentially-many signaling dimensions. 
On the other hand, it is known that practical ``one-shot'' IA precoding (i.e., linear schemes without symbol expansion) 
provides a vanishing DoFs gain in large fully-connected networks with generic channel coefficients. 
In our previous work, we introduced the concept of ``Cellular IA'' for a network topology induced by  hexagonal cells with sectors and 
nearest-neighbor interference. Assuming that neighboring sectors can exchange decoded messages (and not received signal samples) 
in the uplink, we showed that linear one-shot IA precoding over $M$ transmit/receive antennas can achieve 
the optimal $M/2$ DoFs per user.
In this paper we extend this framework to networks with omni-directional (non-sectorized) 
cells and consider the practical scenario where users 
have $2$ antennas,  and  base-stations have $2$, $3$ or $4$ antennas. 
In particular, we provide linear one-shot IA schemes
for  the $2\times 2$, $2\times3$ and $2\times 4$ cases,  and show the achievability of  $3/4$, $1$ and $7/6$ DoFs per user, respectively.  
DoFs converses for one-shot schemes require the solution of a discrete optimization problem over a number of variables that grows 
with the network size. We develop a new approach to transform such challenging optimization problem into a tractable linear program (LP) 
with significantly fewer variables. This approach is used to show that the achievable $3/4$ DoFs per user are indeed optimal 
for a large (extended) cellular network with $2\times 2$ links.

\end{abstract}

\vspace{-12pt}

\begin{IEEEkeywords}
 Interference Alignment, Cellular Systems, Network Interference Cancellation, Degrees of Freedom.
\end{IEEEkeywords}

\section{Introduction}
Interference management is arguably one of the most important technical challenges in the design of  wireless systems. Conventional techniques, such as time/frequency orthogonalization, waste valuable channel resources in order to avoid interfering transmissions (by giving each user a unique fraction of the spectrum) and soon will not be  able to  keep up with the rapidly increasing bandwidth demands of today's networks. 
Recent  advances in information theory  \cite{cj08,mgmk09,ergodic}  have shown that
transmission schemes based on {\it interference alignment} 
\cite{mmk08,cj08} are  able to provide half of the available spectrum to each user in the network and  promise significant gains compared to conventional  approaches. However, the extent to which such gains 
can be realized in practice has been so far limited.

In order to achieve the theoretically optimal performance, interference alignment solutions often rely on    infinite channel diversity 
and asymptotic symbol expansion \cite{cj08,sy13}, or infinite-layer lattice alignment \cite{mgmk09}, rendering their direct applicability to wireless systems virtually impossible.
An emerging body of work has focused on practical settings, investigating the performance of interference alignment at finite SNR and restricting the achievability to linear beamforming schemes without symbol extensions. However, in the majority of cases, the resulting solutions do not scale accordingly as the size of the network increases and the corresponding gains have only been materialized in settings with a small number of users  (e.g., three or four transmit-receive pairs) \cite{mmk08,sht11,katabi09,sy13isit}. 
In fact, it has been shown that without symbol extensions, the degrees of freedom (DoF) gain of any linear interference alignment scheme  in a 
fully-connected network {with generic channel coefficients}\footnote{For example, coefficients independently sampled from a continuous distribution.} 
vanishes as the number of users increases~\cite{ygjk10, Razaviyayn, Bresler}. 
These results raise an important question: \emph{Is there any reasonable cellular system deployment in which interference alignment is ``practically'' feasible and yields sizable DoFs gain with respect to trivial time-frequency orthogonalization?} Motivated by this question,
in our previous work \cite{nmc14} we introduced a novel framework for the uplink
of 
extended\footnote{Following \cite{extendedkumar04,extendedtelatar05,extendedtse10} 
we refer to an ``extended'' network as a network with a fixed spatial density of cells and increasing total coverage area, in 
contrast to a ``dense'' network where the total coverage area is fixed and the cell density increases.}
sectored cellular networks in which interference alignment can achieve
 the promised optimal DoFs by linear precoding in one-shot (i.e., by precoding over a single time-frequency slot). 
 In particular,  we considered a  backhaul network architecture, in which nearby receivers can exchange already decoded messages and showed that this {\it local} and {\it one directional} data exchange  is enough  to reduce the uplink of a sectored cellular network to a topology in which the optimal degrees 
of freedom can be achieved by linear interference alignment schemes without requiring time-frequency expansion or lattice alignment.  
In contrast to existing works on ``Network MIMO'' \cite{kfv06,fkv06,multicell10,htc12} and the popular and widely studied ``Wyner model'' \cite{wyner94,shamaiwyner97} for cellular systems, our proposed architecture does not require
that the base-station receivers share received signal samples and/or the presence of a central processor 
performing joint decoding of multiple user messages, which is arguably much more demanding for 
the backhaul connections. The idea of combining IA with network interference cancellation has also been considered in \cite{katabi09} for small network configurations (e.g, with three active receivers in the uplink). In contrast, our ``Cellular IA'' framework applies to large (extended) networks in which interference alignment without asymptotic symbol expansion was not known to be feasible.


In this paper we follow a similar approach and focus on the uplink of {\it non-sectored cellular systems} where base-stations are equipped with omni-directional antennas. 
The motivation behind this work is both practical and theoretical. From the practical viewpoint, 
omni-directional base-stations are typically used in dense small-cell deployments \cite{andrews2012femtocells} where intercell
interference from neighboring cells is a major impairment. From the theoretical viewpoint, 
the wireless system scenario considered here is much more challenging than the sectored case considered in \cite{nmc14}, 
since base-station receivers operate in a richer interference environment: In  \cite{nmc14}, each sector receiver observed   four dominant interfering links from its neighboring out-of-cell sector transmitters due to the deployment of directional antennas, whereas here it is natural to assume that every base-station receiver in the network will observe significant interference from all six surrounding neighboring cells.
Under this framework, we  focus on   cellular system configurations in which user terminals are equipped with $M=2$ transmit antennas and provide one-shot linear interference alignment schemes for the  cases in which base-stations  are equipped with $N=2$, $3$ and $4$ receive antennas. Even though it is straightforward to extend our achievable schemes  to  the $M\times M$ case (as in \cite{nmc14}), we chose here to take a different approach and investigate the above \emph{asymmetric} antenna configurations that are more practical and relevant to current cellular system deployments.  
Further, we propose a new converse technique for  cellular networks with $M\times M$ links (i.e., transmitters and receivers have the same number $M$ of antennas) based on the DoFs feasibility inequalities for linear interference alignment introduced in \cite{Razaviyayn},\cite{Bresler}. Direct application of these inequalities in our setting yields a challenging  (non-linear, discrete optimization) problem over a number of variables that grows with the number of transmit-receive pairs in the network. To overcome this difficulty, we exploit the topology of the interference graph and reformulate the corresponding optimization problem into a linear program (LP) with a small number of variables that does not depend on the size of the network.  Using this approach, we are able to provide a tight outer bound on the achievable DoFs in the $2\times 2$ case for large networks and show that our one-shot IA scheme is optimal.
%

This paper is organized as follows. First, in Section \ref{cellmodel} we describe the cellular model that we consider in this work and give a formal problem statement. Then, in Section \ref{sec:2x2}  we state our results for $2\times 2$ cellular networks and give the corresponding achievability and converse theorems. Finally,
in Section \ref{sec:asym} we consider networks with asymmetric antenna configurations and provide the corresponding interference alignment schemes for $2\times 3$ and $2\times 4$ systems.


\section{Cellular Model}\label{cellmodel}

We consider a large MIMO cellular network with $K$ base-stations, each one serving a user's mobile terminal as depicted in Figure~\ref{cell}. Within each cell, the base-station is  
interested in decoding the uplink transmissions of the user associated with it \begin{figure*}[ht]
        \centering
        \begin{subfigure}[b]{0.49\columnwidth}
                \centering
                \includegraphics[width=\columnwidth]{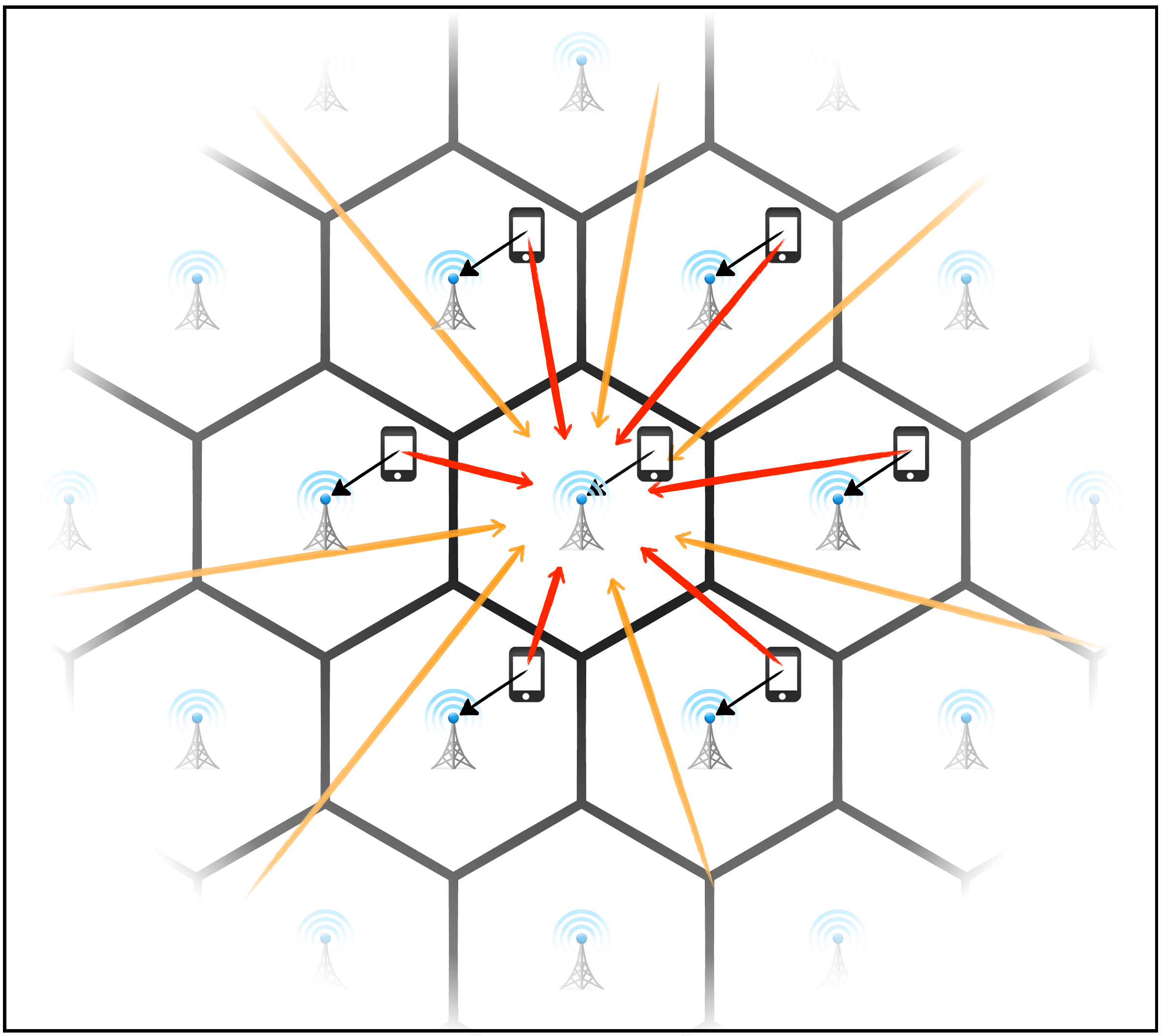}
                \caption{Cellular network}
\label{cell}
        \end{subfigure}%
        ~       \begin{subfigure}[b]{0.49\columnwidth}
                \centering
                \includegraphics[width=\columnwidth]{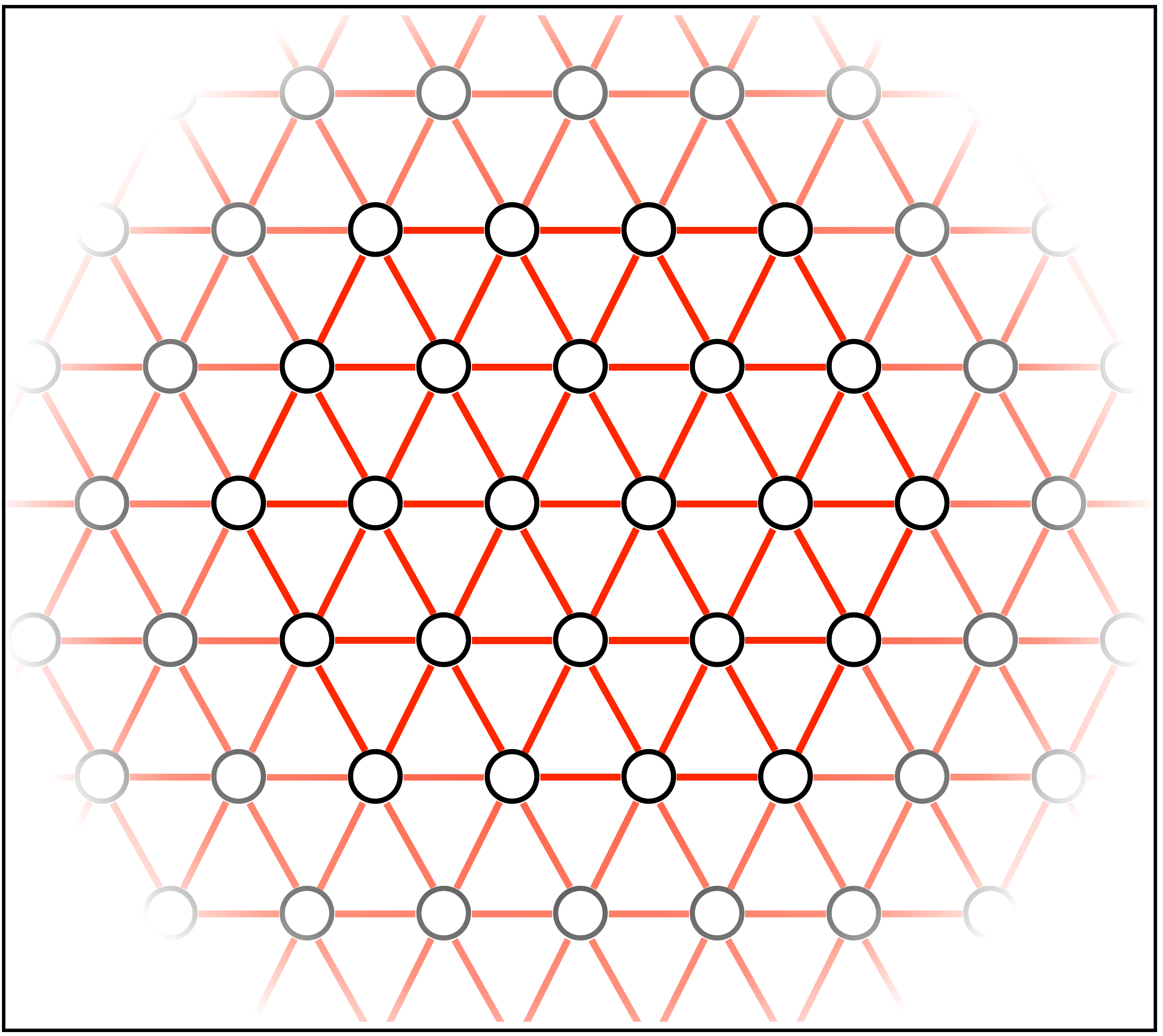}
\caption{Interference graph}
\label{igraph}
        \end{subfigure}
        
       \caption{ The cellular network topology and the corresponding dominant interference graph. The desired uplink channels from mobile terminals to their associated base-stations are depicted with black arrows in each cell. In practice, each base-station observes six dominant interfering signals from its six adjacent cells (red arrows) and 
{\it weaker interference} from outer cells (orange arrows) located at distance two or more. 
In our model, we consider the interference graph shown in (b) which captures only the dominant sources of interference for each cell; the vertices represent transmit-receive pairs  and edges indicate interfering neighbors.}\label{fig:1}
\end{figure*}and observes all other simultaneous transmissions as interference.  
We assume that transmitters and    receivers     are equipped with $M$ and $N$ antennas, respectively, and consider flat fading channel gains that remain constant throughout the entire communication.

Taking into account path loss and shadowing effects that are inherent to wireless transmissions, 
we assume that all interference in our cellular model is generated {\it locally} between transmitters and receivers of neighboring cells. As depicted in Fig.~1a, each base-station receiver in the network will observe dominant interference from all of its six neighboring cell transmitters and therefore the cellular network can be modeled as a partially-connected interference channel with the hexagonal topology shown in Fig.~1b.

Let $\mathcal S$ be the index set of all cells in the network and let $\mathcal{N}(i)$ denote the six interfering neighbors of the cell $i\in\mathcal{S}$. Within our framework, the received observation of the $ith$ base-station can be written as   
\begin{equation}
\yv_{i} = \Hm_{ii}\xv_{i} + \sum_{j\in {\mathcal N}(i)}\Hm_{ij}\xv_{j} + \zv_{i} 
\end{equation}
where $\Hm_{ij}$ is the $N\times M$ matrix of channel gains between the transmitter  associated with cell $j$ and the receiver of cell $i$ and  $\xv_{i}$ are the corresponding transmitted signals satisfying the average power constraint $\mathbb{E}\big[||\xv_{i}||^{2}\big]\leq P$.

\subsection{Interference Graph} \label{sec:igraph}

A useful representation of our cellular model can be given by the corresponding {\it interference graph} ${\cal G}({\cal V},{\cal E})$ shown in Fig.~1b. In this graph vertices represent transmit-receive pairs within each cell and 
edges indicate interfering neighboring links: the transmitter associated with a node $u\in \cal V$ causes interference to all receivers associated with nodes $v\in \cal V$ if there is an edge $(u,v)\in \cal E$.  
Notice that  the interference graph is undirected and hence interference between cells in our model goes in both directions. 

It is convenient to represent the interference graph $\cal G(V,E)$ by identifying $\cal V$ with a set of points on the complex plane whose coordinates are referred to as the node labels. The geometry of such labels in the complex plane corresponds to the hexagonal lattice layout of the cells as shown in Fig.~1. A natural choice for this labeling that we will use throughout this paper is the set of the Eisenstein integers $\mathbb{Z}(\omega)$ shown in Fig.~\ref{eisen}.


\begin{defn}[Eisenstein integers]
The Eisenstein integers denoted as $\mathbb{Z}(\omega)$, are defined as the set of complex numbers of the form $z=a+b\omega$, where $a,b \in \mathbb{Z}$ and $\omega = \frac{1}{2}(-1 + i\sqrt{3})$. 
\end{defn}

\begin{figure}[h]
                \centering
                \includegraphics[width=0.65\columnwidth]{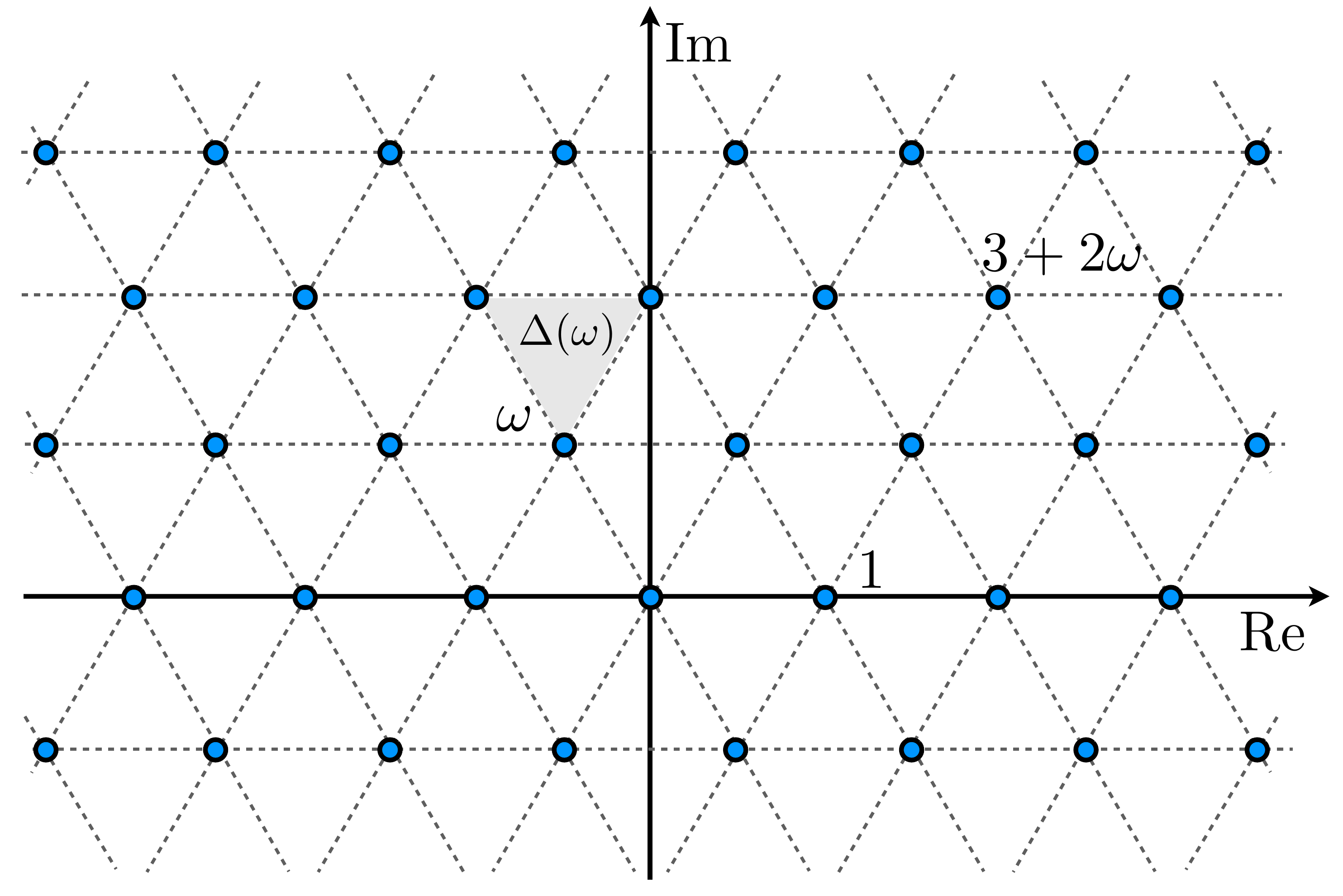}
                \caption{ The Eisenstein integers $Z(\omega)$ on the complex plane. }
                \label{eisen}

\end{figure}

Define ${\cal B}_{r} \triangleq \{z\hspace{-0.05in }\in \hspace{-0.05in }\mathbb{C}\hspace{-0.05in } :\hspace{-0.02in } |{\rm Re}(z)|\leq r , |{\rm Im}(z)|\leq \frac{\sqrt{3}r}{2}\}$
and let $\phi: {\cal V} \rightarrow \mathbb{Z}(\omega)\cap{\cal B}_{r}$ be a one-to-one mapping between the elements of $\cal V$ and the set of bounded Eisenstein integers given by $\mathbb{Z}(\omega)\cap{\cal B}_{r}$. For any $v\in \cal V$ we say that $\phi(v)$ is the unique label of the corresponding node in our graph.  
Correspondingly, the set of vertices $\cal V$ is given by
\begin{equation}
{\cal V} = \left\{ \phi^{-1}(z) : z\in \mathbb{Z}(\omega)\cap{\cal B}_{r}\right\}.
\label{eq:V}
\end{equation}
In order to explicitly describe the set of edges $\cal E$ in terms of the function $\phi$, we define the set
\begin{equation}
{\cal D}\triangleq \underset{{ z\in\mathbb{Z}(\omega)}}{\bigcup}{\Delta}(z),
\vspace{-0.15in}
\end{equation} 
where
\begin{equation}
{\Delta}(z) = \{(z,z+\omega),\,(z,z+\omega+1),\,(z+\omega,z+\omega+1) \}
\end{equation}
is the set of the three line segments in $\CC$ (shown in Fig.~\ref{eisen}) that form a triangle with vertices $z$, $z+\omega$ and $z+\omega+1$.
The set of edges ${\cal E}$ in our graph can hence be given by
\begin{equation}
{\cal E} = \left\{(u,v) : u,v\in{\cal V} \mbox{ and} \left(\phi(u),\phi(v)\right)\in {\cal D}\right\}.
\label{eq:E}
\end{equation}

\begin{defn}[Interference Graph]
The interference graph $\cal G (\cal V, \cal E)$ is an undirected graph defined by the set of vertices $\cal V$ given in (\ref{eq:V}) and the corresponding set of edges $\cal E$ given in (\ref{eq:E}). The graph vertices represent transmit-receive pairs in our cellular model and edges indicate interfering neighbors.
\end{defn}




\subsection{Network Interference Cancellation}\label{sec:NICE}
We further consider a message-passing network architecture for our cellular system, in which  base-station receivers  communicate locally in order  to exchange  decoded messages.
Any  receiver that has already decoded its own user's message can use the backhaul of the network and  pass it as side information to one or more of its neighbors. 
In turn, the neighboring base-stations can use the received decoded messages in order to  reconstruct the corresponding interfering  
signals and subtract them from their observation. It is important to note that this scheme only requires sharing (decoded) information messages between neighboring  receivers and does not require sharing the baseband signal samples, which is much more demanding for the backbone network.
%

The above operation effectively cancels  interference in one direction: 
all decoded messages propagate through the backhaul of the network, successively eliminating certain interfering 
links between neighboring base-stations according to a specified decoding order.  
Fig.~\ref{NICE} illustrates the above network interference cancellation process in our cellular graph model assuming a ``left-to-right, top-down'' 
decoding order.  Notice that edges are now {\it directed} in order to indicate the interference flow over the network. 
For example, if an undirected edge $(u,v)$ exists in $\Ec$ and, under this message-passing architecture, 
node $v$ decodes its message before node $u$ and passes it to node $u$ through the backhaul, 
then the resulting interference graph will contain the directed link $[u,v]$, indicating that the interference is from node   $u$ to
node  $v$ only.

\begin{figure}[ht]
                \centering
                \includegraphics[width=0.78\columnwidth]{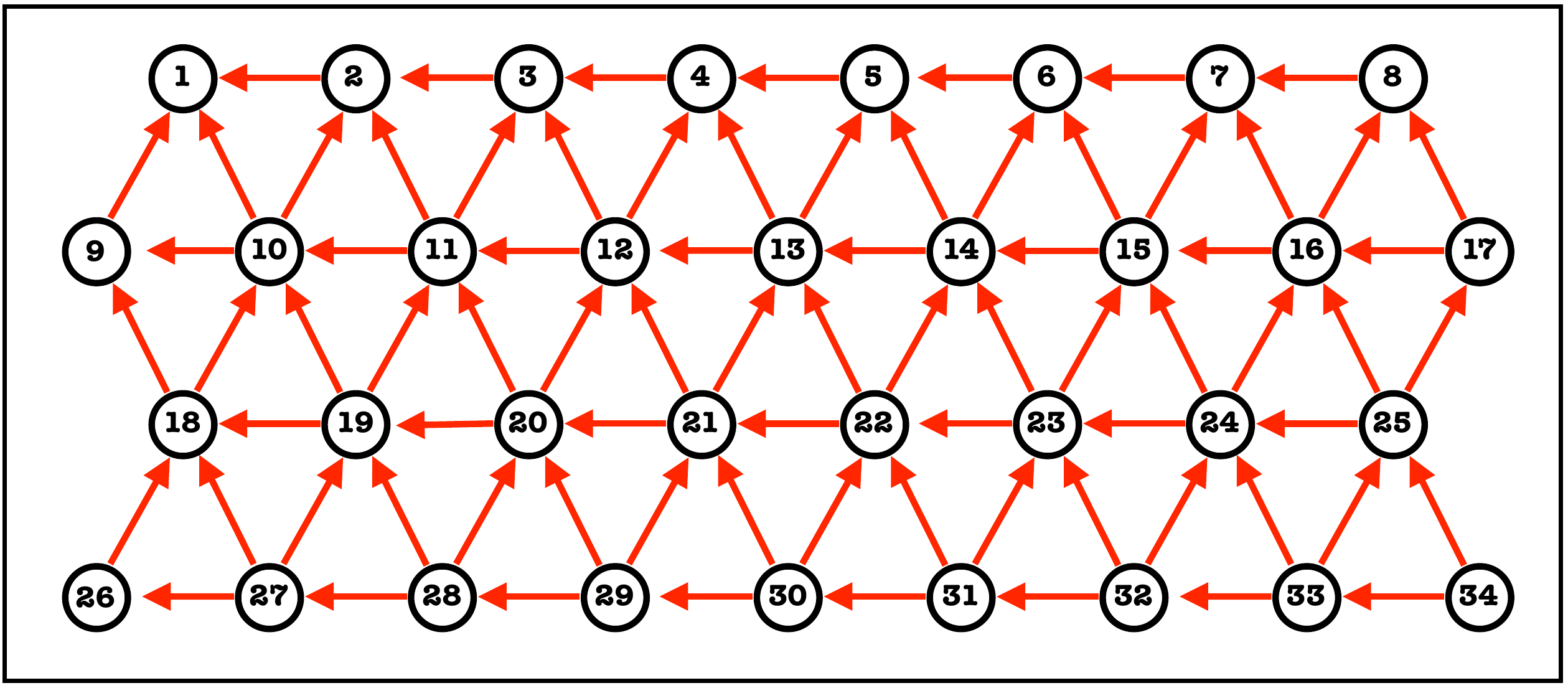}
                \caption{ The directed interference graph $\cal G_{\pi^{*}}(V,E_{\pi^{*}})$ after network interference cancellation according to the ``left-to-right, top-down'' decoding order $\pi^{*}$. The transmitter of a cell associated with node $i$ causes interference only to its neighboring base-station receivers $j$ with $j<i$.  \vspace{-0.1in}}
                \label{NICE}

\end{figure}

A decoding order $\pi$ can be specified by defining a 
partial order ``$\prec_{\pi}$'' over the  set of vertices $\cal V$ in our interference graph.  Then, the message of the user associated with vertex $v \in \cal V$ will be decoded before the one associated with vertex $u \in \cal V$ if $v \prec_{\pi}u$. In principle, we can choose any decoding order that partially orders the set $\cal V$ and hence $\pi$ can be treated as an optimization parameter in our model. 

\begin{defn}
[Directed Interference Graph ${\cal G}_{\pi}$] For a given partial order ``$\prec_{\pi}$'' on $\cal V$, the directed interference graph is defined as ${\cal G}_{\pi}({\cal V}, {\cal E}_{\pi})$ 
where 
${\cal E}_{\pi}$ is a set of ordered pairs $[u,v]$ given by 
$
{\cal E}_{\pi} = \left\{[u,v] : (u,v) \in {\cal E} \mbox{ and }  v\prec_{\pi}u   \right\}
$. \hfill $\lozenge$
\end{defn}

Next, we formally specify the ``left-to-right, top-down'' decoding order $\pi^{*}$ that has been chosen in Fig.~\ref{NICE} and will be used for the rest of this paper. As we will show in the following section,  this decoding order is indeed optimum for large cellular networks with $M=N=2$ and it can lead to the maximum possible DoF per user under our framework.

\begin{defn}[The Decoding Order $\pi^{*}$]
The ``left-to-right, top-down'' decoding order $\pi^{*}$ is defined by the partial ordering $\prec_{\pi^{*}}$ over $\cal V$ such that for any $u, v\in \cal V$, $v\prec_{\pi^{*}}u \Leftrightarrow$
\begin{equation*}
 \begin{cases} {\rm Im}\left(\phi(v)\right) > {\rm Im}\left(\phi(u)\right) \mbox{, or} \\ 
{\rm Im}\left(\phi(v)\right) = {\rm Im}\left(\phi(u)\right) \mbox{and}\;  {\rm Re}\left(\phi(v)\right) < {\rm Re}\left(\phi(u)\right)
\end{cases}
\end{equation*}
\hfill $\lozenge$

\end{defn}

\subsection{Problem Statement}\label{sec:probstate}

Our main goal is to design efficient communication schemes for the cellular model  introduced in this section.
As a first-order approximation of a scheme's efficiency, we will consider here the achievable DoFs, broadly defined as the number of point-to-point interference-free channels that can be created between transmit-receive pairs in the network.

More specifically, we are going to limit ourselves to linear beamforming strategies over multiple antennas assuming constant (frequency-flat) 
channel gains, independently chosen from a continuous non-degenerate distribution, without allowing symbol extensions. We refer to such schemes as ``one-shot'', indicating that precoding is achieved over a single
time-frequency slot (symbol-by-symbol).  Our goal it to maximize, over all decoding orders $\pi$, the average (per cell) achievable DoFs

\begin{equation}
d_{{\cal G},{\pi}} \triangleq \frac{1}{|{\cal V}|} \sum_{v\in \cal V}d_{v}\, ,
\end{equation}
where  $\cal G(V,E)$ is the  interference graph defined in Section \ref{sec:igraph} and 
$d_{v}$ denotes the DoFs achieved by the transmit-receive pair associated with the node 
$v\in\cal V$, where 
$$d_{v} = \lim_{P\rightarrow \infty}\frac{R_{v}(P)}{\log(P)},$$ 
and $R_{v}(P)$ is  the achievable rate in cell 
$v\in \cal V$ under the per-user transmit power constraint $P$.

\section{Cellular Interference Alignment: $M=2, N=2$}\label{sec:2x2}

\begin{thm}[Achievability] \label{thm1}
For a $2\times2$ cellular system $\cal G(V,E)$, there exist a 
one-shot linear beamforming scheme that achieves the average (per cell) DoFs,
$d_{{\cal G},{\pi^{*}}} = {3}/{4}$,
under the network interference cancellation framework with decoding order $\pi^{*}$.
\hfill \QED
\end{thm}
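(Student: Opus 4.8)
The plan is to realize the claimed $3/4$ by an explicit one-shot scheme on the cancelled graph ${\cal G}_{\pi^{*}}$, and to prove its feasibility by a dimension count that exploits the lattice structure. First I would record the post-cancellation interference pattern. Writing each label as $z=a+b\omega$ and applying the order $\prec_{\pi^{*}}$ to the six Eisenstein neighbours, one checks that the only interferers still seen by receiver $z$ are its three ``in-neighbours'' $z+1,\,z-\omega,\,z-\omega-1$, i.e. $(a+1,b),(a,b-1),(a-1,b-1)$ in the coordinates $(a,b)\leftrightarrow a+b\omega$; thus every interior vertex of ${\cal G}_{\pi^{*}}$ has in-degree and out-degree exactly three.

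Next I would silence a quarter of the cells. Let $T=\{a+b\omega:\ a,b\ \text{both even}\}$, a sublattice of density $1/4$. Since the neighbour offsets always have at least one odd coordinate, $T$ is an independent set, and a short parity check over the three classes $(\text{odd},\text{even}),(\text{even},\text{odd}),(\text{odd},\text{odd})$ shows that every active cell (those in ${\cal V}\setminus T$) has \emph{exactly one} of its three in-neighbours in $T$. We let each active cell transmit one stream along a beam $\vv_{u}\in\CC^{2}$ and decode one stream, while the cells in $T$ stay silent and decode nothing. After the silent transmitters are removed, each active receiver sees exactly two active interferers, so the allocation is $d_{z}=1$ on ${\cal V}\setminus T$ and $d_{z}=0$ on $T$, giving the interior average $\tfrac34\cdot1+\tfrac14\cdot0=\tfrac34$, which by the converse is the best possible.

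It remains to choose the beams so that at every active receiver $z$, with active interferers $u_{1},u_{2}$, the two received interference vectors are collinear, $\det[\,\Hm_{zu_{1}}\vv_{u_{1}}\mid\Hm_{zu_{2}}\vv_{u_{2}}\,]=0$, leaving one dimension free. This is one bilinear equation per active receiver, and the counts are perfectly balanced: the incidence between active receivers and their active interferers is $2$-regular (each active cell also causes interference to exactly two active receivers), so there are as many equations as beam variables. This balance is the crux and the main obstacle: the alignment-dependency graph has equally many edges and vertices and therefore contains cycles, so one cannot solve the system by simply propagating beams along a tree. I would resolve it by first choosing a system of distinct representatives in the $2$-regular bipartite incidence (which exists because a $2$-regular bipartite graph is a disjoint union of even cycles and hence has a perfect matching). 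The matching rewrites each collinearity constraint as ``$\vv$ of one interferer $=$ a $2\times2$ channel matrix times $\vv$ of the other'' and organizes the transmitters into a permutation whose cycles carry these linear maps. Around each cycle the composition is a single $2\times2$ matrix, and I simply take the corresponding beam to be one of its eigenvectors, which always exists over $\CC$; propagating around the cycle then satisfies every constraint on it.

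Finally I would dispatch the genericity and boundary issues. The alignment equations use only the cross-channels $\Hm_{zu}$ with $u\neq z$, so once the beams are fixed the direct channel $\Hm_{zz}$ is still free; hence $\Hm_{zz}\vv_{z}$ is, with probability one, a nonzero vector independent of the one-dimensional interference subspace at $z$, and the receiver recovers its stream with a one-dimensional zero-forcing filter --- so each active cell indeed attains $d_{z}=1$. Genericity of the channels likewise guarantees that all the transfer matrices are invertible and their eigenvectors nondegenerate, so no beam collapses and no accidental over-alignment occurs. The cells within $O(1)$ of $\partial{\cal B}_{r}$ form a vanishing fraction $O(r)/O(r^{2})$ and see only fewer interferers, so they cannot pull the average below $3/4$; letting $r\to\infty$ yields $d_{{\cal G},\pi^{*}}=3/4$.
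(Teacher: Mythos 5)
Your proposal is correct and, at its core, follows the same route as the paper: your silenced set $T$ (labels with both coordinates even) is exactly the paper's sublattice $\Lambda_{0}=2\cdot\ZZ(\omega)$, your post-cancellation in-neighbour pattern $\{z+1,\,z-\omega,\,z-\omega-1\}$ is the paper's directed graph ${\cal G}_{\pi^{*}}$, and your resolution of the alignment constraints --- compose the $2\times 2$ cross-channel maps around a closed chain of collinearity conditions and take an eigenvector --- is precisely the mechanism of (\ref{Iacond1})--(\ref{eigsol3}). The one step where you genuinely depart from the paper is how the constraint system is organized: the paper exhibits an explicit, geometric partition of ${\cal E}_{\pi^{*}}\setminus{\cal E}_{0}$ into disjoint isomorphic six-edge clusters ${\cal S}(z)$, $z\in\Lambda_{0}+\omega$, each inducing a $3$-cycle of beam constraints, whereas you argue abstractly that the active-to-active incidence is $2$-regular, so the constraint graph on transmit beams is a disjoint union of cycles, each solvable by an eigenvector regardless of its length (the matching/permutation step is just a way of orienting these cycles and could be skipped). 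Your version is more robust --- it never needs to know the cycles are triangles, so it would survive other sublattice choices --- while the paper's version buys explicitness and locality: the clusters are isomorphic and of bounded size, so the eigenvector solution is applied in parallel to identical local subproblems. Two points need tightening. First, your boundary argument ("edge cells see fewer interferers, so they cannot pull the average below $3/4$") addresses the wrong quantity: achievability at edge cells is indeed easy (their constraint components are paths, solvable by propagation from a free end), but the average is governed by the silenced fraction $|T\cap{\cal B}_{r}|/|{\cal V}|$, which for finite $r$ can exceed $1/4$; the paper handles this with a separate counting lemma showing $|{\cal V}_{0}|\leq|{\cal V}|/4$ for every $r$ (shifting the sublattice to $2\cdot\ZZ(\omega)+\omega$ when $r$ is even), whereas your argument only yields the statement asymptotically as $r\to\infty$. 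Second, you should note explicitly that the composed cycle maps are generically invertible and that multi-edges (two receivers constraining the same transmitter pair) would still be handled by the eigenvector argument; neither issue is fatal, but both belong in a complete write-up.
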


\begin{thm}[Converse]  \label{thm2}
For a $2\times2$ cellular system $\cal G(V,E)$  the average (per cell) DoFs $d_{{\cal G},{\pi}}$  that can be achieved by any one-shot 
linear beamforming scheme, for any network interference 
cancellation decoding order $\pi$,  are bounded by 
$\textstyle
 d_{{\cal G},{\pi}} \leq 3/4 + {\cal O}\left( \scriptstyle{1}/{{\sqrt{|{\cal V}|}}}\right)
$.
\hfill \QED
\end{thm}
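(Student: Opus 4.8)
The plan is to derive the converse from the necessary DoF-feasibility inequalities for one-shot linear beamforming with generic channels (the properness / dimension-counting conditions of \cite{Razaviyayn},\cite{Bresler}), applied not to the undirected graph ${\cal G}$ but to the directed interference graph ${\cal G}_{\pi}$ left after network interference cancellation. For a fixed order $\pi$, feasibility of a DoF assignment $\{d_v\}_{v\in{\cal V}}$ with $0\le d_v\le 2$ requires that, over every subset of nodes, the number of zero-forcing/alignment equations induced by the \emph{surviving} directed edges not exceed the number of free beamformer parameters: each directed edge $[u,v]$ contributes on the order of $d_u d_v$ equations, while a node supplies $d_v(M-d_v)$ free transmit and $d_v(N-d_v)$ free receive parameters with $M=N=2$. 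Because these constraints are bilinear in the $d_v$ and must hold simultaneously over all subsets \emph{and} all orders $\pi$, maximizing $\tfrac{1}{|{\cal V}|}\sum_{v} d_v$ directly is an intractable discrete nonlinear program whose dimension grows with $|{\cal V}|$. The aim is to replace it by a finite LP of fixed size.

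First I would exploit locality. Since ${\cal G}$ has bounded degree (six) and its edges are precisely the triangle edges $\Delta(z)$ of the Eisenstein lattice, every potentially binding feasibility inequality is supported on a bounded neighborhood; moreover the orientation of each edge under $\pi^{*}$ depends only on the relative position of its endpoints. I would isolate a finite list of local necessary inequalities, one family per local interference pattern, each a valid \emph{linear} constraint on $\{d_v\}$ obtained by specializing the bilinear conditions to the admissible integer values $d_v\in\{0,1,2\}$ (or by relaxing to $d_v\in[0,2]$ and retaining the supporting hyperplanes). Translation invariance of both the Eisenstein lattice and the order $\pi^{*}$ then guarantees that the entire constraint system is generated by finitely many patterns together with their lattice translates.

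Next I would aggregate by LP duality. To bound $\sum_{v} d_v$ from above I would construct nonnegative multipliers for these local inequalities so that their weighted sum telescopes over the interior of ${\cal B}_{r}$ and yields $\sum_{v} d_v \le \tfrac34\,|{\cal V}| + (\text{boundary surplus})$. Equivalently, by translation invariance the problem collapses to a per-cell (fundamental-domain) LP with $O(1)$ variables and constraints, independent of $r$, whose optimum is $3/4$, matching Theorem~\ref{thm1}. Optimization over decoding orders $\pi$ is handled by verifying that the per-cell LP value cannot exceed $3/4$ for any admissible edge orientation (i.e.\ that $\pi^{*}$ is a worst case), so the bound holds for every $\pi$. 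The boundary surplus is controlled geometrically: the aggregation is exact only at sites whose full six-neighbor pattern lies inside ${\cal B}_{r}$, whereas sites on $\partial{\cal B}_{r}$ lack incident edges and satisfy weaker constraints. Since ${\cal B}_{r}$ is two-dimensional, $|{\cal V}|=\Theta(r^{2})$ while the number of boundary sites is $\Theta(r)=\Theta(\sqrt{|{\cal V}|})$; dividing the surplus by $|{\cal V}|$ yields the stated additive ${\cal O}(1/\sqrt{|{\cal V}|})$ correction.

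The hard part will be the second and third steps together: turning the bilinear, subset-indexed feasibility conditions into exactly the right finite family of linear inequalities, and exhibiting the dual certificate (the multipliers) that aggregates to precisely $3/4$ rather than some weaker constant. This requires identifying which local alignment constraints are simultaneously binding at the achievable pattern of Theorem~\ref{thm1}, and checking that no reorientation of edges by an alternative order $\pi$ can relax them enough to beat $3/4$ \textemdash{} that is, proving the finite per-cell LP is genuinely tight and order-robust.
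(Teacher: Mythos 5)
Your overall skeleton coincides with the paper's: start from the Razaviyayn--Bresler feasibility conditions applied to the genie-aided directed graph ${\cal G}_\pi$, collapse the resulting optimization to a finite LP using the lattice structure, certify the value $3/4$ by duality, and absorb boundary effects of size $\Theta(\sqrt{|{\cal V}|})$. However, there is a genuine gap at the localization step, and it is not merely a deferred computation. Your premise that ``every potentially binding feasibility inequality is supported on a bounded neighborhood'' is false: the binding constraint is the \emph{single global} counting inequality $2\sum_{v\in{\cal V}}(M-d_v)d_v \ge \sum_{(u,v)\in{\cal E}_\pi} d_u d_v$. Valid inequalities supported on bounded patterns cannot certify $3/4$. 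Concretely, the all-ones assignment ($d_v=1$ everywhere) satisfies every per-triangle and even every radius-one (hexagon) properness condition ($3\le 6$ and $12\le 14$ respectively for $M=2$), yet it is globally infeasible ($3|{\cal V}|\not\le 2|{\cal V}|$). Worse, for any fixed window radius $\rho$, take the optimal $3/4$-pattern and turn on extra users at defect sites spaced $2\rho$ apart: each radius-$\rho$ window sees at most one defect, whose $O(1)$ local violation is absorbed by the window's inherent $\Theta(\rho)$ perimeter slack, so this density-$\bigl(3/4+\Omega(1/\rho^2)\bigr)$ configuration passes every radius-$\rho$ valid inequality. Hence any nonnegative combination of fixed-support valid inequalities — which is exactly what your ``telescoping multipliers'' and ``per-cell LP with $O(1)$ constraints'' produce — is stuck at $3/4$ plus a \emph{constant} gap, and cannot yield the theorem's $3/4+{\cal O}(1/\sqrt{|{\cal V}|})$.

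The paper's resolution is different in a way worth internalizing: it does not localize the \emph{inequality}, it localizes its \emph{algebraic expression}. Because the upward triangles $[z,z+\omega,z+\omega+1]$ partition the edge set exactly and each interior vertex lies in exactly three of them, the single global inequality can be rewritten, losslessly up to boundary terms, as $\sum_{[a,b,c]\in{\cal T}} g(d_a,d_b,d_c)\le \frac{3M^2}{2}|{\cal V}_{\rm ex}|$, where the per-triangle terms $g$ are \emph{not} individually valid constraints (for $M=2$, $g(0,1,1)=-1<0$ while $g(1,1,1)=3>0$); only their sum is controlled, with positive terms compensated by negative ones. This sign-indefiniteness is precisely what a certificate built from nonnegative multiples of valid local inequalities can never reproduce. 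The finite LP then lives over the frequencies of the five triangle configurations $\{[0,0,0],[0,0,1],[0,0,2],[0,1,1],[1,1,1]\}$, and the Lagrange multiplier $\lambda^*=1/4$ gives $\max_i\{s_i-\lambda^* g_i\}/3 = 3/4$ exactly. A final, minor point: your planned verification that no decoding order $\pi$ can beat $\pi^*$ is unnecessary — under any order, each undirected edge appears exactly once as a directed edge, so both the pairwise constraints $d_u+d_v\le M$ and the edge sum $\sum d_u d_v$ are orientation-independent, and the whole bound is automatically uniform over $\pi$.
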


\begin{remark}\rm
The above results can be directly translated to achievability and converse theorems for sectored cellular systems with intra-cell interference. 
In \cite{nmc14} we studied a wireless network scenario under a similar framework in which co-located sectors (i.e., the  sectors of the same cell) are able to jointly process their received observations and showed that for $M\times M$ links the optimal $M/2$ DoFs (per user) are achievable. An interesting observation is that the cellular model considered in this paper 
leads to an interference graph that is similar (isomorphic) to the one considered in the above case. 
The fundamental difference here is that we do not allow any receivers to jointly process their signals. 
Therefore, it is possible to restate the results of Theorems \ref{thm1} and \ref{thm2} for the sectored case and assess the gain of joint cell processing in such systems. When $M=2$, $N=2$, we can see that jointly processing the received signals within each cell yields 33\% gain in terms of the average achievable  DoFs, compared to single-user decoding. \hfill \QED
\end{remark}

\subsection{Achievability for $M=2$, $N=2$. (Proof of Theorem \ref{thm1})}

Consider the  interference graph ${\cal G}({\cal V}, {\cal E})$ introduced in Section \ref{sec:igraph} and assume that all  user terminals $v\in \cal V$ whose labels $\phi(v)$ belong to the sub-lattice  
\begin{equation}
\Lambda_{0}\triangleq 2\cdot\ZZ(\omega) \label{L0}
\end{equation}  
are turned off, while the remaining user terminals with $\phi(v)\in \ZZ(\omega)\setminus \Lambda_{0}$ are 
all simultaneously transmitting their signals  $\xv_{v}$ to their corresponding receivers. 
Let ${\cal V}_{0}\triangleq \{v\in {\cal V}: \phi(v)\in \Lambda_{0}\}$ denote the set of inactive vertices   and let ${\cal E}_{0}\triangleq \{(u,v)\in {\cal E}: \phi(u) \mbox{ or } \phi(v) \in \Lambda_{0}\}$ be the set of edges that are adjacent 
to ${\cal V}_{0}$.


Recall that under our framework, each base-station receiver that is able to decode its own message, is also able to pass it as side information to its neighbors, effectively eliminating interference in that direction. 
Hence, following  the ``left-to-right, top-down'' decoding order $\pi^{*}$ introduced in Section \ref{sec:NICE}, the  receivers associated with the active nodes $u\in {\cal V}\setminus {\cal V}_{0}$ are able to  eliminate  interference from  all neighboring cells $v\prec_{\pi^{*}}u$ 
and attempt to decode their own message from the two-dimensional received signal observation $\yv_{u}$ given by
\begin{equation}
\yv_{u} = \Hm_{uu}\xv_{u} + \sum_{v : [v,u]\in {\cal E}_{\pi^{*}}\setminus {\cal E}_{0}} \Hm_{uv}\xv_{v} +\zv_{u}.
\end{equation}


Our goal is to design the transmitted signals $\xv_{v}$ such that  all interference observed in  
$\yv_{u}$ is aligned in one dimension for all $u\in {\cal V}\setminus {\cal V}_{0}$. In that way, we can show that the DoFs 
$$d_{v} = \begin{cases} 1, \;\;v \in {\cal V}\setminus {\cal V}_{0}\\ 0, \;\;v\in {\cal V}_{0}\end{cases}$$
are achievable in $\cal G(V,E)$ and hence 
\begin{equation*}
d_{{\cal G},{\pi}^{*}} = \frac{1}{|{\cal V}|}\sum_{v\in \cal V}d_{v} = \frac{|{\cal V}\setminus {\cal V}_{0}|}{|{\cal V}|}=1-\frac{|{\cal V}_{0}|}{|{\cal V}|}.
\end{equation*}
Later, we will see that $\Lambda_{0}$ has been chosen so that $|{\cal V}_{0}|\leq \frac{1}{4}|{\cal V}|$, and hence obtain that the average (per cell) DoFs $d_{{\cal G},{\pi}^{*}} ={3}/{4}$ are indeed achievable under our framework.

\begin{figure}[h]
                \centering
                \includegraphics[width=0.55\columnwidth]{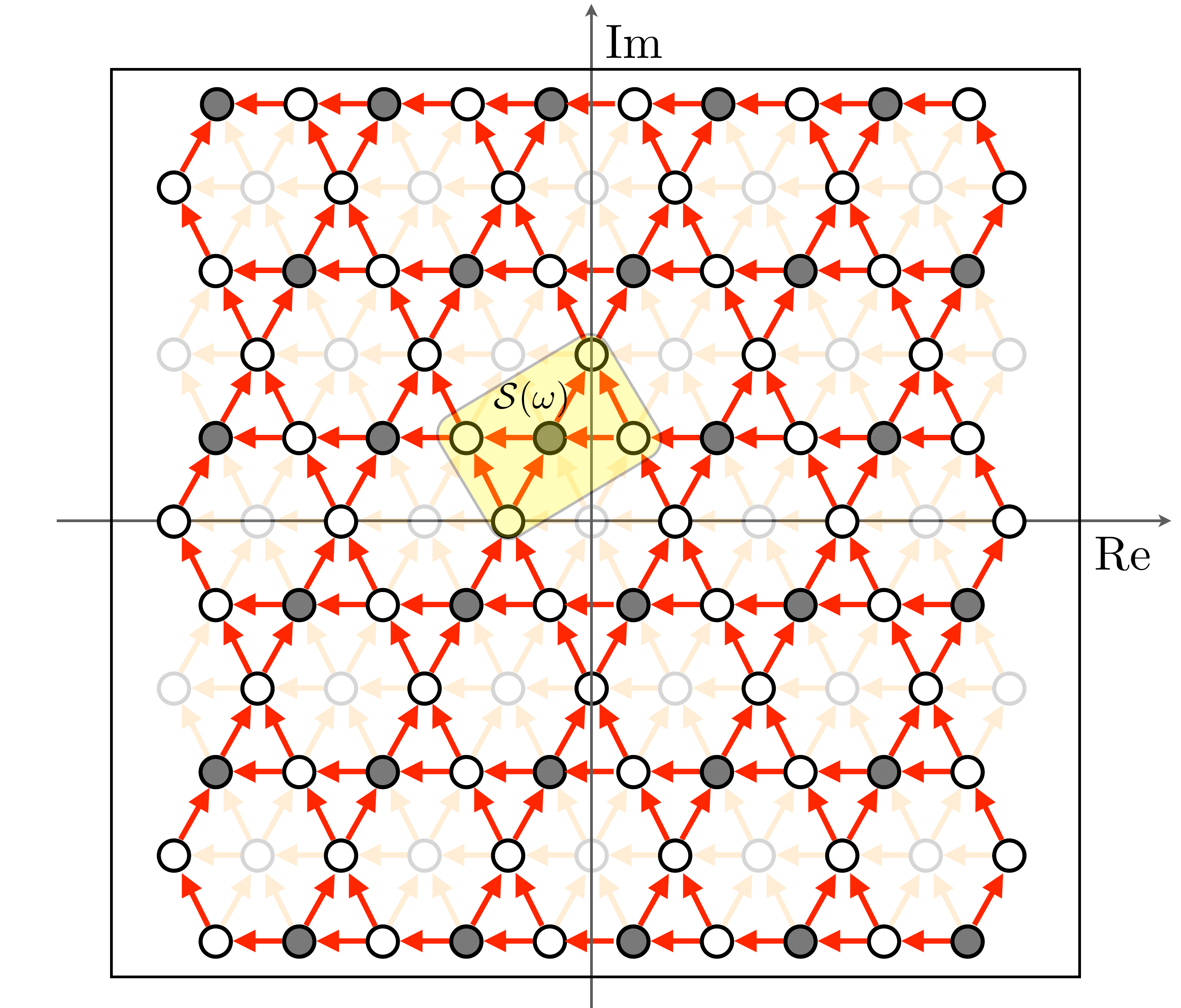}
                \caption{ The  interference sub-graph 
                ${\hat{\cal G}}_{\pi^{*}}({\cal V} \setminus {\cal V}_{0},{\cal E}_{\pi^{*}}\setminus {\cal E}_{0})$   represented on the   complex plane. The transparent vertices correspond to  inactive cells $v\in {\cal V}_{0}$ with labels $\phi(v) \in \Lambda_{0}$ and the transparent edges correspond to their adjacent edges ${\cal E}_{0}$. 
                The  edges in the cluster  ${\cal S}(\omega)$ are highlighted in a tilted rectangle that is centered at the point $z=\omega$. Notice that all the interfering edges in the above graph can be partitioned into smaller (isomorphic) sets  by translating  ${\cal S}(z)$ over all $z\in \Lambda_{0}+\omega$ (black vertices).
                }
                \label{butterfly}
\end{figure}

The  setting described above is illustrated in Fig.~\ref{butterfly}, where transparent vertices correspond to  inactive cells $v\in {\cal V}_{0}$ and transparent edges correspond to ${\cal E}_{0}$. The resulting interference graph, denoted  here as ${\hat{\cal G}_{\pi^{*}}}\left({\cal V} \setminus {\cal V}_{0},{\cal E}_{\pi^{*}}\setminus {\cal E}_{0}\right)$, is a sub-graph of ${\cal G}_{\pi^{*}}({\cal V},{\cal E}_{\pi^{*}})$ and represents the corresponding interference between  active transmit-receive pairs in the network.

Let $\uv_{u}$ and $\vv_{u}$ denote the 2-dimensional receive and transmit beamforming vectors associated with the active nodes $u\in  {\cal V} \setminus {\cal V}_{0}$ and assume that the active user terminals in the network have encoded their messages in the corresponding codewords. 
Although codewords span many slots (in time), we focus here on a single slot and denote the corresponding coded symbol of user $u$ by 
$s_{u}$. Then, the vector transmitted by  user $u$ 
is given by $\xv_{u}=\vv_{u}s_{u}$ and each receiver can project its observation $\yv_{u}$ along $\uv_{u}$ to obtain

$$\hat y_{u} = \uv_{u}^{\rm H}\Hm_{uu}\vv_{u}s_{u} 
+ \sum_{v : [v,u]\in {\cal E}_{\pi^{*}}\setminus {\cal E}_{0}} \uv_{u}^{\rm H}\Hm_{uv}\vv_{v}s_{v} + \hat z_{u}.
$$

We will show next that it is possible to design $\uv_{u}$ and $\vv_{u}$ across the entire network  such that the following interference alignment conditions are satisfied:
\begin{eqnarray}
&\uv_{u}^{\rm H}\Hm_{uu}\vv_{u} \neq 0,&\; \forall u\in {\cal V}\setminus {\cal V}_{0}  \;\;\; \mbox{and}
\label{iacond10}\\
&\uv_{u}^{\rm H}\Hm_{uv}\vv_{v} =0,& \; \forall [v,u]\in {\cal E}_{\pi^{*}}\setminus {\cal E}_{0}.\label{iacond11}
\end{eqnarray}
Hence, every active receiver in the network can decode its own desired symbol $s_{u}$ from an interference-free channel observation of the form \begin{equation} \hat y_{u} = \hat h_{u} s_{u} + \hat z_{u}\label{form}\end{equation}
where $\hat h_{u} = \uv_{u}^{\rm H}\Hm_{uu}\vv_{u}$ and $\hat z_{u} = \uv_{u}^{\rm H}\zv_{u}$.

In order to describe the alignment precoding scheme, we will partition the interfering  edges in ${\hat{\cal G}}_{\pi^{*}}({\cal V} \setminus {\cal V}_{0},{\cal E}_{\pi^{*}}\setminus {\cal E}_{0})$ into smaller sets that we will refer to as the interference clusters, given by
\begin{align}
{\cal S}(z)\triangleq \big\{&[z,z-1],[z,z+1+\omega],\\&[z+1,z],[z+1,z+1+\omega], \\&[z-1-\omega, z-1], [z-1-\omega, z] \big\}, 
\end{align}
for  $z\in \Lambda_{0} + \omega$. To illustrate the above definition, in Fig.~\ref{butterfly}, the interference cluster ${\cal S}(\omega)$ is highlighted in a tilted rectangle that is centered at $z=\omega$ and the points $z\in \Lambda_{0} + \omega$ are shown in black color.
It is not hard to verify that the set of edges can be written as
\begin{align}
{\cal E}_{\pi^{*}}\setminus {\cal E}_{0} &= \bigcup_{z\in \Lambda_{0} + \omega}\{[u,v]: [\phi(u),\phi(v)]\in {\cal S}(z), u,v\in {\cal V}\} \\
&=\bigcup_{\{v\in {\cal V}: \phi(v)\in \Lambda_{0} + \omega\}}{\cal S}(v)\label{eqS}
\end{align}
where  ${\cal S}(v) \triangleq \left\{[u,w]: [\phi(u),\phi(w)]\in {\cal S}\big(\phi^{-1}(v)\big), u,w\in {\cal V}\right\}$ and  ${\cal S}(v)\bigcap{\cal S}(v')=\emptyset$, for all $v\neq v'$ with  $\phi(v),\phi(v')\in \Lambda_{0} + \omega$. Hence, as we can also see in Fig.~\ref{butterfly}, the set  ${\cal E}_{\pi^{*}}\setminus {\cal E}_{0} $ can be partitioned into smaller interference clusters  by   translating the  highlighted tilted rectangle over all black vertices. 
Apart from a few exceptions (due to the finite boundary of the network), the resulting  interference clusters
${\cal S}(v)$ will be isomorphic and exhibit the same structure as the one 
shown in Fig.~\ref{iasol}. 

In the following, we will describe the interference alignment solution that lies at the core of our achievability theorem, namely the precoding scheme that can achieve $d_{v}=1$ for all cells $v$ in a given  cluster. Then, we will see that the above solution can be readily  extended to the entire network through (\ref{eqS})  and therefore show that the required interference alignment conditions (\ref{iacond10}) and (\ref{iacond11}) are satisfied in ${\hat{\cal G}}_{\pi^{*}}({\cal V} \setminus {\cal V}_{0},{\cal E}_{\pi^{*}}\setminus {\cal E}_{0})$.

\begin{figure}[h]
                \centering
                \includegraphics[width=0.85\columnwidth]{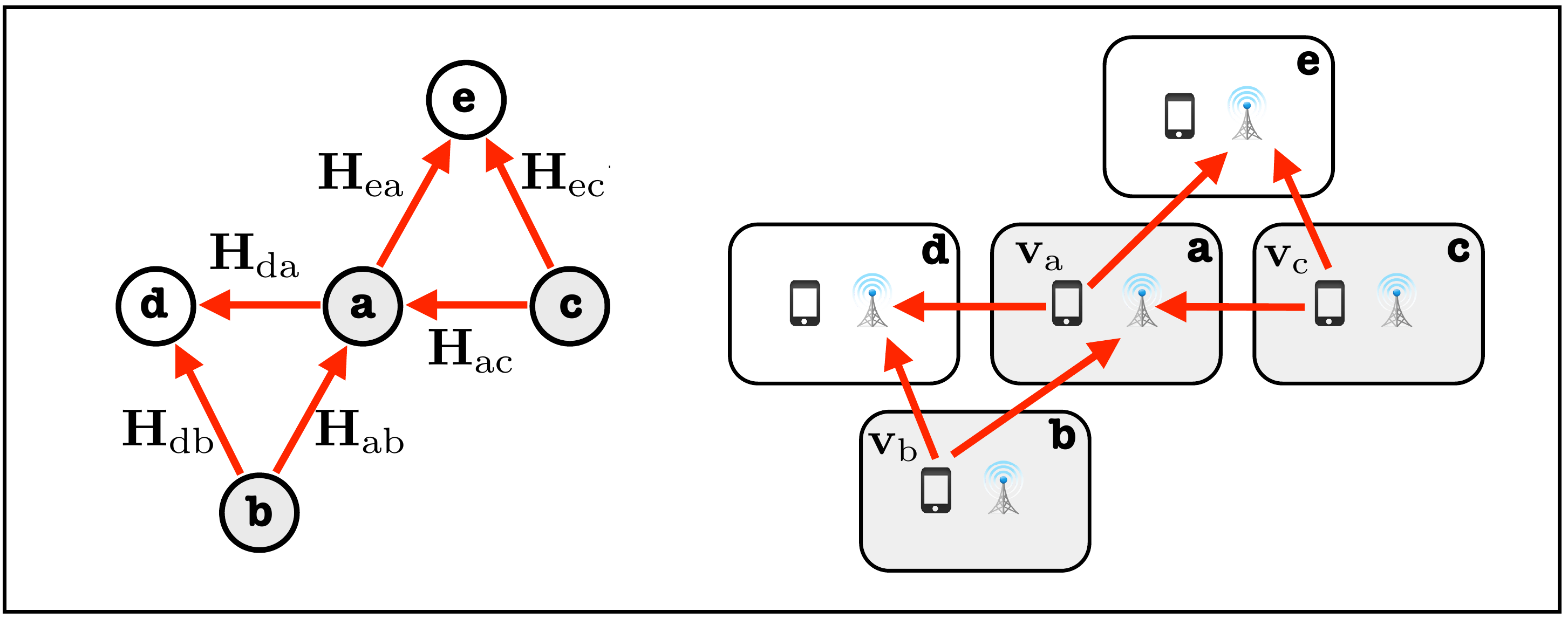}
                \caption{ The interference edges in the cluster ${\cal S}(a)$. The transmit beamforming vectors $\vv_{a}$, $\vv_{b}$ and $\vv_{c}$ (gray nodes) 
                are uniquely associated with the edges in ${\cal S}(a)$ and have to be designed such that interference is aligned at receivers $a$, $d$ and $e$.\vspace{-0in}}
                \label{iasol}

\end{figure}

In the above cluster, the goal is to design the $2$-dimensional beamforming vectors ${\bf v}_{ a}$, ${\bf v}_{ b}$, and ${\bf v}_{ c}$ such that all interference occupies a single dimension in every receiver. We will hence  require that 
$\mbox{span}(\Hm_{ab}\vv_{b}) = \mbox{span}({\bf H}_{ ac}{\bf v}_{ c})$ for receiver $a$,
$\mbox{span}(\Hm_{db}\vv_{b}) = \mbox{span}({\bf H}_{ da}{\bf v}_{ a})$ for receiver $d$, and
$\mbox{span}(\Hm_{ea}\vv_{a}) = \mbox{span}({\bf H}_{ ec}{\bf v}_{ c})$ for receiver $e$.
These  alignment conditions can be written as,
\begin{eqnarray}
{\bf v}_{ a} &\doteq& {\bf H}_{ da}^{-1}{\bf H}_{ db}{\bf v}_{ b}\label{Iacond1}\\
{\bf v}_{ b} &\doteq& {\bf H}_{ ab}^{-1}{\bf H}_{ ac}{\bf v}_{ c}\label{Iacond2} \\
{\bf v}_{ c} &\doteq& {\bf H}_{ ec}^{-1}{\bf H}_{ ea}{\bf v}_{ a},\label{Iacond3}
\end{eqnarray}
where ${\bf v}\doteq {\bf u}$ is a shorthand notation for ${\bf v} \in \mbox{span}({\bf u})$, and are satisfied as long as 
\begin{equation}
{\bf v}_{ a} \doteq {\bf H}_{ da}^{-1}{\bf H}_{ db}{\bf H}_{ ab}^{-1}{\bf H}_{ ac}{\bf H}_{ ec}^{-1}{\bf H}_{ ea}{\bf v}_{ a}.\label{eigsol}
\end{equation}
Therefore, if we choose ${\bf v}_{ a}$ to be an {\it eigenvector} of the above matrix and set 
\begin{eqnarray}
{\bf v}_{ b} &\doteq& {\bf H}_{ ab}^{-1}{\bf H}_{ ac}{\bf H}_{ ec}^{-1}{\bf H}_{ ea}{\bf v}_{ a} \mbox{  and}\\
{\bf v}_{ c} &\doteq& {\bf H}_{ ec}^{-1}{\bf H}_{ ea}{\bf v}_{ a},\label{eigsol3}
\end{eqnarray}
all interference observed at the receivers $a$, $d$ and $e$ will be  aligned in one dimension, and can hence be zero-forced by the corresponding receiver projections  $\uv_{a}$, $\uv_{d}$ and $\uv_{e}$. Assuming that the channel matrices are drawn from a continuous non-degenerate distribution, the projected useful signal coefficients  will be non-zero with probability one and therefore, the desired messages $s_{a}$, $s_{d}$ and $s_{e}$ can  be successfully decoded from the interference-free observations $\uv_{a}^{\rm H}\yv_{a}$, $\uv_{d}^{\rm H}\yv_{d}$ and $\uv_{e}^{\rm H}\yv_{e}$.

It is important to note that the transmit beamforming choices ${\bf v}_{ a}$, ${\bf v}_{ b}$, and ${\bf v}_{ c}$ as well as the receiver projections ${\bf u}_{ a}$, ${\bf u}_{ d}$, and ${\bf u}_{ e}$ are {\it uniquely associated} with the interference edges in the above cluster  and do not participate in  any other interference alignment conditions in the network. 
Therefore,  the  eigenvector solution  (\ref{eigsol})-(\ref{eigsol3}) can be applied locally for all interference clusters ${\cal S}(v)$ in the network in order to choose the appropriate beamforming vectors  $\vv_{v}$ and $\uv_{v}$ for all active cells $v\in {\cal V}\setminus {\cal V}_{0}$. Since the interference alignment conditions (\ref{iacond10}) and (\ref{iacond11}) can be satisfied across the entire network, the DoFs $$d_{v} = \begin{cases} 1, \;\;v \in {\cal V}\setminus {\cal V}_{0}\\ 0, \;\;v\in {\cal V}_{0}\end{cases}$$
are achievable in $\cal G(V,E)$ and hence 
\begin{equation}
d_{{\cal G},{\pi}^{*}} = \frac{1}{|{\cal V}|}\sum_{v\in \cal V}d_{v} = \frac{|{\cal V}\setminus {\cal V}_{0}|}{|{\cal V}|}=1-\frac{|{\cal V}_{0}|}{|{\cal V}|}.\label{eqDOFs}
\end{equation}

\begin{lemma}
The number of inactive cells  $|{\cal V}_{0}|$ can always be chosen in $\cal G(V,E)$ to satisfy $|{\cal V}_{0}|\leq|{\cal V}|/4$.
\end{lemma}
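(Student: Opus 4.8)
The plan is to turn the claim into a counting statement about the cosets of the sublattice $\Lambda_0 = 2\ZZ(\omega)$ and then use the translation symmetry of the interference graph to select the most favorable coset. First I would note that $\Lambda_0$ has index four in $\ZZ(\omega)$: writing a label as $z = a + b\omega$, its coset modulo $\Lambda_0$ is determined by the parities of $(a,b)$, so a complete set of representatives is $\{0,\,1,\,\omega,\,1+\omega\}$. The four cosets $\Lambda_0 + c$ partition $\ZZ(\omega)$, and intersecting with the window $\mathcal{B}_r$ they partition the label set $\phi(\mathcal{V}) = \ZZ(\omega)\cap\mathcal{B}_r$ into four disjoint classes. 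Since the four class sizes sum to $|\mathcal{V}|$, their average is $|\mathcal{V}|/4$, so the smallest of them, say the one indexed by $c^{*}$, contains at most $|\mathcal{V}|/4$ labels.

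The second step is to argue that this smallest coset may legitimately play the role of the inactive set, even though the achievability argument was written for the specific sublattice $\Lambda_0$. Here I would invoke translation invariance of the edge relation: because $\mathcal{D} = \bigcup_{z\in\ZZ(\omega)}\Delta(z)$ and $\Delta(z+t) = \Delta(z)+t$ for every $t\in\ZZ(\omega)$, the set $\mathcal{D}$ is fixed by translation by any Eisenstein integer. Consequently, declaring the cells with labels in $\Lambda_0 + c^{*}$ inactive yields an interference graph isomorphic, under the relabeling $z\mapsto z - c^{*}$, to the one obtained by switching off $\Lambda_0$ itself; the cluster decomposition (\ref{eqS}) and the eigenvector alignment (\ref{eigsol})--(\ref{eigsol3}) then transfer verbatim, now with clusters centered on $\Lambda_0 + c^{*} + \omega$. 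Taking $\mathcal{V}_0$ to be the preimage of this coset gives $|\mathcal{V}_0|\le|\mathcal{V}|/4$ while preserving a valid one-shot scheme. I would stress that this freedom is genuinely needed: for certain windows the count $|\phi^{-1}(\Lambda_0)|$ already exceeds $|\mathcal{V}|/4$, so one cannot bound the central coset directly.

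The only remaining concern is the finite boundary of the network, where some clusters $\mathcal{S}(v)$ are truncated. This is harmless for the bound, since a truncated cluster imposes only a subset of the alignment constraints (\ref{iacond10})--(\ref{iacond11}); the local eigenvector solution therefore still zero-forces all interference on the active cells and no active cell is lost. I expect the counting itself to be immediate from the index-four pigeonhole, and the step that genuinely carries the content to be the translation-invariance argument, namely making precise that switching the inactive set to another coset produces an isomorphic directed graph under $\pi^{*}$ and that the partition (\ref{eqS}) carries over.
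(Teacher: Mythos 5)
Your proof is correct, but it reaches the bound by a genuinely different route than the paper. The paper fixes a concrete coset and counts exactly: for odd $r$ it partitions $\mathbb{Z}(\omega)\cap\mathcal{B}_r$ into $2r+1$ horizontal lines, computes $|\mathcal{V}| = 4r^2+3r$ and $|\Lambda_0\cap\mathcal{B}_r| = r^2+(r-1)/2$, and checks the ratio is at most $1/4$; for even $r$ it switches to the shifted coset $2\mathbb{Z}(\omega)+\omega$ and omits the analogous count ``for brevity.'' You instead avoid all explicit counting via the index-four pigeonhole: the four cosets of $\Lambda_0$ partition $\phi(\mathcal{V})$, so the smallest one has at most $|\mathcal{V}|/4$ labels, and you then justify that any coset may serve as the inactive set by translation invariance of $\mathcal{D}$ (and, implicitly, of the order $\prec_{\pi^*}$, which is indeed preserved since translation preserves differences of real and imaginary parts). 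What your approach buys is generality and economy: it works for any window shape and any finite-index sublattice, and it makes explicit the coset-shifting freedom that the paper uses silently when it changes $\Lambda_0$ for even $r$ --- your observation that the central coset itself can exceed $|\mathcal{V}|/4$ (e.g., when $r\equiv 0 \pmod 4$, where $|\Lambda_0\cap\mathcal{B}_r| = r^2+3r/2+1 > |\mathcal{V}|/4$) confirms this freedom is necessary, not cosmetic. What the paper's computation buys is exact cardinalities for a concrete choice, in the same spirit as the boundary counts it later needs in the converse (Appendix D). One small caveat: your isomorphism $z\mapsto z-c^*$ maps the window $\mathcal{B}_r$ to a translate $\mathcal{B}_r - c^*$ rather than to itself, so the relabeled graph is not literally the paper's $\mathcal{G}$; this is harmless precisely because, as you note, the alignment argument is local to clusters and truncated boundary clusters only remove constraints, but that sentence is doing real work and deserves to be stated as such.
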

\begin{proof}
Recall that the total number of cells is  $|{\cal V}| = \left|\ZZ(\omega)\cap{\cal B}_{r}\right|$ 
and the number of  inactive cells is $|{\cal V}_{0}| = \left|\Lambda_{0}\cap{\cal B}_{r}\right|$,
where
${\cal B}_{r} \triangleq \{z\in \mathbb{C} :|{\rm Re}(z)|\leq r , |{\rm Im}(z)|\leq \frac{\sqrt{3}r}{2}\}$
is defined in Section \ref{sec:igraph} and $\Lambda_{0}\triangleq 2\cdot\ZZ(\omega)$ is chosen in (\ref{L0}). Let $R({r})$
be the ratio $\frac{|{\cal V}_{0}|}{|{\cal V}|}= \frac{\left|\Lambda_{0}\cap{\cal B}_{r}\right|}{\left|\mathbb{Z}(\omega)\cap{\cal B}_{r}\right|}$ as a function of the network size parameter $r$. For  large cellular networks  we can already see that $$\lim_{r\rightarrow\infty}R({r})= \frac{Vol\left(\mathbb{Z}(\omega)\right)}{Vol\left(2\cdot\mathbb{Z}(\omega)\right)}=\frac{1}{4}.$$ 
We want to show however that $R({r})\leq \frac{1}{4}$, for all  $r\geq 1$. Notice that the points in $\ZZ(\omega)\cap{\cal B}_{r}$ can be partitioned into $2r+1$ horizontal lines. When $r$ is odd, the set $\ZZ(\omega)\cap{\cal B}_{r}$ has $r+1$  lines with $2r$ points and $r$  lines with $2r+1$ points. Therefore the total number of points can be calculated as $|\ZZ(\omega)\cap{\cal B}_{r}|= (r+1)2r+r(2r+1)=4r^{2}+3r$. Now, all the points that correspond to the $r+1$ lines contain odd multiples of $\omega$ and hence cannot be in $\Lambda_{0}$. From the remaining $r$ lines, $\frac{r+1}{2}$ lines have $r$ points in $\Lambda_{0}$ and $\frac{r-1}{2}$ lines have $r+1$ points in $\Lambda_{0}$. The total number of points in $\Lambda_{0}\cap{\cal B}_{r}$ is therefore given by $|\Lambda_{0}\cap{\cal B}_{r}| =\frac{r+1}{2}r+\frac{r-1}{2}(r+1)=r^{2} +(r-1)/2$, and the corresponding ratio can be calculated~as 
$$R(r) = \frac{r^{2} +(r-1)/2}{4r^{2}+3r}\leq\frac{r+1/2}{4r+2}=\frac{1}{4}.$$
For the case where $r$ is even the same result can be obtained if we choose $\Lambda_{0}=2\cdot\ZZ(\omega)+\omega$. We omit the details here for brevity.  
\end{proof}

From the above lemma and the result obtained in (\ref{eqDOFs}), we conclude that the average (per cell) DoFs $d_{{\cal G},\pi^{*}} = 3/4$, are indeed achievable  in  $\cal G(V,E)$ under the network interference cancellation framework with decoding order $\pi^{*}$, and the proof of Theorem~\ref{thm1} is completed.

\subsection{Converse (Proof of Theorem 2)}\label{converse}
We begin by stating a useful lemma that will help us bound the total DoFs achievable in our setting.

\begin{lemma} The degrees of freedom achievable by linear schemes in an $M\times M$ flat-fading MIMO cellular network with directed interference graph ${\cal G}_{\pi}({\cal V},{\cal E}_{\pi})$, must satisfy
\begin{align}
&d_{v} \in \{0,\dots, M\},\;\forall v\in {\cal V}\label{IAfeasibility}\\
&d_{u}+d_{v} \leq M,\;\forall [u,v]\in {\cal E}_{\pi}\label{IAfeasibility2}\\
&2\sum_{v\in \cal V}(M - d_{v})d_{v} \geq \sum_{(u,v)\in \cal E_{\pi}} d_{u}d_{v} \label{IAfeasibility3}
\end{align}
%
%
for any decoding order $\pi$, assuming all channel gains are chosen from a continuous distribution.
\end{lemma}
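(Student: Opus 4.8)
The plan is to read the lemma as a feasibility statement about the bilinear system of interference–nulling equations, and to establish the three inequalities by successively more global arguments. Set up the linear scheme as follows: transmitter $v$ sends $d_v$ streams through a rank-$d_v$ beamforming matrix $\Vm_v\in\CC^{M\times d_v}$ (equivalently, a point of the Grassmannian $\mathrm{Gr}(d_v,M)$), receiver $v$ applies a rank-$d_v$ filter $\Um_v\in\CC^{M\times d_v}$, and successful linear decoding requires $\det(\Um_v^\herm\Hm_{vv}\Vm_v)\neq 0$ together with the interference–nulling equations $\Um_v^\herm\Hm_{vu}\Vm_u=\zerov$ for every directed edge $[u,v]\in\Ec_\pi$. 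Inequality (\ref{IAfeasibility}) is then immediate: $d_v$ counts the streams carried over $M$ transmit and $M$ receive antennas, so $0\le d_v\le M$ with $d_v$ an integer.

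For inequality (\ref{IAfeasibility2}) I would argue locally at a single edge $[u,v]$. With a generic $\Hm_{vu}$ and a rank-$d_u$ precoder $\Vm_u$, the interference subspace $\Hm_{vu}\,\mathrm{span}(\Vm_u)$ has dimension exactly $d_u$ inside $\CC^M$. The nulling condition forces $\mathrm{span}(\Um_v)$, of dimension $d_v$, to lie in its orthogonal complement, which has dimension $M-d_u$; hence $d_v\le M-d_u$, i.e.\ $d_u+d_v\le M$. Note the desired-signal condition plays no role in this step.

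The main obstacle, and the heart of the lemma, is the global counting inequality (\ref{IAfeasibility3}), for which I would use the genericity/dimension-counting technique underlying \cite{Razaviyayn,Bresler}, specialized to the directed graph $\Ec_\pi$. Since feasibility of the full scheme implies feasibility of the nulling equations alone, it suffices to bound the latter. Form the incidence variety $\mathcal Z$ of all tuples $(\{\Hm_{vu}\},\{\Vm_v\},\{\Um_v\})$ (one channel block per directed edge) satisfying $\Um_v^\herm\Hm_{vu}\Vm_u=\zerov$, and consider its projection onto the interference-channel space $\mathcal H=\prod_{[u,v]\in\Ec_\pi}\CC^{M\times M}$. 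Linear nulling is feasible for generic channels precisely when this projection is dominant, so a necessary condition is $\dim\mathcal Z\ge\dim\mathcal H=\sum_{[u,v]\in\Ec_\pi}M^2$; otherwise the image, being constructible of dimension at most $\dim\mathcal Z$, and hence the feasible channel set, has measure zero.

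To evaluate $\dim\mathcal Z$ I would fiber it over the beamformer space $\mathcal B=\prod_v\mathrm{Gr}(d_v,M)^2$, whose dimension is $2\sum_{v}d_v(M-d_v)$. Over a generic full-rank choice of beamformers, the map $\Hm_{vu}\mapsto\Um_v^\herm\Hm_{vu}\Vm_u$ is surjective onto $\CC^{d_v\times d_u}$, so the nulling condition cuts out exactly $d_ud_v$ independent linear constraints and the channel fiber has dimension $\sum_{[u,v]\in\Ec_\pi}(M^2-d_ud_v)$, giving $\dim\mathcal Z=2\sum_v d_v(M-d_v)+\sum_{[u,v]\in\Ec_\pi}(M^2-d_ud_v)$. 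Substituting into $\dim\mathcal Z\ge\sum_{[u,v]\in\Ec_\pi}M^2$ and cancelling the $M^2$ terms yields exactly (\ref{IAfeasibility3}). The delicate points to verify will be (i) that generic feasibility is equivalent to dominance of the projection (the feasible channel set is constructible over $\CC$, hence either measure zero or co-null), and (ii) the fiber count, namely that the $d_ud_v$ nulling equations are genuinely independent for generic full-rank beamformers and that distinct edges contribute independent channel blocks; both follow from the full column rank of $\Vm_u$ and $\Um_v$.
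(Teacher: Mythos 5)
Your proposal is correct, but it takes a genuinely different route from the paper's. The paper's own proof is essentially a one-step reduction plus a citation: a genie hands receiver $\pi(i)$ all earlier messages $W_{\pi(1)},\dots,W_{\pi(i-1)}$, so the DoF region of the genie-aided network outer-bounds that of the actual message-passing system; the genie-aided network is precisely the partially-connected interference channel with directed graph ${\cal G}_{\pi}({\cal V},{\cal E}_{\pi})$; and then (\ref{IAfeasibility})--(\ref{IAfeasibility3}) are imported wholesale as the linear-IA feasibility conditions of \cite{Razaviyayn}, \cite{Bresler}. You instead re-derive those imported conditions: (\ref{IAfeasibility2}) by the local subspace-dimension argument, and (\ref{IAfeasibility3}) by the incidence-variety/dominant-projection count, fibering over the product of Grassmannians so that, for fixed full-column-rank beamformers, each directed edge cuts out exactly $d_ud_v$ independent linear conditions on its own channel block; comparing $\dim\mathcal{Z}$ with the channel-space dimension and cancelling the $M^{2}$ terms gives exactly (\ref{IAfeasibility3}). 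This is the mechanism inside the cited feasibility results, and your execution is sound, including the two delicate points you flag (constructibility of the feasible channel set over $\CC$, and surjectivity of $\Hm_{vu}\mapsto\Um_v^{\herm}\Hm_{vu}\Vm_u$ for full-column-rank $\Um_v,\Vm_u$). What your version buys is self-containedness---the lemma no longer rests on an external theorem---at the cost of heavier machinery. The one element of the paper's proof you do not address is the reduction itself: you begin from ``successful decoding requires $\Um_v^{\herm}\Hm_{vu}\Vm_u=\zerov$ for every $[u,v]\in{\cal E}_{\pi}$,'' which already presumes that interference from earlier-decoding neighbors has been perfectly removed, whereas in the actual cellular system that cancellation is contingent on those neighbors having decoded correctly and exchanged their messages over the backhaul. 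The paper disposes of this with the genie argument, which only enlarges the DoF region and lands exactly on the directed-graph channel you analyze; you should state that step (it is one sentence) so that your necessary conditions apply to the message-passing cellular model and not merely to an abstract partially-connected interference channel.
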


\begin{proof}
Consider the decoding order $\pi$ and assume that a genie provides $W_{\pi(1)}$ to $W_{\pi(i-1)}$ to the receiver  $\pi(i)$, for all $i=1,...,|\cal V|$. Since, in the original problem, the decoding order is  $\pi$, and we have assumed  that each receiver can only share its own decoded message with its neighbors, then it is clear that the DoF region of the new genie-aided interference network is an outer-bound for the original one. This side information reduces the original interference graph  $\cal G(V,E)$ to the directed interference graph ${\cal G}_{\pi}({\cal V},{\cal E}_{\pi})$.  The lemma follows by using the linear IA  feasibility conditions developed in~\cite{Razaviyayn},~\cite{Bresler}. 
\end{proof}

The above inequalities are {\it necessary} conditions for the achievability of any degrees of freedom $\{d_{v},v\in\cal V\}$ in ${\cal G}_{\pi}({\cal V},{\cal E}_{\pi})$. We are going to use these conditions here to obtain an upper bound on the average degrees of freedom achievable in our network by considering the  optimization problem
\begin{align}
\hspace{-0.2in}{{ \rm Q}_{1}({{\cal G}_{\pi}})}:\;\;\;\;\; &\underset{\{d_{v},v\in\cal V\}}{\mbox{maximize}}\;\;\;  \frac{1}{|{\cal V}|} \sum_{v\in \cal V}d_{v} \nonumber\\
&\mbox{subject to:} \;\;(\ref{IAfeasibility}),(\ref{IAfeasibility2}),(\ref{IAfeasibility3}).
\nonumber
\end{align}

As we can  see, ${ \rm Q}_{1}({{\cal G_{\pi}}})$  is a very difficult problem to solve in its current form: Not only it involves  non-linear integer constrains but also it is defined over many variables (since we want to consider large networks). 
In the following, we are going to transform (and relax) this problem to a tractable linear program (LP) with significantly fewer variables that do not scale with the size of the network. Our approach can be summarized in the following steps:
\begin{itemize}
\item First, we decompose the network into a set of three-user interference sub-networks, such that the three cells in each sub-network are geographically nearby and therefore interfere with each other. In the interference graph topology each of such subnetwork forms a triangle. 

\item Then, we identify the set of all possible total degrees of freedom that each isolated triangle can achieve. This set induces a finite number of possible distinct options, that we will refer to as triangle-DoF configurations. 

\item  Next, we rewrite the entire  objective function and the constraints as  linear functions of the relative frequencies in which  each triangle-DoF configuration occurs in the network. This reformulation has two advantages: First, since the number of triangle DoF configurations is limited,  the number of variables in the resulting optimization problem will not scale with the size of the network. Second, the relative frequency of the each triangle-DoF configuration is a positive rational number between zero and one, and therefore relaxing this number to a real number between zero and one does not lead to a loose upper-bound.  


\item Finally, after we obtain the corresponding linear program from the above reformulation, we use standard tools from duality theory to upper bound its optimal value and therefore conclude that 
$d_{{\cal G},\pi}\leq{\rm opt}({ \rm Q}_{1}({{\cal G_{\pi}}}))\leq 3/4 + {\cal O}\left( \scriptstyle{1}/{{\sqrt{|{\cal V}|}}}\right)$, for all decoding orders $\pi$.
\end{itemize}


\subsubsection{Decomposing the  Network into Triangles}

\begin{figure}[ht]

                \centering
                \includegraphics[width=.55\columnwidth]{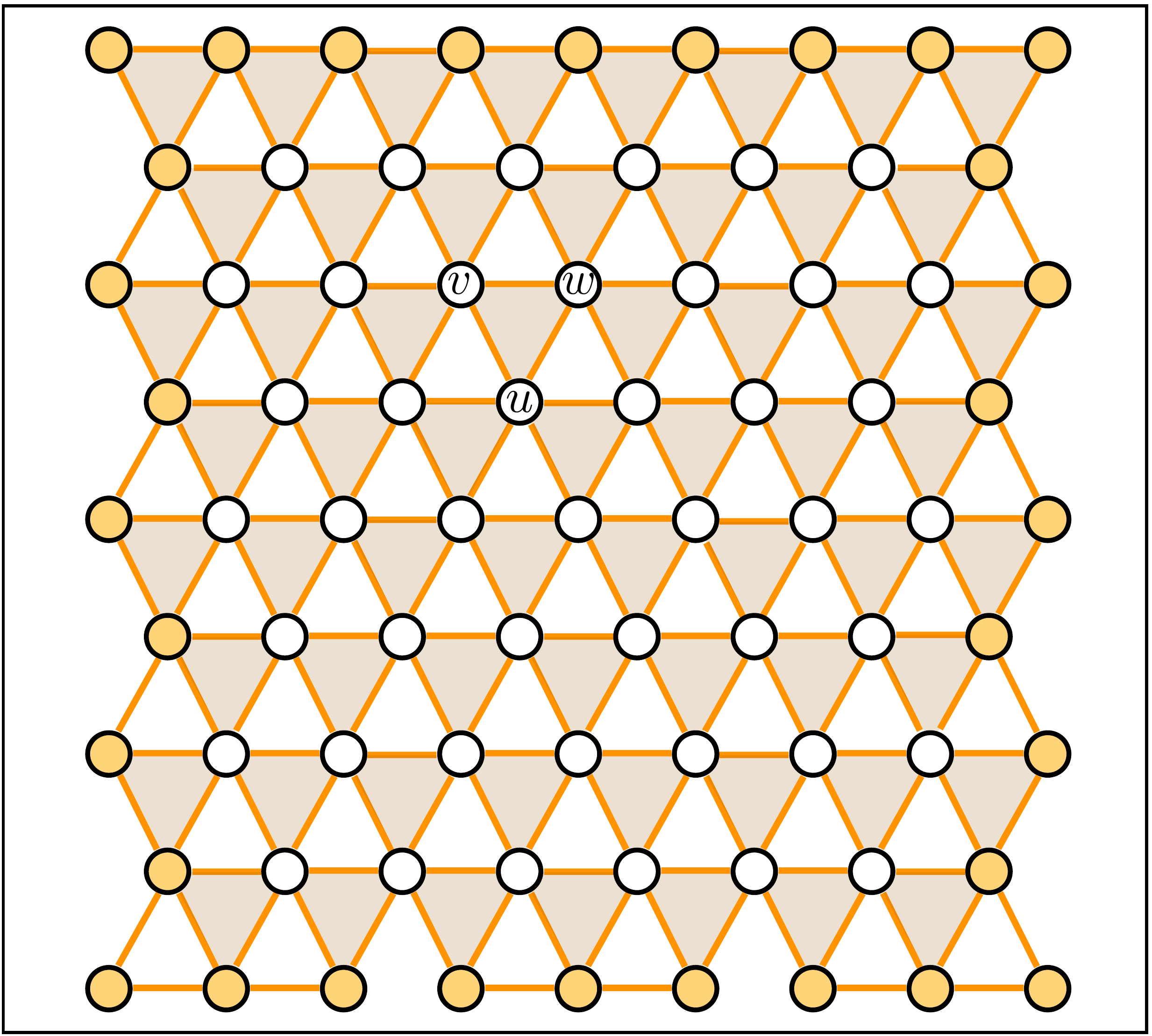}
                \caption{The set of triangles $[a,b,c]\in \cal T$ for  $\cal G(V,E)$. All the circle nodes belong to $\cal V_{\rm in}$ and participate in exactly three triangles ($n_{v}=3$). The set $\cal V_{\rm ex}$ contains the  colored nodes on the 
                boundary for which $n_{v}<3$.\vspace{-0.15in}}
                \label{Triangles}

\end{figure}
As a first step we will decompose the entire cellular network into a set of three-user interference sub-networks $\cal T$ that form adjacent triangles in the corresponding interference graph $\cal G(V,E)$ as shown in Fig.~\ref{Triangles}.
In order to formally describe the set of triangles $\cal T$ in $\cal G(V,E)$, we consider the set of ordered Eisenstein integer triplets $${ \cal P} = \{[z,z+\omega,z+\omega+1]: z\in\mathbb{Z}(\omega) \}.$$
Recall from Section \ref{sec:igraph} that  the points $z$, $z+\omega$ and $z+\omega+1$ form the line segments $\Delta(z)\subseteq {\cal D}$ and that the corresponding graph vertices $\phi^{-1}(z)$, $\phi^{-1}(z+\omega)$ and $\phi^{-1}(z+\omega+1)$  form a connected  triangle in $\cal G(V,E)$. The set $\cal T$ can hence be defined as
\begin{equation}
{\cal T} \triangleq \{[a,b,c]: [\phi(a),\phi(b),\phi(c)]\in {\cal P}, a,b,c\in {\cal V}\}.
\label{triangledef}
\end{equation}

The above definition is illustrated in Fig.~\ref{Triangles} in which shaded triangles connect the corresponding vertex triplets $[a,b,c]\in \cal T$.
Notice that apart from the vertices on the external boundary of the graph, all other nodes participate in exactly three triangles in $\cal T$. This observation will be particularly useful in rewriting the sum in the objective function of ${ \rm Q}_{1}({{\cal G_{\pi}}})$ as a sum over $\cal T$ instead of $\cal V$.

Let \vspace{-0.1in}
\begin{equation}
n_{v}\triangleq \sum_{[a,b,c]\in {\cal T}}\mathlarger{\mathbbm{1}}\Big\{v\in\{a,b,c\}\Big\}
\label{nv}
\end{equation}
denote the number of triangles $[a,b,c]\in {\cal T}$ that include a given vertex $v\in \cal V$.
As we have seen, $n_{v}$ can only take values in $\{0, 1, 2, 3\}$ for any $v\in \cal V$. More specifically $n_{v}= 3$ for all  internal vertices in $\cal G(V,E)$, while  $n_{v}< 3$ only for  external vertices that lie on the outside boundary of the graph.
We define the set of internal and external vertices as 
\begin{eqnarray}
{\cal V}_{\rm in} &=& \{v\in {\cal V} : n_{v}=3\}, \mbox{ and} \\
{\cal V}_{\rm ex} &=& \{v\in {\cal V} : n_{v}<3\},
\end{eqnarray}
and in Fig.~\ref{Triangles} we show the above distinction by coloring all graph vertices $v$ that belong to ${\cal V}_{\rm ex}\subseteq\cal V$.  

\subsubsection{Reformulating  ${\rm Q}_{1}({{\cal G}_{\pi}})$ as a Linear Program} Notice that the objective and the constraint functions in ${\rm Q}_{1}({{\cal G}_{\pi}})$ are given as a sum over the vertices $v\in \cal V$. As a first step towards reformulating  ${\rm Q}_{1}({{\cal G}_{\pi}})$ as an LP, we will use the following two lemmas and rewrite the functions of ${\rm Q}_{1}({{\cal G}_{\pi}})$ in terms of sums over the corresponding triangles $[a,b,c]\in \cal T$.
\vspace{-0.1in}
\begin{lemma}
The degrees of freedom $\{d_{v},v\in \cal V\}$, satisfy
 \begin{align}
  \frac{1}{|{\cal V}|} \sum_{v\in \cal V}d_{v} \leq  \frac{1}{3|{\cal V}|}\sum_{[a,b,c]\in \cal T}
s(d_{a},d_{b},d_{c})  + \frac{M|{\cal{V}_{\rm ex}}|}{|\cal V|},
  \end{align}
  where 
    \begin{align}
  s(d_{a},d_{b},d_{c}) \triangleq \;&d_{a}+d_{b}+d_{c}.\;\;
  \end{align}
  \label{lem:objective}
\end{lemma}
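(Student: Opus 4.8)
The plan is to reduce everything to the counting identity relating the triangle sum to a $n_v$-weighted vertex sum, and then to argue that the two sides differ only through the boundary vertices. First I would interchange the order of summation in the triangle sum: each vertex $v\in{\cal V}$ contributes its value $d_v$ once for every triangle of ${\cal T}$ that contains it, and by definition (\ref{nv}) this happens exactly $n_v$ times. Hence
\begin{equation}
\sum_{[a,b,c]\in{\cal T}} s(d_a,d_b,d_c)=\sum_{[a,b,c]\in{\cal T}}(d_a+d_b+d_c)=\sum_{v\in{\cal V}}n_v\,d_v.
\label{eq:doublecount}
\end{equation}
This identity is the crux of the argument; once it is in place the rest is just bookkeeping.

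Next I would compare the target average $\frac{1}{|{\cal V}|}\sum_v d_v$ against $\frac{1}{3|{\cal V}|}$ times (\ref{eq:doublecount}). Subtracting, the difference telescopes into a single sum with weights $3-n_v$:
\begin{equation}
\frac{1}{|{\cal V}|}\sum_{v\in{\cal V}}d_v-\frac{1}{3|{\cal V}|}\sum_{v\in{\cal V}}n_v\,d_v=\frac{1}{3|{\cal V}|}\sum_{v\in{\cal V}}(3-n_v)\,d_v.
\label{eq:diff}
\end{equation}
The key structural fact, already recorded in the text, is that $n_v=3$ for every internal vertex $v\in{\cal V}_{\rm in}$, so the weight $3-n_v$ vanishes on all of ${\cal V}_{\rm in}$ and the sum in (\ref{eq:diff}) collapses to a sum over the boundary set ${\cal V}_{\rm ex}$ only.

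Finally I would bound this boundary sum crudely. For $v\in{\cal V}_{\rm ex}$ we have $n_v\in\{0,1,2\}$, so $0<3-n_v\le 3$, and by the feasibility constraint (\ref{IAfeasibility}) we have $d_v\le M$. Therefore
\begin{equation}
\frac{1}{3|{\cal V}|}\sum_{v\in{\cal V}_{\rm ex}}(3-n_v)\,d_v\le\frac{1}{3|{\cal V}|}\sum_{v\in{\cal V}_{\rm ex}}3M=\frac{M\,|{\cal V}_{\rm ex}|}{|{\cal V}|},
\label{eq:bndbound}
\end{equation}
and combining (\ref{eq:diff}) and (\ref{eq:bndbound}) with the identity (\ref{eq:doublecount}) yields exactly the claimed inequality. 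The only genuine content is the exchange-of-summation step (\ref{eq:doublecount}) together with the observation that $3-n_v$ is supported on ${\cal V}_{\rm ex}$; there is no real obstacle here, and the crude per-vertex bound $3M$ is deliberately wasteful since the boundary term is $O(|{\cal V}_{\rm ex}|/|{\cal V}|)=O(1/\sqrt{|{\cal V}|})$ for the extended hexagonal network and thus harmless in the final converse.
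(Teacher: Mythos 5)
Your proposal is correct and follows essentially the same route as the paper's own proof: the double-counting identity $\sum_{[a,b,c]\in{\cal T}}(d_a+d_b+d_c)=\sum_{v\in{\cal V}}n_v d_v$, the collapse of the weight $3-n_v$ onto ${\cal V}_{\rm ex}$ since $n_v=3$ on ${\cal V}_{\rm in}$, and the crude bound $(3-n_v)d_v\le 3M$ on boundary vertices. The paper phrases the boundary term as $C_{\rm ex}=\sum_{v\in{\cal V}_{\rm ex}}\left(1-\frac{n_v}{3}\right)d_v\le M|{\cal V}_{\rm ex}|$, which is identical to your estimate up to the factor of $3$ you carry explicitly.
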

\begin{proof}
See Appendix \ref{proof:objective}
\end{proof}

\begin{lemma}
Any degrees of freedom $\{d_{v},v\in \cal V\}$ that satisfy (\ref{IAfeasibility3}) also satisfy: 
 \begin{align}
  &\sum_{[a,b,c]\in \cal T}   g(d_{a},d_{b},d_{c})
  \leq \frac{3M^{2}}{2}|{\cal{V}_{\rm ex}}|,
  \end{align}
  where  \vspace{-0.1in}
    \begin{align}
  g(d_{a},d_{b},d_{c}) \triangleq \;&(d_{a}+d_{b})^{2}+(d_{a}+d_{c})^{2}+(d_{b}+d_{c})^{2} \nonumber\\ &+{d_{a}d_{b}+d_{a}d_{c}+d_{b}d_{c}} -2M(d_{a}+d_{b}+d_{c}).
  \end{align}
  \label{lem:constraint}
\vspace{-0.2in}
\end{lemma}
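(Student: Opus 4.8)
The plan is to expand $g$, sum it over ${\cal T}$, and rewrite the total in terms of the incidence count $n_v$ and the edge structure of the triangulation, so that the global feasibility inequality (\ref{IAfeasibility3}) can be applied term by term. Expanding the definition,
\begin{equation*}
g(d_a,d_b,d_c)=2(d_a^2+d_b^2+d_c^2)+3(d_ad_b+d_ad_c+d_bd_c)-2M(d_a+d_b+d_c).
\end{equation*}
The combinatorial input I would isolate first is that the triplets ${\cal P}$ of (\ref{triangledef}) select a single orientation of the triangular Eisenstein lattice: every vertex lies in at most three such triangles (exactly $n_v=3$ in the interior, by (\ref{nv})), and crucially every edge of ${\cal E}$ belongs to \emph{at most one} triangle of ${\cal T}$ — to exactly one when the third vertex completing the triangle also lies in ${\cal V}$. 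Writing ${\cal E}_\triangle\subseteq{\cal E}$ for the edges that do lie in a triangle, summing $g$ then gives
\begin{equation*}
\sum_{[a,b,c]\in{\cal T}}g=2\sum_{v\in{\cal V}}n_v d_v^2+3\!\!\sum_{(u,v)\in{\cal E}_\triangle}\!\!d_ud_v-2M\sum_{v\in{\cal V}}n_v d_v .
\end{equation*}

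Next I would compare this against three copies of (\ref{IAfeasibility3}). Since ${\cal E}_\pi$ is merely an orientation of ${\cal E}$ and $d_ud_v$ is symmetric, (\ref{IAfeasibility3}) can be rearranged into $2\sum_v d_v^2+\sum_{(u,v)\in{\cal E}}d_ud_v-2M\sum_v d_v\le 0$. Multiplying by $3$ and subtracting from the display above, all interior vertices cancel because there $n_v=3$ and each interior edge lies in a triangle, leaving only boundary contributions:
\begin{equation*}
\sum_{[a,b,c]\in{\cal T}}g\le 2\sum_{v\in{\cal V}_{\rm ex}}(n_v-3)d_v^2-3\!\!\sum_{(u,v)\in{\cal E}\setminus{\cal E}_\triangle}\!\!d_ud_v-2M\sum_{v\in{\cal V}_{\rm ex}}(n_v-3)d_v,
\end{equation*}
where I have used that three times the rearranged (\ref{IAfeasibility3}) is non-positive to discard it.

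Finally, since the $d_v$ are non-negative the edge sum $-3\sum_{{\cal E}\setminus{\cal E}_\triangle}d_ud_v$ is non-positive and may be dropped, and the two remaining per-vertex terms combine into
\begin{equation*}
2(n_v-3)d_v^2-2M(n_v-3)d_v=2(3-n_v)\,d_v(M-d_v).
\end{equation*}
The elementary estimate $d_v(M-d_v)\le M^2/4$ together with $0\le 3-n_v\le 3$ for every external vertex then yields $\sum_{{\cal T}}g\le 2\cdot 3\cdot(M^2/4)\,|{\cal V}_{\rm ex}|=\tfrac{3M^2}{2}|{\cal V}_{\rm ex}|$, which is the claim.

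I expect the main obstacle to be the boundary bookkeeping rather than the algebra: verifying rigorously that ${\cal P}$ picks out exactly one triangle orientation (so that each edge is counted at most once), pinning down $n_v$ for each vertex type on the boundary, and confirming that both discarded quantities — three times the feasibility slack and the missing-edge sum ${\cal E}\setminus{\cal E}_\triangle$ — carry the correct non-positive sign so that dropping them can only weaken the bound. Once this incidence structure is in place, the identity collecting the quadratic and linear vertex terms and the bound $d_v(M-d_v)\le M^2/4$ are routine.
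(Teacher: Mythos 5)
Your proof is correct and is essentially the paper's own argument in a different packaging: both rest on the same ingredients---each edge of ${\cal E}$ lies in at most one triangle of ${\cal T}$, $n_v=3$ off the boundary, $d_v\ge 0$ to discard the non-triangle edge products, and $d_v(M-d_v)\le M^2/4$ together with $0\le 3-n_v\le 3$ on the boundary---and both amount to comparing $\sum_{{\cal T}}g$ against three copies of (\ref{IAfeasibility3}). The only difference is bookkeeping: the paper routes the vertex-incidence count through the identity already established in the proof of Lemma~\ref{lem:objective}, arriving at exactly the same boundary term $2\sum_{v\in{\cal V}_{\rm ex}}(3-n_v)(M-d_v)d_v$ that you obtain by direct expansion and subtraction.
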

\begin{proof}
See Appendix \ref{proof:constraint}
\end{proof}

Replacing the objective and constraint functions of ${ \rm Q}_{1}({{\cal G}_{\pi}})$ with the corresponding functions given above, we arrive at
\begin{align}
\hspace{-0.2in}{{ \rm Q}_{2}({{\cal G}_{\pi}})}:\;\;\;\;\; &\underset{\{d_{v},v\in\cal V\}}{\mbox{maximize}}\;\;\;   \frac{1}{3|{\cal V}|}\sum_{[a,b,c]\in \cal T}
s(d_{a},d_{b},d_{c}) +\frac{M|{\cal{V}_{\rm ex}}|}{|\cal V|}\label{c0} \\
&\mbox{subject to:} \;\;
d_{v} \in \{0,\dots, M\},\;\forall v\in {\cal V}\label{c1}\\
&\;\;\;\;\;\;\;\;\;\;\;\;\;\;\;\;\;\,d_{u}+d_{v} \leq M,\;\forall [u,v]\in {\cal E}_{\pi}\label{c2}\\ 
&\;\;\;\;\;\;\;\;\;\;\;\;\;\;\;\;\;\,\sum_{[a,b,c]\in \cal T}g(d_{a},d_{b},d_{c})\leq \frac{3M^{2}}{2}|{\cal{V}_{\rm ex}}| \,,\label{c3} 
\end{align}
whose optimal value satisfies   ${\rm opt}({ \rm Q}_{2}({{\cal G}_{\pi}}))\geq{\rm opt}({ \rm Q}_{1}({{\cal G}_{\pi}}))$, i.e., the solution of ${ \rm Q}_{2}$ provides an outer bound for the achievable DoFs in the cellular network.

A key observation is that  $s(d_{a},d_{b},d_{c})$ and $g(d_{a},d_{b},d_{c})$ can only take specific values for each triangle due to the constraints (\ref{c1}) and (\ref{c2}) and are invariant under permutations of their arguments. As the next step in our proof, we will 
define the set $\cal D$ of all distinct triangle-DoF configurations that will subsequently limit the possible values that $s(d_{a},d_{b},d_{c})$ and $g(d_{a},d_{b},d_{c})$ can take. When $M=2$ it is easy to see that all possible DoF configurations $(d_{a},d_{b},d_{c})$ will belong to the set  $\{[0,0,0], [0,0,1], [0,0,2], [0,1,1], [1,1,1]\}.$ For general $M$, the set $\cal D$ is given by

\begin{equation}\label{TD}
{\cal D} \triangleq\left\{[i,j,k]: 
\begin{aligned} &i\leq j\leq k \in \{0,\dots,M\}\\ &i+j\leq M \\
&j+k\leq M \\
&k+i\leq M  \end{aligned}\right\}.
\end{equation}

Now we can define the relative frequencies

\begin{equation}
x_{(i,j,k)} \triangleq  \frac{1}{|\cal T|}\sum_{[a,b,c]\in \cal T} \mathlarger{\mathbbm{1}} \Big\{
(d_{a},d_{b},d_{c})=(i,j,k)\Big\},
\end{equation}
 to be the fraction of  triangles $[a,b,c]\in \cal T$ with the specific DoF configuration $[i,j,k]\in \cal D$, and write the corresponding sums in (\ref{c0}) and (\ref{c3}) as 
  \begin{align}
&\sum_{[a,b,c]\in \cal T} s{(d_{a},d_{b},d_{c})} = {|\cal T|} \sum_{[i,j,k] \in \cal D} s{(i,j,k)}\cdot x_{(i,j,k)}\;\\
&\mbox{and}\nonumber\\
&\sum_{[a,b,c]\in \cal T} g{(d_{a},d_{b},d_{c})} =  {|\cal T|} \sum_{[i,j,k] \in \cal D} g{(i,j,k)}\cdot x_{(i,j,k)}\,.
\end{align}

Finally, relaxing the integrality constraints in (\ref{c1}) and letting  $x_{(i,j,k)}\in \RR$,  we can reduce the optimization problem ${ \rm Q}_{2}({{\cal G}_{\pi}})$ to a linear program (LP) over the variables $x_{(i,j,k)}, \, [i,j,k] \in \cal D$
given by

\vspace{-0.2in}
\begin{align}
{{ \rm LP}({{\cal G}_{\pi}})}:\;\; \underset{\left\{ \substack{x_{(i,j,k)}\in \mathbb{R},\\ [i,j,k]\in \cal D} \right\}}{\mbox{maximize}}\;\;\;  & \frac{|{\cal T}|}{3|{\cal V}|} \sum_{[i,j,k] \in \cal \cal D}s{(i,j,k)}\cdot x_{(i,j,k)} + \frac{M|{\cal V}_{\rm ex}|}{|{\cal V}|} \\
\mbox{subject to:} \;\;& \sum_{[i,j,k] \in \cal\cal D} g{(i,j,k)}\cdot x_{(i,j,k)} \leq \frac{3M^{2}|{\cal V}_{\rm ex}|}{2|{\cal T}|}
\\
&\sum_{[i,j,k] \in \cal D} x_{(i,j,k)}  = 1 \\
& \;\;\,x_{(i,j,k)}  \geq  0, \; \forall {[i,j,k] \in \cal \cal D}   
\end{align}
whose objective is to identify the relative frequencies $x_{(i,j,k)}$ of the triangle-DoF configurations that, under the (relaxed) feasibility constraints, maximize the (upper-bounded) degrees of freedom in $\cal G$.

\subsubsection{The LP for the $M=2$ case} As we have seen, when $M=2$,  the feasible triangle-DoF configurations are given by $${\cal D} = \{[0,0,0], [0,0,1], [0,0,2], [0,1,1], [1,1,1]\}.$$ Therefore, if we define the $5$-dimensional vectors  
\begin{align}
\sv \triangleq& \begin{bmatrix}s(0,0,0),\, s(0,0,1),\, s(0,0,2),\, s(0,1,1),\,s(1,1,1)\end{bmatrix}^{\rm T} = \begin{bmatrix}0,1,2,2,3\end{bmatrix}^{\rm T}\\
\gv \triangleq& \begin{bmatrix}g(0,0,0),\, g(0,0,1),\, g(0,0,2),\, g(0,1,1),\,g(1,1,1)\end{bmatrix}^{\rm T} = \begin{bmatrix}0,-2,0,-1,3\end{bmatrix}^{\rm T}
\end{align}
and  let 
$\xv \triangleq \begin{bmatrix}x_{(0,0,0)},\, x_{(0,0,1)},\, x_{(0,0,2)},\, x_{(0,1,1)},\,x_{(1,1,1)}\end{bmatrix}^{\rm T}$, we arrive at the linear program :

\begin{align}
{\rm LP}_{2}({{\cal G}_{\pi}}):\;\;\;\;\; \underset{\left\{ \substack{\xv \in \mathbb{R}^{5}} \right\}}{\mbox{maximize}}\;\;\;\;\;  &\frac{|{\cal T}|}{3|{\cal V}|}\;\sv^{\rm T}\xv + 2\frac{|{\cal V}_{\rm ex}|}{|{\cal V}|}\label{lpg1} \\
\;\mbox{subject to:} \;\;\;\;\;& \gv^{\rm T}\xv \leq 6\frac{|{\cal V}_{\rm ex}|}{|{\cal T}|},\;\,{\bf 1}^{\rm T}\xv  = 1,\;\,\xv  \geq  0. \label{lpg2} 
\end{align}

To obtain some intuition for the above optimization problem, notice that in Fig.~\ref{Triangles}, the  triangles $\cal T$ in $\cal G(V,E)$ are almost as many as the vertices $\cal V$ and therefore we can argue that $\frac{|{\cal T}|}{3|{\cal V}|}\rightarrow \frac{1}{3}$ as $|V|\rightarrow\infty$. On the other hand, the number of external vertices $|\cal V_{\rm ex}|$ is much smaller that    
$|\cal V|$ and we have that  $\displaystyle\lim_{|{\cal V|}\rightarrow \infty}|{\cal V_{\rm ex}}|/|{\cal V}|=0$. Consequently, one should expect that in large cellular systems, the average achievable DoFs are upper bounded by the solution of the linear program:

\begin{equation}
\underset{\left\{ \substack{\xv \in \mathbb{R}^{5}} \right\}}{\mbox{maximize}}\;\;
\frac{1}{3}\;\sv^{\rm T}\xv,\; \mbox{subject to:} \;\;\gv^{\rm T}\xv \leq 0,\;\,{\bf 1}^{\rm T}\xv  = 1,\;\,\xv  \geq  0.  
\end{equation}

Quite remarkably, the solution of the above LP is given by the relative frequency vector $$\xv^{*} = \big[x_{(0,0,0)},\, x_{(0,0,1)},\, x_{(0,0,2)},\, x_{(0,1,1)},\,x_{(1,1,1)}\big]^{\rm T} = \big[0,\,0,\,0,\,3/4,\,1/4\big],$$ and its optimal value is $\frac{1}{3}\;\sv^{\rm T}\xv^{*}= {3}/{4}$. It is  interesting to note that the optimal relative frequencies $x_{(0,1,1)}=3/4$ and $x_{(1,1,1)}=1/4$ given in the above solution are exactly the same as the ones used in our achievability scheme in Fig.~\ref{butterfly}.

\subsubsection{Solving the Linear Program} To conclude the proof of Theorem~\ref{thm2}, in this section
 we will focus on the $M=2$ case and use standard LP duality tools to formally show  that the solution of ${\rm LP}_{2}({{\cal G}_{\pi}})$ is upper bounded by $3/4 + {\cal O}\left( \scriptstyle{1}/{{\sqrt{|{\cal V}|}}}\right)$. For additional results and the extension of the corresponding bounds to the general $M\times M$ case we refer the reader to Appendix~\ref{generalM}.

%
%
%
%
\begin{lemma}\label{lem:LP-abc}
Let ${\rm opt}(\sv,\gv,\alpha,\beta,\gamma)$ denote the optimal value of the linear program given by
\begin{align}
\underset{\left\{ \substack{\xv \in \mathbb{R}^{n}} \right\}}{\mbox{maximize}}\;\;\;\;\;  &\alpha\cdot \sv^{\rm T}\xv + \beta \\
\;\mbox{subject to:} \;\;\;\;\;& \gv^{\rm T}\xv \leq \gamma,\;\,{\bf 1}^{\rm T}\xv  = 1,\;\,\xv  \geq  0.  
\end{align}
where $\alpha,\beta,\gamma\geq0$ and $\sv,\gv\in \RR^{n}$. Then, for any $\lambda\geq0$, we have that  $${\rm opt}(\sv,\gv,\alpha,\beta,\gamma)\leq \alpha\cdot \max_{i}\left\{s_{i}-{\lambda}\cdot g_{i}\right\} + \lambda\cdot\alpha\gamma+\beta.$$\end{lemma}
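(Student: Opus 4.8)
The plan is to establish the bound by a one-line weak-duality (Lagrangian relaxation) argument, so I would not introduce the full dual LP explicitly. The two structural features I would exploit are the single inequality constraint $\gv^{\rm T}\xv \leq \gamma$, which I dualize with the multiplier $\lambda$, and the simplex constraints $\onev^{\rm T}\xv = 1$ and $\xv \geq 0$, which force any feasible $\xv$ to be a probability vector. Throughout, the nonnegativity hypotheses $\alpha,\beta,\gamma,\lambda \geq 0$ are what make every inequality go in the right direction, so I would flag them at the outset.

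First I would fix an arbitrary feasible $\xv$ and an arbitrary $\lambda \geq 0$ and relax the objective by adding the nonnegative penalty $\lambda\alpha(\gamma - \gv^{\rm T}\xv)$, which is $\geq 0$ because $\gamma - \gv^{\rm T}\xv \geq 0$ (feasibility) and $\lambda\alpha \geq 0$. This gives
\begin{align*}
\alpha\,\sv^{\rm T}\xv + \beta
&\leq \alpha\,\sv^{\rm T}\xv + \beta + \lambda\alpha\big(\gamma - \gv^{\rm T}\xv\big)\\
&= \alpha\,(\sv - \lambda\gv)^{\rm T}\xv + \lambda\alpha\gamma + \beta.
\end{align*}
The next step is to observe that $(\sv - \lambda\gv)^{\rm T}\xv = \sum_{i}(s_{i} - \lambda g_{i})\,x_{i}$ is a convex combination of the scalars $s_{i} - \lambda g_{i}$, since $x_{i} \geq 0$ and $\sum_{i} x_{i} = 1$; hence it is at most $\max_{i}\{s_{i} - \lambda g_{i}\}$. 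Multiplying this last inequality by $\alpha \geq 0$ and substituting into the display yields
\[
\alpha\,\sv^{\rm T}\xv + \beta \;\leq\; \alpha\max_{i}\{s_{i} - \lambda g_{i}\} + \lambda\alpha\gamma + \beta .
\]
Because the right-hand side does not depend on $\xv$ and the inequality holds for every feasible $\xv$, I would take the supremum over feasible $\xv$ on the left to conclude ${\rm opt}(\sv,\gv,\alpha,\beta,\gamma) \leq \alpha\max_{i}\{s_{i} - \lambda g_{i}\} + \lambda\alpha\gamma + \beta$, as claimed.

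I do not expect a genuine obstacle here: each step is an elementary consequence of one of the stated sign conditions, and the only conceptual point is recognizing that the equality $\onev^{\rm T}\xv = 1$ together with $\xv \geq 0$ turns the linear functional into a convex combination dominated by its coordinatewise maximum. The one thing to be careful about is the bookkeeping of the multipliers: the penalty weight that produces exactly the stated constant $\lambda\alpha\gamma$ is $\lambda\alpha$ (not $\lambda$), which is simply the dual variable of the inequality constraint rescaled by $\alpha$. This bound will later be applied by optimizing over $\lambda$ to tighten the estimate for ${\rm LP}_{2}({\cal G}_{\pi})$.
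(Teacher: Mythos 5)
Your proof is correct, and it reaches the paper's bound by the same underlying principle (weak duality) but with a noticeably leaner execution. The paper forms the full Lagrangian $L(\xv,\lambda,\mu,\boldsymbol{\nu}) = \alpha\sv^{\rm T}\xv + \beta - \lambda(\gv^{\rm T}\xv-\gamma) - \mu({\bf 1}^{\rm T}\xv-1) + \boldsymbol{\nu}^{\rm T}\xv$, computes the dual function $h(\lambda,\mu,\boldsymbol{\nu})=\sup_{\xv}L$, observes it is finite only when $\alpha\sv-\lambda\gv-\mu{\bf 1}+\boldsymbol{\nu}=0$, and then picks the optimal equality multiplier $\mu^{*}=\max_{i}\{\alpha s_{i}-\lambda g_{i}\}$ before rescaling $\tilde\lambda = \lambda/\alpha$ to reach the stated form. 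You instead dualize only the single inequality constraint (with penalty weight $\lambda\alpha$, which is exactly the rescaling the paper performs at the end) and handle the constraints ${\bf 1}^{\rm T}\xv=1$, $\xv\geq 0$ combinatorially: a linear functional of a probability vector is dominated by its largest coefficient. That convex-combination step plays precisely the role of the paper's optimal choice of $\mu^{*}$, so the two arguments are mathematically equivalent; what yours buys is that it never introduces the multipliers $\mu$ and $\boldsymbol{\nu}$ or the dual function at all, making the proof self-contained and pointwise (valid for every feasible $\xv$, hence for the supremum). What the paper's route buys is that it exhibits the full dual LP explicitly, which is the standard template if one later wants strong duality or sensitivity information, though none of that is needed for this lemma. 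Your bookkeeping of the sign conditions $\alpha,\lambda,\gamma\geq 0$ and of the $\lambda\alpha$ versus $\lambda$ distinction is accurate, and the degenerate case of an empty feasible set is harmless since the claimed inequality is then vacuous.
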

\begin{proof}
See Appendix \ref{proof:lem:LP-abc}.  \end{proof}

Using the result of the above lemma, we can show that the solution of the linear program ${\rm LP}_{2}({{\cal G}_{\pi}})$ is upper bounded by
\begin{equation} \label{boundLP2}
{\rm opt}({\rm LP}_{2}({{\cal G}_{\pi}}))\leq \frac{|{\cal T}|}{3|{\cal V}|} \cdot \max_{i}\left\{s_{i}-\lambda\cdot{g_{i}} \right\} + 2(\lambda+1)\frac{|{\cal V}_{\rm ex}|}{|{\cal V}|},\vspace{0.1in}
\end{equation}
for any $\lambda\geq 0$, by identifying $\alpha = \frac{|{\cal T}|}{3|{\cal V}|}$, $\beta=2\frac{|{\cal V}_{\rm ex}|}{|{\cal V}|}$  and $\gamma= 6\frac{|{\cal V}_{\rm ex}|}{|{\cal T}|}$.
In order to obtain a tight bound for ${\rm opt}({\rm LP}_{2}({{\cal G}_{\pi}}))$ we are going to choose 
\begin{align}\lambda^{*} =\arg\min_{\lambda\geq 0}\left\{\frac{1}{3}\cdot \max_{i}\left\{s_{i}-\lambda\cdot{g_{i}} \right\} \right\}
\end{align}

\begin{figure}[h]

                \centering
                \includegraphics[width=.55\columnwidth]{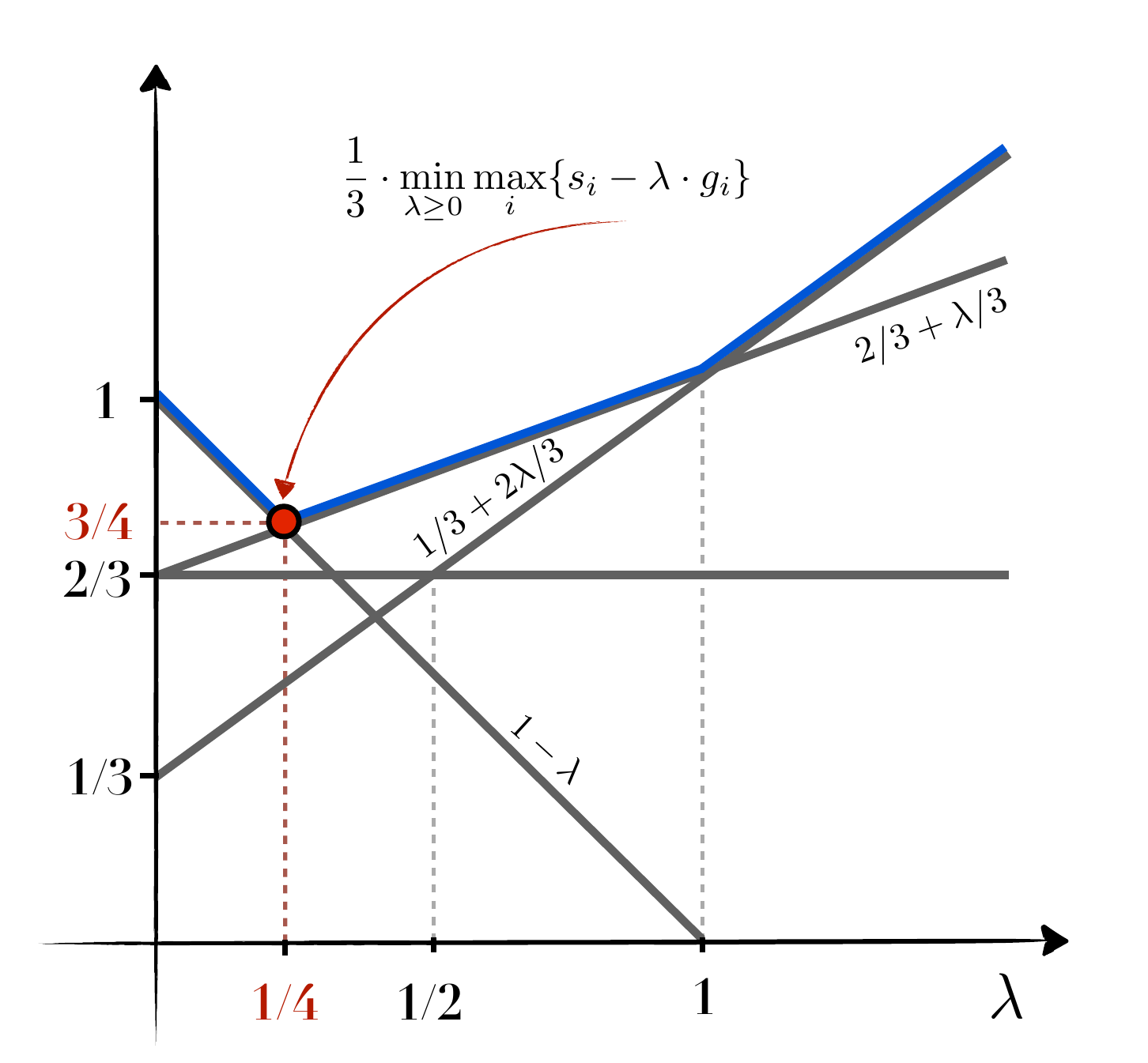}
                \caption{The minimum value of the piecewise linear function $\frac{1}{3}\cdot \max_{i}\left\{s_{i}-\lambda\cdot{g_{i}} \right\}.$}
                \label{minimax}

\end{figure}

The function $\frac{1}{3}\cdot \max_{i}\left\{s_{i}-\lambda\cdot{g_{i}} \right\} = \max\{1/3+2\lambda/3,\;2/3,\;2/3+\lambda/3\;,1-\lambda\}$ is a convex piecewise linear function in $\lambda$ (shown in Fig.~\ref{minimax}) and one can verify that $\lambda^{*}=1/4$ and $\frac{1}{3}\max_{i}\left\{s_{i}-\lambda^{*}\cdot{g_{i}} \right\} =3/4$. Substituting back in (\ref{boundLP2})
we obtain 
\begin{align}
{\rm opt}({\rm LP}_{2}({{\cal G}_{\pi}}))\leq& \frac{|{\cal T}|}{3|{\cal V}|} \cdot \max_{i}\left\{s_{i}-\lambda^{*}\cdot{g_{i}} \right\} + 2(\lambda^{*}+1)\frac{|{\cal V}_{\rm ex}|}{|{\cal V}|}\\
=& \frac{3}{4}\cdot \frac{|{\cal T}|}{|{\cal V}|}  + \frac{5}{2}\cdot\frac{|{\cal V}_{\rm ex}|}{|{\cal V}|}\label{gut}.\vspace{0.1in}
\end{align}

\begin{lemma}   \label{construction1}
By construction, the interference graph $\cal G(V,E)$ satisfies
  \begin{align}&|{\cal T}|\leq |\cal V|, \;\;\;\mbox{and}\\
  &|{\cal V_{\rm ex}}| = {\cal O}\left(\sqrt{|\cal V|}\right).
   \end{align}
\end{lemma}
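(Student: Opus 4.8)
The plan is to establish the two bounds separately: the first by a double-counting identity relating vertices and triangles, and the second by estimating the number of lattice points near the boundary of ${\cal B}_{r}$.

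For the bound $|{\cal T}|\leq|{\cal V}|$, I would count vertex--triangle incidences in two ways, starting from the definition of $n_{v}$ in (\ref{nv}). Summing $n_{v}$ over all $v\in{\cal V}$ counts every triangle once for each of its three vertices, so $\sum_{v\in{\cal V}}n_{v}=3|{\cal T}|$. On the other hand, since each vertex belongs to at most three triangles of ${\cal T}$ (i.e.\ $n_{v}\leq 3$, as already observed after (\ref{nv})), we have $\sum_{v\in{\cal V}}n_{v}\leq 3|{\cal V}|$. Combining the two gives $3|{\cal T}|\leq 3|{\cal V}|$, i.e.\ $|{\cal T}|\leq|{\cal V}|$. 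This step is immediate and uses no geometry beyond $n_{v}\leq 3$.

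For the estimate $|{\cal V}_{\rm ex}|={\cal O}(\sqrt{|{\cal V}|})$, the key observation is that every external vertex sits within bounded lattice-distance of $\partial{\cal B}_{r}$. Indeed, a vertex $v$ with label $p=\phi(v)$ has $n_{v}<3$ only if one of the three triangles of ${\cal P}$ containing $p$ has a vertex lying outside $\ZZ(\omega)\cap{\cal B}_{r}$; the neighbouring triangle vertices are exactly $p\pm 1$, $p\pm\omega$, and $p\pm(1+\omega)$, each at Euclidean distance $1$ from $p$ (since $|1|=|\omega|=|1+\omega|=1$). Hence $v\in{\cal V}_{\rm ex}$ forces $p$ to lie within distance $1$ of the boundary of ${\cal B}_{r}$. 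It then remains to count the Eisenstein points in this boundary strip. Reusing the row decomposition from the proof of the earlier lemma (the one bounding $|{\cal V}_{0}|$), namely the $2r+1$ horizontal lines indexed by the integer $b$ with $|b|\leq r$, the near-boundary points are (i) the ${\cal O}(r)$ points on the top and bottom rows $b=\pm r$, and (ii) the ${\cal O}(1)$ leftmost and rightmost points of each of the $2r+1$ rows, which together contribute ${\cal O}(r)$. Since ${\cal B}_{r}$ is a rectangle of perimeter ${\cal O}(r)$ and $\ZZ(\omega)$ has bounded density, the number of lattice points within distance $1$ of $\partial{\cal B}_{r}$ is ${\cal O}(r)$, so $|{\cal V}_{\rm ex}|={\cal O}(r)$. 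Finally, from $|{\cal V}|=|\ZZ(\omega)\cap{\cal B}_{r}|=4r^{2}+3r$ (for odd $r$; the even case is analogous) we get $r={\cal O}(\sqrt{|{\cal V}|})$, and therefore $|{\cal V}_{\rm ex}|={\cal O}(r)={\cal O}(\sqrt{|{\cal V}|})$.

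The main obstacle is the geometric bookkeeping in the second part: pinning down precisely which vertices are external and verifying that the boundary strip holds only ${\cal O}(r)$ lattice points. This is intuitively the familiar perimeter-versus-area scaling, but I would make it rigorous by leaning on the explicit per-row count already set up in the earlier lemma, reducing the claim to checking that only ${\cal O}(1)$ points per row, plus the two full extreme rows, lie within unit distance of $\partial{\cal B}_{r}$.
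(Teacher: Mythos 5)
Your proof is correct, and both halves take genuinely different routes from the paper's own argument (Appendix~\ref{proof:construction}). For $|\mathcal{T}|\leq|\mathcal{V}|$, the paper explicitly enumerates both sets by slicing $\mathcal{B}_{r}$ into horizontal rows, obtaining $|\mathcal{V}|=4r^{2}+3r$ (plus $1$ for even $r$) and, by pairing each triangle with its leading vertex, $|\mathcal{T}|=4r^{2}-r$; your incidence double count $3|\mathcal{T}|=\sum_{v}n_{v}\leq 3|\mathcal{V}|$ is shorter and works for any finite vertex set cut out of the lattice, with no dependence on the shape of $\mathcal{B}_{r}$. For $|\mathcal{V}_{\rm ex}|$, the paper stays algebraic: it substitutes $x_{v}\equiv 1$ into the identity (\ref{lem1}) proved for Lemma~\ref{lem:objective}, getting $|\mathcal{V}_{\rm ex}|-\sum_{u\in\mathcal{V}_{\rm ex}}n_{u}/3=|\mathcal{V}|-|\mathcal{T}|$, and uses $n_{u}\leq 2$ on $\mathcal{V}_{\rm ex}$ to conclude $|\mathcal{V}_{\rm ex}|\leq 3\left(|\mathcal{V}|-|\mathcal{T}|\right)=12r+3\leq 9\sqrt{|\mathcal{V}|}$ --- so the explicit counts from the first part do double duty and produce an explicit constant. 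Your route is instead geometric (an external vertex label lies within unit distance of $\partial\mathcal{B}_{r}$, because its six triangle-neighbours $p\pm 1$, $p\pm\omega$, $p\pm(1+\omega)$ all sit at distance one, and a width-$1$ strip around a perimeter-$\mathcal{O}(r)$ rectangle holds $\mathcal{O}(r)$ lattice points), which buys independence from the exact enumeration: you only need $|\mathcal{V}|=\Theta(r^{2})$. Two details to tighten if you write it up: (i) since the row spacing is $\sqrt{3}/2<1$, the rows indexed by $b=\pm(r-1)$ also lie within unit distance of the top and bottom edges, so your item (i) must include four rows rather than two (the $\mathcal{O}(r)$ total is unaffected); (ii) ``bounded density'' should be invoked in the form that the strip can be covered by $\mathcal{O}(r)$ unit disks, each containing $\mathcal{O}(1)$ Eisenstein integers, since area alone does not bound lattice-point counts in thin regions.
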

\begin{proof}
See Appendix \ref{proof:construction}.  \end{proof}

Therefore, from (\ref{gut}) and Lemma~\ref{construction1} we obtain 
\begin{equation}
{\rm opt}({\rm LP}_{2}({{\cal G}_{\pi}}))\leq \frac{3}{4}  + {\cal O}\left({1}/{\sqrt{|{\cal V}|}}\right),
\end{equation}
which concludes the proof of Theorem~\ref{thm2}.

\section{Networks with Asymmetric Antenna Configurations}\label{sec:asym}
\begin{thm} \label{thm4}
For a $2\times3$ cellular  system $\cal G(V,E)$, there exist a 
one-shot linear beamforming scheme that  is able to achieve the average (per cell) DoFs,
$d_{{\cal G},{\pi^{*}}} = 1$,
under the network interference cancellation framework with decoding order $\pi^{*}$.
\hfill \QED
\end{thm}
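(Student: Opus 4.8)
The plan is to keep every cell active and to design one-shot beamformers so that each base-station decodes a single stream, i.e. $d_{u}=1$ for all $u\in{\cal V}$. As in the $2\times2$ case, I would first fix the decoding order $\pi^{*}$ and apply network interference cancellation. Tracing the six neighbors of a node through the ordering of Section~\ref{sec:NICE} shows that after cancellation every receiver $u$ retains interference from exactly the three neighbors decoded after it, namely $u+1$, $u-\omega$ and $u-\omega-1$ (symmetrically, the transmitter of $u$ interferes only at $u+\omega$, $u+\omega+1$ and $u-1$). Each interferer contributes a single vector $\Hm_{uv}\vv_{v}\in\mathbb{C}^{3}$, so with $N=3$ receive antennas it suffices to make the three interference vectors linearly dependent: they then span at most a two-dimensional subspace, and choosing $\uv_{u}$ in its one-dimensional orthogonal complement yields the interference-free observation~(\ref{form}). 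Thus $d_{u}=1$ reduces to the single scalar condition
\begin{equation}
\det\!\big[\,\Hm_{u,u+1}\vv_{u+1},\;\Hm_{u,u-\omega}\vv_{u-\omega},\;\Hm_{u,u-\omega-1}\vv_{u-\omega-1}\,\big]=0 .
\label{plan:det}
\end{equation}

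The key observation is that~(\ref{plan:det}) is multilinear in the three beamformers, hence linear in each one separately: given any two of $\vv_{u+1},\vv_{u-\omega},\vv_{u-\omega-1}$, the third is pinned down as a unique projective direction, namely the beam that the $3\times2$ channel $\Hm_{u,u-\omega-1}$ maps into the plane spanned by the other two interference vectors. I would exploit this to assign to each receiver the task of fixing exactly one beamformer. The natural translation-invariant choice is to let receiver $u$ determine $\vv_{u-\omega-1}$ from the already-chosen $\vv_{u+1}$ and $\vv_{u-\omega}$; since $u\mapsto u-\omega-1$ is a bijection of $\ZZ(\omega)$, this is a perfect matching between the alignment conditions and the transmit beamformers, so every condition is used once and every beamformer is set once.

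The step I expect to be the crux is establishing that these coupled conditions are \emph{simultaneously} satisfiable over the whole large network. Here the assignment above turns out to be acyclic: rewriting the rule in transmit-index form, $\vv_{v}$ is determined from $\vv_{v+1}$ and $\vv_{v+\omega+2}$, and both $v+1$ and $v+\omega+2$ have strictly larger real part than $v$ (by $1$ and $3/2$ respectively). Consequently the dependency graph has no directed cycle---any cycle would force a strictly positive net increase of the real part---and the beamformers can be computed sequentially in order of decreasing ${\rm Re}\big(\phi(v)\big)$, with the finitely many top-right boundary terminals (those whose dependencies fall outside ${\cal B}_{r}$) chosen freely. This is the analogue of, but simpler than, the cyclic eigenvector solution~(\ref{eigsol}) of the $2\times2$ case, where each cluster contained a closed chain of three alignment constraints and hence could not be resolved by pure forward substitution.

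It then remains to check the two easy endpoints. For genericity, note that each direct channel $\Hm_{uu}$ appears only in the desired term and nowhere in~(\ref{plan:det}); since the $\vv_{u}$ are built from the cross-channels alone, $\Hm_{uu}\vv_{u}$ almost surely lies outside the two-dimensional interference span, so $\uv_{u}^{\rm H}\Hm_{uu}\vv_{u}\neq0$ and $d_{u}=1$. For the boundary, any receiver missing one of its three interferers (because the corresponding neighbor lies outside ${\cal B}_{r}$) sees at most two interference vectors in $\mathbb{C}^{3}$, so~(\ref{plan:det}) holds automatically; such receivers therefore also attain $d_{u}=1$, and their conditions never obstruct the sequential construction. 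Hence $d_{u}=1$ for every $u\in{\cal V}$ and $d_{{\cal G},\pi^{*}}=1$, as claimed.
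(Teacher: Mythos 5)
Your proposal is correct and follows essentially the same route as the paper's proof: keep all cells active, use the decoding order $\pi^{*}$ to reduce each receiver's interference to the three neighbors $u+1$, $u-\omega$, $u-\omega-1$, force these three interference vectors into a two-dimensional subspace of $\CC^{3}$ by letting one interferer's beamformer be a function of the other two, and resolve all conditions by forward substitution since the dependency pattern is acyclic, with free choices at the boundary. The only (immaterial) difference is the mirror-image matching: the paper has receiver $u$ zero-force its right neighbor and solve for $\vv_{u-\omega}$ in terms of $\vv_{u-\omega-1}$ and $\vv_{u+1}$, so dependencies flow left-to-right and top-down, whereas you solve for $\vv_{u-\omega-1}$ in terms of $\vv_{u-\omega}$ and $\vv_{u+1}$, so dependencies flow toward decreasing real part.
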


\begin{thm} \label{thm5}
For a $2\times4$ cellular system $\cal G(V,E)$, there exist a 
one-shot linear beamforming scheme that is able to achieve the average (per cell) DoFs,
$d_{{\cal G},{\pi^{*}}} = {7}/{6}$,
under the network interference cancellation framework with decoding order $\pi^{*}$.
\hfill \QED
\end{thm}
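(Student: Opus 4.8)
The plan is to follow the template of the proof of Theorem~\ref{thm1}: fix a periodic stream allocation over the Eisenstein lattice, reduce the interference graph via $\pi^{*}$ and network interference cancellation, partition the resulting directed graph into small isomorphic clusters, and design the beamformers cluster-by-cluster. The new ingredient relative to the $2\times2$ case is that the channels $\Hm_{rw}$ are now tall $4\times2$ matrices with trivial kernel, so no interference can be cancelled by steering at the transmitter; all nulling must be performed at the receivers, where the two extra dimensions ($N=4$) provide the needed degrees of freedom.

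First I would choose the allocation that assigns $d_{v}=2$ to the cells whose labels $\phi(v)$ lie on a suitable sublattice $\Lambda_{2}\subset\mathbb{Z}(\omega)$ of index~$6$ (picked so that no two such cells are adjacent in ${\cal G}$), and $d_{v}=1$ to every remaining cell. Since $\Lambda_{2}$ contains a fraction $1/6$ of the lattice points, this gives $\frac{1}{|{\cal V}|}\sum_{v}d_{v}=\frac{1}{6}\cdot2+\frac{5}{6}\cdot1=\frac{7}{6}$ up to a boundary term, which vanishes by the counting of Lemma~\ref{construction1}. Under $\pi^{*}$ and NIC, each internal receiver $r$ retains interference only from its three later-decoded neighbors, i.e.\ the cells with labels $\phi(r)+1,\ \phi(r)-\omega,\ \phi(r)-\omega-1$, so the achievability reduces to confining the incoming interference at every active receiver to the orthogonal complement of a $d_{r}$-dimensional receive subspace.

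The core is the per-cluster beamforming design. At a receiver $r$ with $d_{r}=2$ (a $\Lambda_{2}$-cell, whose three interferers are single-stream by construction) I would choose the $4\times2$ receive beamformer $\Um_{r}$ so that the $2\times2$ matrix $\Um_{r}^{\herm}\Hm_{rw}$ is singular for each of the three interferers $w$: these are three scalar conditions $\det(\Um_{r}^{\herm}\Hm_{rw})=0$ on $\Um_{r}\in\mathrm{Gr}(2,4)$, which has complex dimension $4$ and is therefore generically solvable. The interfering beamformer is then forced to the line $\vv_{w}\in\ker(\Um_{r}^{\herm}\Hm_{rw})$, i.e.\ the receiver aligns each single-stream interferer into a direction it nulls. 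At a receiver with $d_{r}=1$ the (at most) three incoming streams occupy at most three of the four dimensions, and a one-dimensional $\uv_{r}$ orthogonal to them always exists (the beamformers of any single-stream interferers being adjusted through their kernels exactly as above, and a two-stream interferer being nulled on its full column space). In both cases genericity of the gains keeps the projected desired channel $\Um_{r}^{\herm}\Hm_{rr}\vv_{r}$ full rank with probability one, so each active cell attains its nominal $d_{r}$.

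The step I expect to be the main obstacle is \emph{global consistency}: each single-stream transmitter $w$ has only one projective beamforming degree of freedom ($M-d_{w}=1$), yet $\vv_{w}$ is demanded simultaneously by every later-decoded receiver into which $w$ interferes. I would therefore have to choose the cluster boundaries (the analogue of ${\cal S}(z)$ translated over $\Lambda_{2}$ and its offsets) so that, just as in (\ref{eigsol}), each transmit beamformer is pinned inside a single cluster and the only coupling around a closed interference cycle collapses to an eigenvector condition on a product of channel matrices, solvable for generic gains. Verifying that such a decoupling partition exists for this denser $2\times4$ graph, that the induced system is never over-determined (this is exactly where the index-$6$ spacing of the $\Lambda_{2}$-cells is needed, so that the local counting of alignment equations never exceeds the available beamformer variables), and that all the resulting non-degeneracy events hold almost surely, is the crux. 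Once established, the cluster solution tiles the whole network and the averaged DoFs equal $7/6$ up to the vanishing ${\cal O}\!\left(1/\sqrt{|{\cal V}|}\right)$ boundary correction.
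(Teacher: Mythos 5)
Your sketch gets the right allocation (a $1/6$ fraction of cells with $d_v=2$, which is exactly the paper's choice of half the nodes of every third horizontal line), the right reduction of the in-neighborhood under $\pi^{*}$ to $\phi(r)+1,\ \phi(r)-\omega,\ \phi(r)-\omega-1$, and one of the right local tools (making a projected channel singular and steering the interferer into its kernel). But there is a genuine gap, and the specific pinning rule you adopt makes the scheme fail rather than merely remain unverified. First, your claim that a $d_r=1$ receiver sees ``at most three incoming streams'' is false: any $d=1$ cell that hears a two-stream $\Lambda_2$ cell plus its two other in-neighbors sees $2+1+1=4$ streams; this is precisely the situation of receivers $b$ and $f$ in the paper's Fig.~\ref{iasol2x4}. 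Second, in your scheme every $\Lambda_2$ receiver pins \emph{all three} of its single-stream interferers (each $\vv_w$ forced into $\ker(\Um_r^{\herm}\Hm_{rw})$). Take $\Lambda_2$ to be the paper's black nodes (a legitimate index-$6$, non-adjacent choice) and consider a middle-line $d=1$ receiver $b$ lying between two black cells $d$ (left) and $d'$ (right): its four incoming streams are the two random streams of $d'$ plus the streams of its two bottom-line in-neighbors, which have already been pinned by $d$ and $d'$ respectively. All four interfering directions are generic and fixed independently of $b$, they span $\mathbb{C}^{4}$, and $b$ retains no design freedom anywhere; it cannot decode. So the ``global consistency'' issue you correctly flag as the crux is not just left open --- your assignment of who pins whom is over-determined, and the counting argument you offer in its place is wrong.

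The paper resolves this with an acyclic pinning order that is the actual content of the proof: (i) stripes of three lines decouple because bottom-line receivers hear only $3<4$ single streams and zero-force them, so top-line transmitters owe nothing to the stripe above; (ii) black nodes transmit \emph{random} two-dimensional subspaces and pin none of their interferers a priori; (iii) the square receiver $b$ nulls its three out-of-cluster streams with $\uv_b$ and pins only $\vv_a$ via $\uv_b^{\herm}\Hm_{ba}\vv_a=0$; (iv) only then does the black receiver $d$ choose $\vv_b,\vv_c$ --- not by singular projections, but via the rank-two alignment lemma giving $\Hm_{db}\vv_b+\Hm_{dc}\vv_c=\gamma\,\Hm_{da}\vv_a$, so its three interferers occupy a two-dimensional subspace that $\Um_d$ zero-forces; (v) each top-line transmitter is pinned by its left diamond neighbor exactly as in your kernel trick. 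Every transmitter is thus pinned at most once, and the dependencies propagate in one direction from the network boundary, so no closed cycle arises and no eigenvector condition (your conjectured analogue of the $2\times2$ solution (\ref{eigsol})) is needed. Without this ordering --- or an equivalent who-pins-whom design, which is exactly what your proposal defers --- the sketch does not establish Theorem~\ref{thm5}.
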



\subsection{Achievability for $M=2$, $N=3$. (Proof of Theorem~\ref{thm4})}

Here, we  will focus on  the case where mobile terminal transmitters and  base-station 
receivers are equipped with $M=2$ and $N=3$ antennas  and describe the linear beamforming scheme  that is able to achieve one DoF per cell for the entire network.

Consider the directed interference graph ${\cal G}_{\pi^{*}}({\cal V}, {\cal E}_{\pi^{*}})$ shown in Fig.~\ref{2x3} and assume that all  user terminals $v\in \cal V$  are simultaneously transmitting their signals  $\xv_{v}= \vv_{v}s_{v}$ to their corresponding receivers. Following  the ``left-to-right, top-down'' decoding order $\pi^{*}$ introduced in Section \ref{sec:NICE}, 
each base-station  will attempt to decode its own desired message from the  three-dimensional receiver observation
\begin{equation}
\yv_{u} = \Hm_{uu}\vv_{u}s_{u} + \sum_{v : [v,u]\in {\cal E}_{\pi^{*}}} \Hm_{uv}\vv_{v}s_{v} +\zv_{u},
\end{equation}
by projecting  along  $\uv_{u}\in \CC^{3\times 1}$.
As before, the goal is to design $\vv_{u}\in \CC^{2\times 1}$ and $\uv_{u}\in \CC^{3\times 1}$ for all $u\in \cal V$ such that all interference is zero forced and the corresponding messages can be decoded from the projected observations $\uv_{u}^{\rm H}\yv_{u} = \uv_{u}^{\rm H}\Hm_{uu}\vv_{u}s_{u} + \hat z_{u}$. 

In order to show achievability, we will first focus on the cells $a$, $b$, $c$ and $d$ shown in Fig.~\ref{2x3} and then describe how the corresponding solution can be extended across the entire network.

The receiver $a$  first projects its observation onto the two dimensional subspace orthogonal to $\Hm_{ab}\vv_{b}$, effectively zero-forcing interference from transmitter $b$. The corresponding two-dimensional projected signal at receiver $a$ is given by 
\begin{align}
\Pm^{\perp}_{ab}\yv_{a} = \Pm^{\perp}_{ab}\Hm_{aa}\vv_{a}s_{a} +  \underbrace{\Pm^{\perp}_{ab}\Hm_{ac}\vv_{c}s_{c} + \Pm^{\perp}_{ab}\Hm_{ad}\vv_{d}s_{d}}_{\mbox{interference}}  +\hat \zv_{a},
\end{align}
where $\Pm^{\perp}_{ab}$ is a $2\times 3$ matrix that satisfies $\Pm^{\perp}_{ab}\Hm_{ab}\vv_{b} = {\bf 0}$. Further,  transmitter $d$ can  choose its beamforming vector as a function of $\vv_{c}$, 
\begin{align}
\Pm^{\perp}_{ab}\Hm_{ad}\vv_{d} &\doteq \Pm^{\perp}_{ab}\Hm_{ac}\vv_{c}\; \Leftrightarrow\\ 
\vv_{d}&\doteq (\Pm^{\perp}_{ab}\Hm_{ad})^{-1} \Pm^{\perp}_{ab}\Hm_{ac}\vv_{c}
\end{align}
such that all interference in the above observation is aligned in one dimension, and can hence be zero-forced by projecting along the two-dimensional vector $\tilde \uv_{a}$. Overall, the three-dimensional receiver projection is given by \begin{equation}
\uv_{a} = (\Pm^{\perp}_{ab})^{\rm H}\tilde \uv_{a}.
\end{equation}

The two key steps in the above scheme are depicted in Fig.~\ref{2x3}. Receiver $a$ projects its observation on a two dimensional subspace to zero-force (ZF) interference from cell $b$, and transmitter $d$ chooses $\vv_{d}$ in order to align the remaining interference (IA) with the cell $c$.

\begin{figure}[ht]

                \centering
                \includegraphics[width=.8\columnwidth]{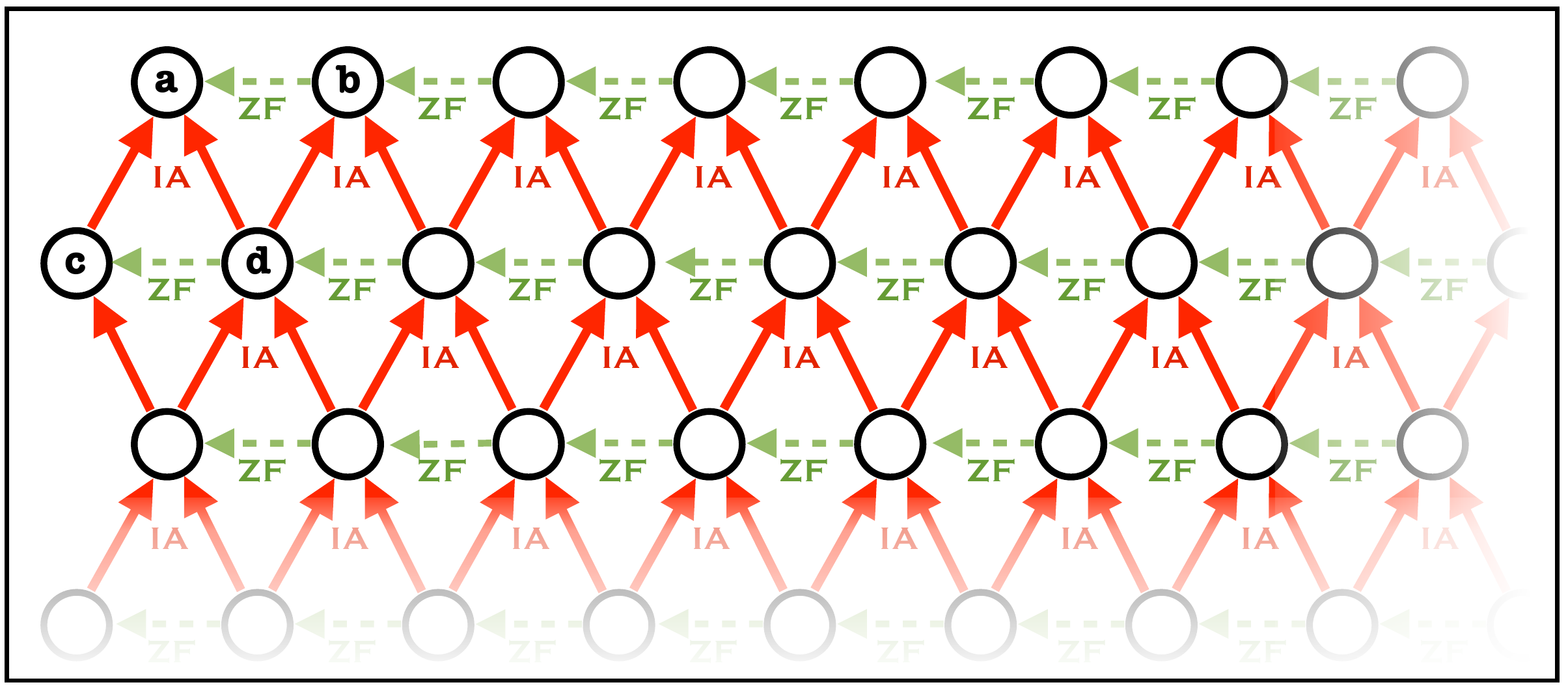}
                \caption{Interference alignment solution when $M=2$, and $N=3$.}
                \label{2x3}

\end{figure}

Exactly the same procedure can be followed to yield an interference alignment solution across the entire network. Starting from the boundary, the receivers $u\in \cal V$ can zero-force interference coming from the cell $v$ with $\phi(v)=\phi(u)+1$ (right neighbor) by projecting onto $\Pm^{\perp}_{uv}$, and the transmitters $v\in\cal V$ can  align their transmitted signals with the transmitter $u$ with $\phi(u)=\phi(v)-1$ (left neighbor) by choosing $\vv_{v}$ as a function of $\vv_{u}$.

\subsection{Achievability for $M=2$, $N=4$. (Proof of Theorem~\ref{thm5})}

In order to describe the interference alignment scheme for this case, we will first partition the directed interference graph ${\cal G}_{\pi^{*}}(\cal V,{\cal E}_{\pi^{*}})$ into {\it horizontal stripes}  as shown in Fig.~\ref{stripes}. Each stripe ${\cal S}_{k}$ contains three consecutive horizontal lines of nodes that we will refer to  as top, middle and bottom according to their relative position within each stripe.

\begin{figure}[ht]

                \centering
                \includegraphics[width=.7\columnwidth]{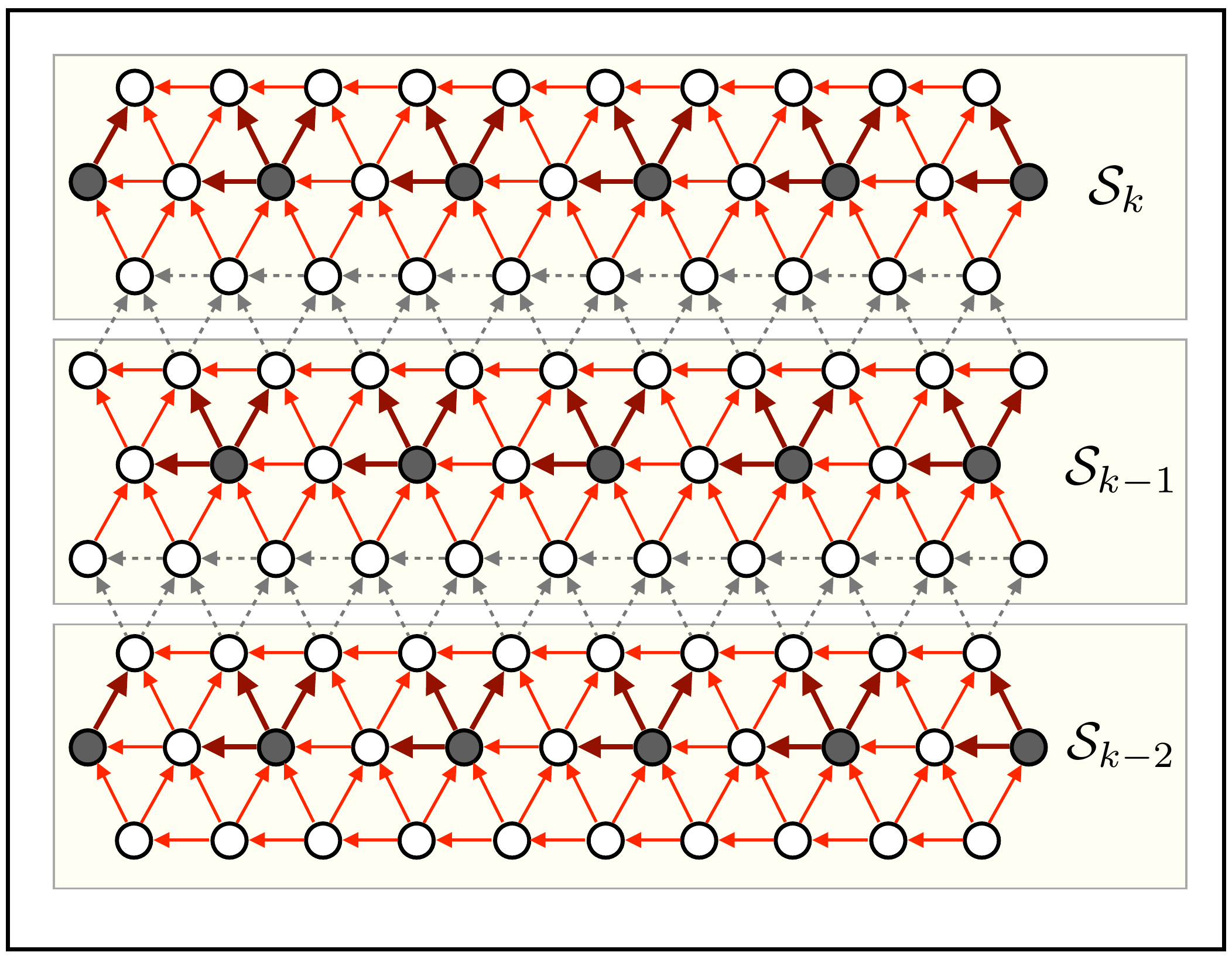}
                \caption{The directed interference graph $\cal G(V,E)$ divided in horizontal stripes.}
                \label{stripes}

\end{figure}

In the following we will describe a transmission scheme in which (at least) half of the nodes located in the middle line of each stripe (black nodes) are able to achieve $d_{v}=2$ while all the remaining nodes in the network (white nodes) are able to achieve $d_{v}=1$.   Let ${\cal V}_{1}$ and  ${\cal V}_{2}$ denote the corresponding subsets of vertices $v \in \cal V$ for which $d_{v}=1$ and $d_{v}=2$ respectively. Since 
 ${\cal V} = {\cal V}_{1} \cup {\cal V}_{2}$, the average (per cell) DoFs that will be achievable in $\cal G(V,E)$ with the above configuration can be written as
\begin{equation}
d_{{\cal G},{\pi}^{*}} = \frac{1}{|{\cal V}|}\sum_{v\in \cal V}d_{v} = \frac{|{\cal V}_{1}|+2|{\cal V}_{2}|}{|{\cal V}|}=1+\frac{|{\cal V}_{2}|}{|{\cal V}|},\label{eqDOFs2x4}
\end{equation}
and since 
${\cal V}_{2}$ can be chosen so that $|{\cal V}_{2}|\geq\frac{1}{6}|{\cal V}|$,
\footnote{Notice that the set ${\cal V}_{2}$  contains at least half of the nodes in every third line of the graph in Fig.~\ref{stripes}. This construction can be generalized for any interference graph $\cal G(V,E)$ so that $|{\cal V}_{2}|/|{\cal V}|\geq 1/6$.} 
we  obtain that $d_{{\cal G},{\pi}^{*}}\geq{7}/{6}$.

First, notice that all the receivers associated with nodes in the bottom line of each stripe ${\cal S}_{k}$ observe two interfering links coming from the transmitters in the top line of ${\cal S}_{k-1}$ and one interfering link coming from their adjacent (right) neighbor in ${\cal S}_{k}$, shown in Fig.~\ref{stripes} with dashed gray arrows. According to our DoF configuration, these interfering links will occupy  three out of the total four dimensions ($N=4$) in each receivers' subspace and can therefore be zero-forced in order to achieve $d_{v}=1$.  This important observation allows us to decouple the beamforming choices between different stripes and propose an interference alignment solution for a single stripe  that can be replicated across the entire network. 

Fig.~\ref{iasol2x4} shows the single stripe of the network that we will focus on for the rest of this section. The nodes are now shown with different shapes (diamonds, squares and circles) to illustrate their distinct roles in the achievability scheme that follows. Further, notice that the interfering edges in the bottom line of this stripe are not shown here since all interference in the corresponding nodes has already been zero forced (to decouple the consecutive stripes of the network). 

\begin{figure}[ht]

                \centering
                \includegraphics[width=.65\columnwidth]{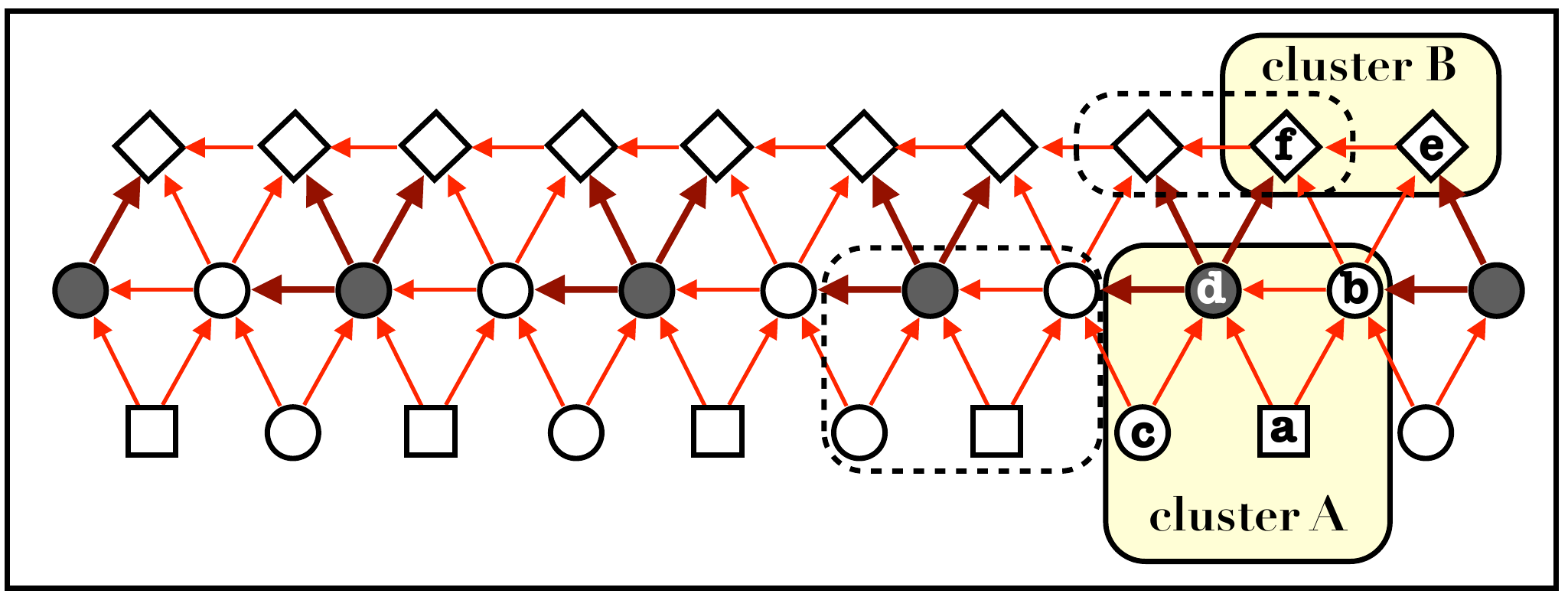}
                \caption{Interference alignment solution in a single stripe when $M=2$, and $N=4$.\vspace{-0.2in}}
                \label{iasol2x4}

\end{figure}

Assume that all the black nodes ($d_{v}=2$) have chosen their two-dimensional beamforming subspaces at random and consider the interference cluster A, shown in Fig.~\ref{iasol2x4}. 
Notice that receiver $d$ observes three interference links (one from each cell in the cluster) and receiver $b$ observers four interference streams in total (one from cell $a$ and three streams coming from the transmitters outside cluster A).

We will first show that all interference observed at receiver $b$ can be aligned in a three dimensional subspace by appropriately choosing the beamforming vector of transmitter $a$. Let $\Sm\in \CC^{4\times 3}$ denote the three-dimensional (out-of-cluster) interference observed at receiver $b$ (shown in Fig.~\ref{clustersAB}) and let $\uv_{b}\in\CC^{4\times 1}$ be a vector in the left nullspace of $\Sm$ (i.e., such that $\uv_{b}^{\rm H}\Sm=0$). The receiver $b$ can project its observation along $\uv_{b}$ to zero-force the 
out-of-cluster interference and obtain
\begin{equation}\label{rr}
\uv_{b}^{\rm H}\yv_{b} = (\uv_{b}^{\rm H}\Hm_{bb}\vv_{b})s_{b} +(\uv_{b}^{\rm H}\Hm_{ba}\vv_{a})s_{a} + \hat z. 
\end{equation}
Choosing  $\vv_{a}\in\CC^{2\times 1}$ to be orthogonal to $\Hm_{ba}^{\rm H}\uv_{b}\in\CC^{2\times 1}$ yields $\uv_{b}^{\rm H}\Hm_{ba}\vv_{a}=0$ and the receiver b can decode its message from the above interference-free observation. 
\begin{figure}[ht]

                \centering
                \includegraphics[width=.7\columnwidth]{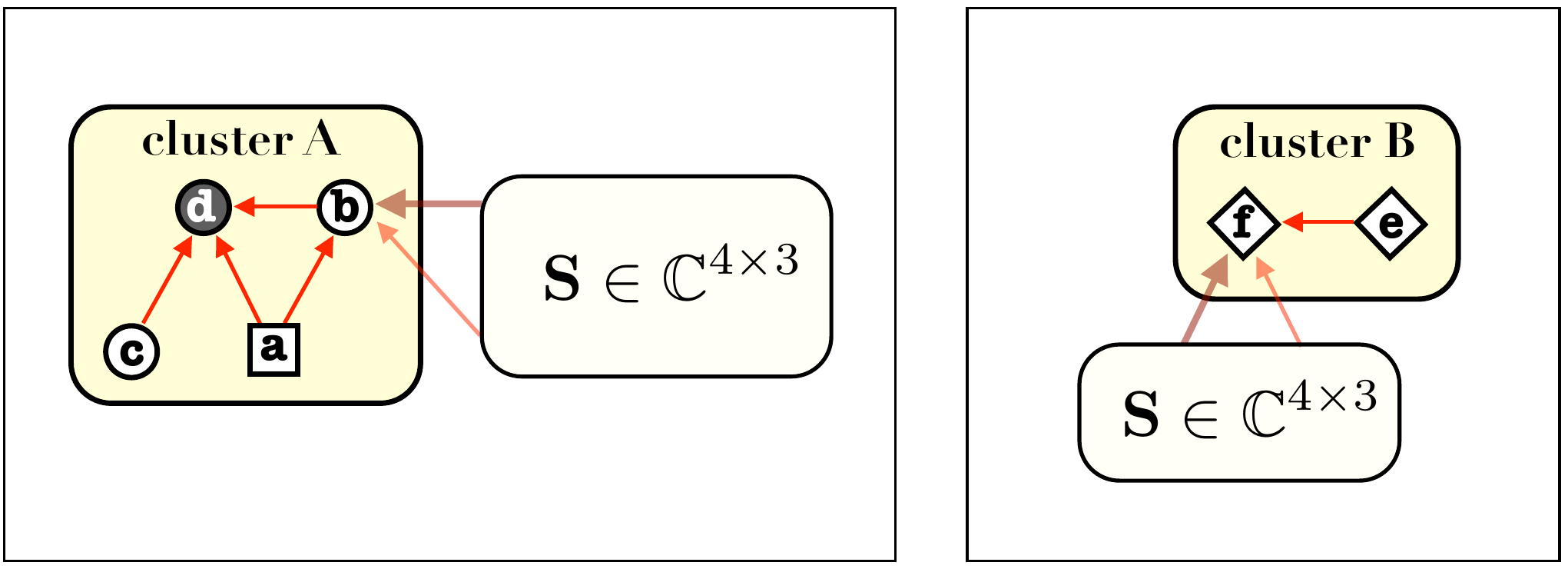}
                \caption{The interference clusters A and B.}
                \label{clustersAB}\vspace{-0.15in}

\end{figure}

Our next objective, after choosing $\vv_{a}$, is to choose $\vv_{b}$ and $\vv_{c}$ such 
that all interference is aligned in a two-dimensional subspace at receiver $d$.
The corresponding observation can be written as
\begin{align}
\yv_{d} &= \Hm_{dd}\Vm_{d}\sv_{d} + \underbrace{\Hm_{da}\vv_{a}s_{a} +\Hm_{db}\vv_{b}s_{b} +\Hm_{dc}\vv_{c}s_{c}}_{\mbox{interference}} + \zv \\
&= \Hm_{dd}\Vm_{d}\sv_{d} + \Gm\tilde\sv+ \zv, 
\end{align}
where $\Gm \triangleq {[\Hm_{da}\vv_{a}\; \Hm_{db}\vv_{b} \; \Hm_{dc}\vv_{c}]}\in \CC^{4\times 3}$ and $\tilde\sv\triangleq
[s_{a} \, s_{b} \, s_{c}]^{\rm T}
$.

\begin{lemma}
For any $\vv_{a}\neq 0$, there exist $\vv_{b}$, $\vv_{c}\in\CC^{2\times 1}$ with $||\vv_{b}||\leq 1$ and $||\vv_{c}||\leq 1$ such that ${rank}(\Gm)= 2$ with probability one, assuming that $\Hm_{da}$, $\Hm_{db}$, and $\Hm_{dc}\in \CC^{4\times 2}$ are chosen at random from a continuous (non-degenerate) distribution. 
\end{lemma}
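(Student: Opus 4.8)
The plan is to produce the two beamforming vectors explicitly by solving a single square linear system, rather than searching over a Grassmannian of candidate alignment subspaces. Write $\wv \triangleq \Hm_{da}\vv_{a}$, which is a fixed nonzero vector in $\CC^{4}$ with probability one (since $\vv_{a}\neq 0$ and $\Hm_{da}$ is drawn from a continuous distribution). The matrix $\Gm=[\wv\;\;\Hm_{db}\vv_{b}\;\;\Hm_{dc}\vv_{c}]$ has $\mathrm{rank}(\Gm)=2$ exactly when its three columns are linearly dependent but not all collinear. My first step is to force the dependence in the simplest possible way: I require the three columns to sum to zero, i.e. $\Hm_{db}\vv_{b}+\Hm_{dc}\vv_{c}=-\wv$.

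The key observation is that $[\Hm_{db}\;\;\Hm_{dc}]$ is a $4\times 4$ matrix, hence invertible with probability one. Therefore the stacked system
\begin{equation*}
[\Hm_{db}\;\;\Hm_{dc}]\begin{bmatrix}\vv_{b}\\ \vv_{c}\end{bmatrix}=-\wv
\end{equation*}
has a unique solution $(\vv_{b}^{0},\vv_{c}^{0})$, and by construction $\wv+\Hm_{db}\vv_{b}^{0}+\Hm_{dc}\vv_{c}^{0}=0$, which immediately yields $\mathrm{rank}(\Gm)\leq 2$.

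The second step is to certify $\mathrm{rank}(\Gm)=2$, i.e. to rule out rank one. It suffices to show that $\wv$ and $\Hm_{db}\vv_{b}^{0}$ are linearly independent almost surely. I would argue this by contradiction within the genericity framework: if $\vv_{b}^{0}=0$, the relation forces $\wv\in\mathrm{col}(\Hm_{dc})$; and if $\Hm_{db}\vv_{b}^{0}$ were a nonzero multiple of $\wv$, the relation forces $\wv\in\mathrm{col}(\Hm_{dc})$ (or $\vv_{c}^{0}=0$, hence $\wv\in\mathrm{col}(\Hm_{db})$). Each such event places $\wv=\Hm_{da}\vv_{a}$ in a fixed two-dimensional column space of $\Hm_{db}$ or $\Hm_{dc}$, an event of probability zero because $\wv$ is generated by the independent randomness in $\Hm_{da}$. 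Collecting these finitely many measure-zero exceptional sets gives $\mathrm{rank}(\Gm)=2$ with probability one.

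Finally, I handle the norm constraint using scale invariance. The rank of $\Gm$ depends only on the spans of its columns, so replacing $\vv_{b}^{0},\vv_{c}^{0}$ by $\vv_{b}=\vv_{b}^{0}/\max\{1,\|\vv_{b}^{0}\|\}$ and $\vv_{c}=\vv_{c}^{0}/\max\{1,\|\vv_{c}^{0}\|\}$ rescales columns two and three by nonzero positive factors, leaving $\mathrm{rank}(\Gm)=2$ unchanged while guaranteeing $\|\vv_{b}\|\leq 1$ and $\|\vv_{c}\|\leq 1$. The main obstacle I anticipate is not the construction itself — a one-line linear solve — but the careful bookkeeping of the probability-zero configurations in the second step; in particular one must verify that the unique solution of the square system has both blocks generically nonzero and non-collinear with $\wv$, so that the alignment genuinely engages all three transmitters and produces rank exactly two rather than collapsing to rank one.
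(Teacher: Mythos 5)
Your proposal is correct and is essentially the paper's own proof in disguise: the paper takes a nullspace vector $\xv$ of $\Fm=[\Hm_{da}\vv_{a}\;\;\Hm_{db}\;\;\Hm_{dc}]\in\CC^{4\times 5}$ and normalizes by its first entry $x_{1}$, which is precisely your unique solution of the square system $[\Hm_{db}\;\;\Hm_{dc}]\left[\vv_{b}^{0};\vv_{c}^{0}\right]=-\Hm_{da}\vv_{a}$ (up to sign), followed by a rescaling to meet the norm constraints. The remaining differences are minor: the paper rescales both vectors by a common factor $\gamma$ while you rescale them individually (both are valid, since scaling columns by nonzero factors preserves rank), and your explicit measure-zero argument ruling out the rank-one collapse is in fact spelled out more carefully than the paper's ``one can check.''
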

\begin{proof}
Consider the matrix $\Fm =  [\Hm_{da}\vv_{a}\; \Hm_{db} \; \Hm_{dc}] \in \CC^{4\times 5}$ and let 
$\xv =[x_1\,x_2\,x_3\,x_4\,x_5]$ be a vector in the nullspace of $\Fm$ such that $\Fm\xv=0$.
Since $\Fm$ is generic, i.e., the corresponding matrices have been chosen from a non-degenerate distribution, we have that $x_{i}\neq 0,$ $\forall i$, with probability~one.
Let $\tilde \vv_{b} = [-x_{2}/x_{1}\;-x_{3}/x_{1}]^{\rm T}$, $\tilde \vv_{c} = [-x_{4}/x_{1}\;-x_{5}/x_{1}]^{\rm T}$, $\gamma = \min\big\{\frac{1}{||\tilde \vv_{b}||},\frac{1}{||\tilde \vv_{c}||}\big\}$ and set $\vv_{b}= \gamma\cdot\tilde \vv_{b}$ and $\vv_{c}= \gamma\cdot\tilde \vv_{c}$. Then one can check that  $\Hm_{db}\vv_{b} +\Hm_{dc}\vv_{c} = \gamma\cdot \Hm_{da}\vv_{a} $, and therefore show that ${rank}\big([\Hm_{da}\vv_{a}\; \Hm_{db}\vv_{b} \; \Hm_{dc}\vv_{c}]\big)={rank}(\Gm)= 2$. 
\end{proof}

From the above lemma we can see that  receiver $d$ will observe interference aligned in two dimensions that can be zero-forced by projecting $\yv_{d}$ along $\Um_{d}\in \CC^{4\times 2}$. The corresponding symbols $\sv_{d}$ can then be decoded from the interference-free observation
\begin{equation}
\Um_{d}^{\rm H}\yv_{d} = \Um_{d}^{\rm H}\Hm_{dd}\Vm_{d}\sv_{d} + \hat\zv.
\end{equation}

Up to this point we have shown that all the cells in cluster A (nodes $a$, $b$, $c$ and $d$) are able to decode their desired messages. Going back to Fig.~\ref{iasol2x4}, we can see that we can follow exactly the same beamforming procedure for the adjacent cluster of cells that is highlighted in a dash rectangle.
In a similar manner we can show that all the cells in the middle and bottom lines of the corresponding stripe can successfully decode their desired messages.

To conclude our proof we have to show that all the cells associated with diamond nodes (top line of the stripe) can also decode their corresponding messages. Towards this end,
consider cluster B shown in Fig.~\ref{iasol2x4} and notice that receiver $f$ observes one interfering stream from transmitter $e$ and three interfering streams from (out-of-cluster) transmitters $d$ and $b$.
Since $\Vm_{d}$ and $\vv_{b}$ have already been chosen (in cluster A), our only option is to design $\vv_{e}$ such that \begin{equation}
\Hm_{fe}\vv_{e}\in \mbox{span}\big(\underbrace{[\Hm_{fd}\Vm_{d}\;\;\Hm_{fb}\vv_{b}]}_{\triangleq\Sm\in \mathbb{C}^{4\times 3}}\big).\end{equation} 
As we have seen before (for receiver $b$), this is possible by choosing $\uv_{f}$ in the left nullspace of $\Sm$ and $\vv_{e}$ to be orthogonal to $\Hm_{fe}^{\rm H}\uv_{f}$. In a similar fashion, we can extend the above beamforming choices for all diamond nodes so that the corresponding receivers will observe all interference aligned in a three-dimensional subspace that can be zero-forced allowing them to successfully decode their desired symbols and achieve $d_{v}=1$.    
\vspace{0.1in} 
 
 \section{Conclusions}

In this work we have extended our previously proposed framework of Cellular IA to the case of 
omni-directional base-stations 
with user terminals  equipped with $M=2$ transmit antennas and we provided one-shot linear interference alignment schemes for the practically relevant cases in which base-stations  have $N=2,3$ and $4$ receive antennas. 
Omni-directional cells yield an interference graph topology that is significantly more involved than 
the previously studied sectored case. Nevertheless, omni-directional antennas are commonly used in small and densely deployed
cells, which is an on-going technology trend, and therefore deserve to be studied. 

Our schemes apply to the uplink of a cellular system, in which  base-stations are able to exchange locally decoded messages with their neighboring cells, 
but are not allowed to use received signal sharing and joint centralized decoding. This is in sharp contrast with the existing  ``distributed antenna systems'' and   
``Network MIMO'' approaches, which are much more demanding in terms of the backhaul throughput requirements. Further, it is worth mentioning that our schemes and the proposed architecture can be  implemented within the current LTE technology using the already available cellular network infrastructure.
For the case of $2\times 2$ links, we have shown that $3/4$ DoFs per user (i.e., per cell) are achievable in one-shot (without symbol extensions) by linear interference alignment precoding. Moreover, we have proven that in this case the achievable $3/4$ DoFs per user are asymptotically optimal
for a large extended network, where the boundary effect of the cells at the edge of the network vanishes.
For the practically relevant cases of $2\times 3$ and $2\times 4$ links, we have provided explicit one-shot linear IA constructions to achieve 1 and $7/6$ DoFs per user respectively.

Our converse for the $2\times 2$ case is based 
on the 
bounds 
on the degrees of freedom feasible by one-shot precoding given in \cite{Razaviyayn},\cite{Bresler}. Applying the corresponding bounds in extended networks results in a challenging non-linear optimization problem over a number of variables that grows with the number of transmit-receive pairs in the network.  Taking into account specific structural properties of the cellular interference graph, we  reformulated (relaxed) the above optimization problem into a tractable linear program with a small number of variables that does not depend on the size of the network and provided a tight converse bound for the $2\times 2$ case using duality theory.  This approach can be generalized to the symmetric $M \times M$ case, 
and is a technique of independent interest.  Overall, our work points out the importance of exploiting the network topology in 
extended (large) network models, and local decoded message sharing through backhaul links, which is not a demanding task
for the backhaul (e.g., it can be implemented by IP tunneling over a shared Fiber-Optical infrastructure) and, yet, can provide unbounded capacity gains
with respect to conventional approaches, when combined with carefully designed IA precoding schemes.

%
%

Further, we would like to comment about the nearest neighbor inter-cell interference assumption we made in this paper. 
In practice, interference from cells located further away may be relevant and one could think of settings where it should be explicitly taken into account. However,  if the effective channel coefficients of the scheme proposed in this paper (after alignment and cancellation) satisfy the TIN optimality conditions in 
\cite{gnaj13}, treating non-neighbor inter-cell interference as noise yields approximately optimal rates. 
Even when this is not the case, there is always the option to partition the network into coarser subnetworks using classical frequency 
reuse schemes, such that the TIN optimality conditions hold for each subnetwork.

Finally, it is worth pointing out that the DoFs gains obtained by ``Cellular IA'' in the uplink can also be obtained in a ``dual'' framework 
for the downlink of cellular networks with the same backhaul architecture.  In particular, in \cite{nmc14duality} we showed that for every one-shot IA 
scheme that can achieve $d$ DoFs per user in the uplink, there exists a ``dual''  Cellular IA scheme for the downlink that achieves the same DoFs.  
The dual scheme is based on dirty-paper coding \cite{costaDPC}, and the base station transmitters 
share quantized versions of their dirty-paper precoded signal with their neighboring base stations, with the same
backhaul requirements both in terms of local communication and in terms of per-link rate. The details of this duality and 
the Cellular IA schemes for the downlink are deferred to a future paper.

Altogether, we believe that the proposed Cellular IA framework can provide valuable engineering insights towards realizing the potential gains of interference alignment in current cellular technology, and lead to the design of efficient transmission schemes that meet the increasing bandwidth demands 
of  today's wireless networks.

\clearpage
\appendices
\section{Proof of Lemma \ref{lem:objective}}\label{proof:objective}
From the definition of $n_{v}$ in (\ref{nv}), we have that
\begin{equation}
\sum_{[a,b,c]\in \cal T}\left({d_{a}+d_{b}+d_{c}}\right) = \sum_{v\in \cal V}n_{v}d_{v}.
\end{equation}
Splitting the sum in terms of $\cal V_{\rm in}$ and $\cal V_{\rm ex}$ we get
\begin{eqnarray}
\sum_{v\in \cal V}n_{v}d_{v} &=&\sum_{v\in \cal V_{\rm in}}n_{v}d_{v} + \sum_{u\in \cal V_{\rm ex}}n_{u}d_{u}\\
&=& 3\sum_{v\in \cal V}d_{v} + \sum_{v\in \cal V_{\rm in}}(n_{v}-3)d_{v} + \sum_{u\in \cal V_{\rm ex}}(n_{u}-3)d_{u} \\
&=& 3\sum_{v\in \cal V}d_{v} + \sum_{v\in \cal V_{\rm ex}}(n_{v}-3)d_{v}\,,
\end{eqnarray}
where the last step follows from the fact that $n_{v} =3$ for all $v \in \cal V_{\rm in}$. Rearranging the terms and dividing by $3$ gives
\begin{equation}
\sum_{v\in\cal V}d_{v} = \hspace{-0.1in} \sum_{[a,b,c]\in \cal T}\left(\frac{d_{a}+d_{b}+d_{c}}{3}\right) + \sum_{u\in {\cal V}_{\rm ex}}\left(1-\frac{n_{u}}{3}\right)d_{u},\label{lem1}
\end{equation}
and hence , we can rewrite the average DoFs  as 
\begin{equation}
\frac{1}{|{\cal V}|} \sum_{v\in \cal V}d_{v} = \frac{1}{3|{\cal V}|}\sum_{[a,b,c]\in \cal T}
\hspace{-0.09in}(d_{a}+d_{b}+d_{c}) + \frac{C_{\rm ex}}{|\cal V|}\, ,
\label{eq24}
\end{equation}  
where 
 $$C_{\rm ex} = \sum_{v\in {\cal V}_{\rm ex}}\left(1-\frac{n_{v}}{3}\right)d_{v}.$$
 
Since $0\leq n_{v}<3$ and $0\leq d_{v}\leq M$, $\forall v\in \cal V$ we have that 

\begin{equation}C_{\rm ex}\leq \sum_{v\in \cal V_{\rm ex}}d_{v}\leq {M|{\cal{V}_{\rm ex}}|},\end{equation} and hence
\begin{align}
  \frac{1}{|{\cal V}|} \sum_{v\in \cal V}d_{v} \leq  \frac{1}{3|{\cal V}|}\sum_{[a,b,c]\in \cal T}
(d_{a}+d_{b}+d_{c})  + \frac{M|{\cal{V}_{\rm ex}}|}{|\cal V|}.
  \end{align}

\section{Proof of Lemma \ref{lem:constraint}}\label{proof:constraint}

First, observe that each triangle $[a,b,c]\in \cal T$ uniquely covers the three corresponding edges in  ${\cal E}_{\pi}$, and since 
$d_{v}\geq 0$ for all $v \in \cal V$ we can lower bound the right-hand side of (\ref{IAfeasibility3}) by $$\displaystyle \sum_{[u,v]\in\cal E_{\pi}}d_{u}d_{v} \geq  \sum_{[a,b,c]\in \cal T}\left({d_{a}d_{b}+d_{a}d_{c}+d_{b}d_{c}}\right).$$ Then, using (\ref{lem1}), we can rewrite the left-hand side of (\ref{IAfeasibility3}) as 
 \begin{align*}
 2\sum_{v\in \cal V}(M - d_{v})d_{v} &= \sum_{[a,b,c]\in \cal T}
 \frac{2}{3}\Big((M - d_{a})d_{a} + (M - d_{b})d_{b} +(M - d_{c})d_{c}\Big) \\
 &\hspace{0.5in}+ 2\sum_{v\in {\cal V}_{\rm ex}}\left(1-\frac{n_{v}}{3}\right)(M - d_{v})d_{v}\\
 &=\frac{1}{3}\sum_{[a,b,c]\in \cal T}\Big(2M(d_{a}+d_{b}+d_{c}) - 2(d_{a}^{2}+d_{b}^{2}+d_{c}^{2}) \Big)\\
 &\hspace{0.5in}+ \frac{2}{3}\sum_{v\in {\cal V}_{\rm ex}}\left(3-{n_{v}}\right)(M - d_{v})d_{v}. 
 \end{align*}
 Putting everything together, we can argue that any degrees of freedom $\{d_{v},v\in \cal V\}$ that satisfy (\ref{IAfeasibility3}) must also satisfy: 
 \begin{align}
 &\sum_{[a,b,c]\in \cal T}\big(  2(d_{a}^{2}+d_{b}^{2}+d_{c}^{2}) +3\left({d_{a}d_{b}+d_{a}d_{c}+d_{b}d_{c}}\right) -2M(d_{a}+d_{b}+d_{c})\big)
  \leq D_{\rm ex} \Leftrightarrow \nonumber\\
  &\sum_{[a,b,c]\in \cal T}\big(  
  \underbrace{(d_{a}+d_{b})^{2}+(d_{a}+d_{c})^{2}+(d_{b}+d_{c})^{2} +{d_{a}d_{b}+d_{a}d_{c}+d_{b}d_{c}} -2M(d_{a}+d_{b}+d_{c})}_{\triangleq g(d_{a},d_{b},d_{c})}\big)
  \leq D_{\rm ex}, 
  \end{align}
  where $$D_{\rm ex} = 2\sum_{v\in {\cal V}_{\rm ex}}\left(3-{n_{v}}\right)(M - d_{v})d_{v}.$$
  Since $0\leq n_{v}<3$ and $0\leq d_{v}\leq M$, $\forall v\in \cal V$ we have that
  $$\;D_{\rm ex}\leq\sum_{v\in {\cal V}_{\rm ex}}6(M - d_{v})d_{v}\leq \frac{3M^{2}}{2}|{\cal{V}_{\rm ex}}|,$$ and hence 
\begin{align}
\sum_{[a,b,c]\in \cal T}g(d_{a},d_{b},d_{c})
  \leq \frac{3M^{2}}{2}|{\cal{V}_{\rm ex}}|.
  \end{align}

\section{Proof of Lemma~\ref{lem:LP-abc}}\label{proof:lem:LP-abc}

Consider the optimization problem $LP(\sv,\gv,\alpha,\beta,\gamma)$ given by
\begin{align}
\underset{\left\{ \substack{\xv \in \mathbb{R}^{n}} \right\}}{\mbox{maximize}}\;\;\;\;\;  &\alpha\cdot \sv^{\rm T}\xv + \beta \\
\;\mbox{subject to:} \;\;\;\;\;& \gv^{\rm T}\xv \leq \gamma,\;\,{\bf 1}^{\rm T}\xv  = 1,\;\,\xv  \geq  0,  
\end{align}
where $\alpha,\beta,\gamma\geq0$ and $\sv,\gv\in \RR^{n}$, and let ${\rm opt}(\sv,\gv,\alpha,\beta,\gamma)$ denote its optimal value. The Langrangian associated with $LP(\sv,\gv,\alpha,\beta,\gamma)$ is given by 
\begin{align}
L(\xv,\lambda,\mu,\boldsymbol{\nu}) &= \alpha\cdot\sv^{\rm T}\xv + \beta - \lambda(\gv^{\rm T}\xv-\gamma)- \mu({\bf 1}^{\rm T}\xv-1) + \boldsymbol{\nu}^{\rm T}\xv \\ 
&= (\alpha\sv-\lambda\gv-\mu{\bf 1}+\boldsymbol{\nu})^{\rm T}\xv + \lambda\gamma + \mu +\beta \label{L}
\end{align}
where $\lambda$, $\mu\in \RR$ and $\boldsymbol{\nu}\in\RR^{n}$ are the Lagrange multipliers associated with the constraints $\gv^{\rm T}\xv \leq \gamma$, ${\bf 1}^{\rm T}\xv  = 1$ and $\xv  \geq  0$.
From duality theory we have that the Lagrange dual function defined as 
\begin{equation}\label{dualf}
h(\lambda,\mu,\boldsymbol{\nu}) \triangleq \sup_{\xv \in \mathbb{R}^{n}}L(\xv,\lambda,\mu,\boldsymbol{\nu})
\end{equation}
satisfies
\begin{equation}
{\rm opt}(\sv,\gv,\alpha,\beta,\gamma)\leq h(\lambda,\mu,\boldsymbol{\nu})
\end{equation}
for any $\lambda\geq 0$, $\boldsymbol{\nu}\geq 0$ and  $\mu \in \RR$.
From (\ref{L}) and (\ref{dualf}) we obtain that 
\begin{equation}
h(\lambda,\mu,\boldsymbol{\nu}) = \begin{cases}
\lambda\gamma+\mu+\beta \;\;\;\mbox{when}\;\;\;\alpha\sv-\lambda\gv-\mu{\bf 1}+\boldsymbol{\nu}=0\\
\infty \;\;\;\;\;\;\;\;\;\;\;\;\;\;\;\;\;\mbox{otherwise.} 
\end{cases}
\end{equation}
Since $\boldsymbol{\nu}\geq 0$ we have that $\alpha\sv-\lambda\gv-\mu{\bf 1}+\boldsymbol{\nu}=0 \Leftrightarrow \mu{\bf 1}\geq \alpha\sv-\lambda\gv$ and therefore setting 
$\mu^{*} =  \max_{i}\{\alpha s_{i}-\lambda g_{i}\}$ we get
\begin{equation}
{\rm opt}(\sv,\gv,\alpha,\beta,\gamma)\leq h(\lambda,\mu^{*},\boldsymbol{\nu}) = 
\max_{i}\{\alpha s_{i}-\lambda g_{i}\} +\lambda\gamma+\beta,
\end{equation}
for all $\lambda\geq 0$. Rewriting the above bound as 
$\alpha\cdot \max_{i}\{ s_{i}-\tilde\lambda\cdot g_{i}\} +\tilde\lambda{\alpha}\gamma+\beta  $
where $\tilde\lambda\triangleq\lambda/\alpha\geq 0$ we obtain the desired result.

\clearpage
\section{Proof of Lemma~\ref{construction1}}\label{proof:construction}
Recall that the set of vertices $\cal V$  in $\cal G(V,E)$ is defined in terms of a parameter $r\geq 1$ as 
\begin{equation*}
{\cal V} = \left\{ \phi^{-1}(z) : z\in \mathbb{Z}(\omega)\cap{\cal B}_{r}\right\},
\end{equation*}
where
$${\cal B}_{r} \triangleq \left\{z\in \mathbb{C}:|{\rm Re}(z)|\leq r , |{\rm Im}(z)|\leq \frac{\sqrt{3}r}{2}\right\}.$$
Since the size of the graph depends on the choice of  $r$, we will consider here the sequence of graphs ${\cal G}^{(r)}({\cal V}^{(r)},{\cal E}^{(r)})$,  indexed by $r \in \ZZ^{+}$ and provide the corresponding results in terms of the above parameter.

\subsection{The cardinality of ${\cal V}^{(r)}$}

By definition $|{\cal V}^{(r)}| =  |\mathbb{Z}(\omega)\cap{\cal B}_{r}|$. Hence, our goal is to count the number of Eisenstein integers that belong to the set $ \mathbb{Z}(\omega)\cap{\cal B}_{r}$. 
We define the sets
\begin{equation}
L(k) = \left\{z\in\mathbb{Z}(\omega)\cap{\cal B}_{r} : |{\rm Im}(z)| = \frac{\sqrt{3}k}{2}\right\}\end{equation}
for all $k\in\{-r,...,0,...,r\}$. Notice that the sets $L(k)$ contain all the Eisenstein integers that  lie on the same horizontal line on the complex plane and hence $\bigcup_{k}L(k)$  forms a partition of the set $\mathbb{Z}(\omega)\cap{\cal B}_{r}$. Therefore,
$$|\mathbb{Z}(\omega)\cap{\cal B}_{r}| = \sum_{k=-r}^{r}|L(k)|.$$
A key observation coming  from the triangular structure of $\ZZ(\omega)$ is that 
\begin{equation*}
|L(k)| = \begin{cases}|L(0)|, \;\;k \mbox{ is even}\\|L(1)|, \;\;k \mbox{ is odd.} \end{cases}
\end{equation*}
Hence, we can write
$$|\mathbb{Z}(\omega)\cap{\cal B}_{r}| = K_{\rm even}^{[r]}|L(0)| + K_{\rm odd}^{[r]}|L(1)|.$$
where $K_{\rm even}^{[r]},K_{\rm odd}^{[r]}$ denote the cardinalities of even and odd integers in $\{-r,...,0,...,r\}$. 

If $r$ is even then $K_{\rm even}^{[r]} = r+1$ and $K_{\rm even}^{[r]} = r$, whereas if $r$ is odd then $K_{\rm even}^{[r]} = r$ and $K_{\rm even}^{[r]} = r+1$. Since $|L(0)|=2r+1$ and $|L(1)|=2r$ for all $r\geq 1$ we have that
\begin{equation}
|{\cal V}^{(r)}| = |\mathbb{Z}(\omega)\cap{\cal B}_{r}| = \begin{cases}4r^{2}+3r+1, \;\;r \mbox{ is even}\\4r^{2}+3r, \;\;\;\;\;\;\;\;\,r \mbox{ is odd.} \end{cases}
\label{cardV}
\end{equation}

\subsection{The cardinality of ${\cal T}^{(r)}$}

We will associate here each ordered vertex triplet $[a,b,c]\in {\cal T}^{(r)}$ with its leading vertex $a\in {\cal V}^{(r)}$ in a one-to-one fashion and define the set
$${\cal A}^{(r)} =\{\phi^{-1}(u)\in \mathbb{Z}(\omega)\cap{\cal B}_{r} : [a,b,c]\in {\cal T}^{(r)}\}.$$
In order to determine the cardinality of ${\cal T}^{(r)}$,  it suffices to count the number of Eisenstein integers that belong to the set ${\cal A}^{(r)}$, since $|{\cal T}^{(r)}| =|{\cal A}^{(r)}|$ by definition.
Consider the sets

$$S(k)= {\cal A}^{(r)}\cap L(k)$$
for all $k\in\{-r,...,0,...,r-1\}$. The set $S(k)$ contains all the Eisenstein integers that are associated with a leading vertex of a triangle and lie on the same horizontal line $L(k)$.  
As before, $\bigcup_{k}S(k)$
forms a partition of ${\cal A}^{(r)}$ and hence
$$|{\cal A}^{(r)}| = \sum_{k=-r}^{r-1} |S(k)|.$$
Intuitively $|S(k)|$ counts the number of triangles that are formed between the lines $L(k)$ and $L(k+1)$ and hence the total number of triangles can be obtained by adding all $|S(k)|$ up to $k=r-1$.

It is not hard to verify that 
\begin{equation*}
|S(k)| = \begin{cases}|S(0)|, \;\;k \mbox{ is even}\\|S(1)|, \;\;k \mbox{ is odd,} \end{cases}
\end{equation*}
for all $r\geq 2$ and hence 
$$|{\cal A}^{(r)}|  = K_{\rm even}^{[r]}|S(0)| + K_{\rm odd}^{[r]}|S(1)|$$
where $K_{\rm even}^{[r]},K_{\rm odd}^{[r]}$ denote the cardinalities of even and odd integers in $\{-r,...,0,...,r-1\}$. We have that $K_{\rm even}^{[r]}=K_{\rm odd}^{[r]}=r$ and hence
$$|{\cal A}^{(r)}|  = r\left(|S(0)| + |S(1)|\right).$$
It follows from the definitions of ${\cal T}^{(r)}$, ${\cal A}^{(r)}$ and $S(0)$ that  
\begin{equation*}
z\in S(0) \Leftrightarrow  z\in L(0)\; \mbox{and}\; z+\omega,z+\omega+1 \in L(1).
\end{equation*}
We can argue hence that the set $S(0)$ hence contains the integers $a \in \{-r+1,...,r-1\}$.

Similarly,
\begin{equation*}
z\in S(1) \Leftrightarrow  z\in L(1)\; \mbox{and} \;z+\omega,z+\omega+1 \in L(2),
\end{equation*}
and hence the set $S(1)$ contains the Eisenstein integers $z=a + \omega$, for all $a\in \{-r+1,...,r\}$.

It follows that $|S(0)| =2r-1$ and $|S(1)| =2r$
and therefore 
\begin{align}
|{\cal T}^{(r)}| =|{\cal A}^{(r)}|= 4r^{2}-r. 
\label{cardT}
\end{align}

Comparing (\ref{cardT}) with (\ref{cardV}) gives $|{\cal T}^{(r)}|\leq|{\cal V}^{(r)}|$ as stated in the first part of the lemma.

\subsection{The cardinality of ${\cal V}_{\rm ex}^{(r)}$}

We will upper bound $|{\cal V}_{\rm ex}^{(r)}|$ as follows.
From Lemma~\ref{lem1} we have that
\begin{equation*}
\sum_{u\in {\cal V}_{\rm ex}^{(r)}}\hspace{-0.1in}\left(1-\frac{n_{u}}{3}\right)x_{u}\hspace{-0.05in} = \hspace{-0.1in}\sum_{v\in{\cal V}^{(r)}}x_{v} - \hspace{-0.2in} \sum_{[i,j,k]\in {\cal T}^{(r)}}\hspace{-0.1in}\left(\frac{x_{i}+x_{j}+x_{k}}{3}\right),
\end{equation*}
for any $\{x_{v}: v\in {\cal V}^{(r)}\}$. Setting $x_{v}=1,\,\forall v\in {\cal V}^{(r)}$, we obtain

\begin{equation*}
|{\cal V}_{\rm ex}^{(r)}|-\sum_{u\in {\cal V}_{\rm ex}}\frac{n_{u}}{3} = |{\cal V}^{(r)}| - |{\cal T}^{(r)}|.
\end{equation*}
Since $n_{v}\leq 2$ for all $v\in {\cal V}_{\rm ex}^{(r)}$ we have that 
$$\sum_{u\in {\cal V}_{\rm ex}}\frac{n_{u}}{3} \leq \frac{2}{3}|{\cal V}_{\rm ex}^{(r)}|, $$
and hence 
\begin{equation}
|{\cal V}_{\rm ex}^{(r)}| \leq 3\left(|{\cal V}^{(r)}| - |{\cal T}^{(r)}|\right).
\label{cardVexTMP}
\end{equation}

From (\ref{cardV}) and (\ref{cardVexTMP}) we obtain 
\begin{eqnarray}
|{\cal V}_{\rm ex}^{(r)}| &\leq&  3\left(4r^{2} + 3r + 1 - 4r^{2} +r\right) \nonumber\\
&=&12r + 3.
\label{cardVex}
\end{eqnarray}

From (\ref{cardV}) it follows that $\sqrt{|{\cal V}^{(r)}|} \geq 2r$ for all $r\geq 1$.
From (\ref{cardVex}) we have that 
$$|{\cal V}_{\rm ex}^{(r)}| \leq 12r +3 \leq 9\sqrt{|{\cal V}^{(r)}|},\; \forall r\geq 1,$$
and hence $|{\cal V}_{\rm ex}^{(r)}| = {\cal O}\left(\sqrt{|{\cal V}^{(r)}|}\right)$.

\clearpage
\section{The Linear Programming Converse for $M\times M$ Cellular Systems}\label{generalM}

Recall from Section \ref{converse} that the following linear program can be used to upper bound the average (per cell) DoFs achievable in $\cal G(V,E)$ for any decoding order $\pi$.

\begin{align*}
{{ \rm LP}({{\cal G}_{\pi}})}:\;\; \underset{\left\{ \substack{x_{(i,j,k)}\in \mathbb{R},\\ [i,j,k]\in \cal D} \right\}}{\mbox{maximize}}\;\;\;  & \frac{|{\cal T}|}{3|{\cal V}|} \sum_{[i,j,k] \in \cal \cal D}s{(i,j,k)}\cdot x_{(i,j,k)} + \frac{M|{\cal V}_{\rm ex}|}{|{\cal V}|} \\
\mbox{subject to:} \;\;& \sum_{[i,j,k] \in \cal\cal D} g{(i,j,k)}\cdot x_{(i,j,k)} \leq \frac{3M^{2}|{\cal V}_{\rm ex}|}{2|{\cal T}|}
\\
&\sum_{[i,j,k] \in \cal D} x_{(i,j,k)}  = 1 \\
& \;\;\,x_{(i,j,k)}  \geq  0, \; \forall {[i,j,k] \in \cal \cal D}   
\end{align*}

\vspace{0.1in}
Using the result of  Lemma~\ref{lem:LP-abc}, we can show that the solution of the linear program ${\rm LP}({{\cal G}_{\pi}})$ is upper bounded by

\begin{equation} \label{lpbound1}
{\rm opt}({\rm LP}({{\cal G}_{\pi}}))\leq \frac{|{\cal T}|}{3|{\cal V}|} \cdot \max_{[i,j,k]\in {\cal D}}\left\{s(i,j,k)-\frac{g(i,j,k)}{2M} \right\} + \frac{5M}{4}\cdot\frac{|{\cal V}_{\rm ex}|}{|{\cal V}|},\vspace{0.3in}
\end{equation}
by identifying $\alpha = \frac{|{\cal T}|}{3|{\cal V}|}$, $\beta=\frac{M|{\cal V}_{\rm ex}|}{|{\cal V}|}$ , $\gamma= \frac{3M^{2}|{\cal V}_{\rm ex}|}{2|{\cal T}|}$ and setting $\lambda = \frac{1}{2M}$.
Further, using Lemma~\ref{construction1} we obtain 

\begin{equation}
{\rm opt}({\rm LP}({{\cal G}_{\pi}}))\leq \frac{1}{3} \cdot \max_{[i,j,k]\in {\cal T}_{\rm D}}\left\{s(i,j,k)-\frac{g(i,j,k)}{2M} \right\} + {\cal O}\left(\frac{1}{\sqrt{|{\cal V}|}}\right).
\end{equation}
\vspace{0.1in}

Putting everything together and defining the function 
\begin{align}
f_{M}{(i,j,k)}&\triangleq \frac{1}{3}\left(s(i,j,k)-\frac{g(i,j,k)}{2M}\right)\\
&= \frac{1}{3}\left(2(i+j+k) - \frac{(i+j)^{2}+ij +(i+k)^{2}+ik+(j+k)^{2}+jk}{2M}\right),
\label{FM}
\end{align}
we arrive at the following theorem that upper bounds the average (per cell) DoFs in our framework.

\begin{thm}  \label{thm3}
For a cellular system $\cal G(V,E)$ in which transmitters and receivers are equipped with $M$ antennas each and for any network interference 
cancellation decoding order $\pi$,  the average (per cell) DoFs   that can be achieved by any one-shot linear beamforming scheme  are bounded by 
$$\displaystyle
 d_{{\cal G},{\pi}} \leq \max_{[i,j,k]\in {\cal D}}f_M{(i,j,k)} + {\cal O}\left( \scriptstyle{1}/{{\sqrt{|{\cal V}|}}}\right),$$
 where ${\cal D}$ is the set of all distinct triangle-DoF configurations given in (\ref{TD}).
\hfill \QED
\end{thm}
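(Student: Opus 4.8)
The plan is to assemble the relaxations and the duality bound that have already been prepared in Section~\ref{converse} and in the preceding lemmas into a single chain. The starting point is $d_{{\cal G},\pi}\leq{\rm opt}({\rm Q}_1({\cal G}_\pi))\leq{\rm opt}({\rm Q}_2({\cal G}_\pi))\leq{\rm opt}({\rm LP}({\cal G}_\pi))$: the feasibility conditions (\ref{IAfeasibility})--(\ref{IAfeasibility3}) give the first inequality, Lemmas~\ref{lem:objective} and~\ref{lem:constraint} rewrite the objective and the single nonlinear constraint as sums over triangles to give the second, and passing to the relative frequencies $x_{(i,j,k)}$ over the finite configuration set ${\cal D}$ of (\ref{TD}) and relaxing integrality gives the third. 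It therefore suffices to upper-bound the value of the linear program ${\rm LP}({\cal G}_\pi)$, whose data are the vectors $s(\cdot)$ and $g(\cdot)$ indexed by ${\cal D}$.

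The central step is to invoke Lemma~\ref{lem:LP-abc}, which dominates the LP value by $\alpha\cdot\max_i\{s_i-\lambda g_i\}+\lambda\alpha\gamma+\beta$ for every $\lambda\geq0$. Reading the coefficients off ${\rm LP}({\cal G}_\pi)$, I would identify $\alpha=\frac{|{\cal T}|}{3|{\cal V}|}$, $\beta=\frac{M|{\cal V}_{\rm ex}|}{|{\cal V}|}$, and $\gamma=\frac{3M^2|{\cal V}_{\rm ex}|}{2|{\cal T}|}$. The decisive choice is the multiplier: setting $\lambda=\frac{1}{2M}$ makes the penalized objective $s(i,j,k)-\frac{g(i,j,k)}{2M}$ equal to $3f_M(i,j,k)$ by the definition (\ref{FM}), so that $\alpha\max_{[i,j,k]\in{\cal D}}\{s-\lambda g\}=\frac{|{\cal T}|}{|{\cal V}|}\max_{[i,j,k]\in{\cal D}}f_M(i,j,k)$, while a short substitution shows the remaining two terms collapse to $\lambda\alpha\gamma+\beta=\frac{5M}{4}\cdot\frac{|{\cal V}_{\rm ex}|}{|{\cal V}|}$. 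This is precisely the bound (\ref{lpbound1}).

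Finally, I would absorb the error terms using Lemma~\ref{construction1}. Since $|{\cal T}|\leq|{\cal V}|$, the leading coefficient $\frac{|{\cal T}|}{|{\cal V}|}$ is at most $1$; and since $|{\cal V}_{\rm ex}|={\cal O}(\sqrt{|{\cal V}|})$, the boundary term $\frac{5M}{4}\cdot\frac{|{\cal V}_{\rm ex}|}{|{\cal V}|}$ is ${\cal O}(1/\sqrt{|{\cal V}|})$ for fixed $M$. Combining these bounds yields $d_{{\cal G},\pi}\leq\max_{[i,j,k]\in{\cal D}}f_M(i,j,k)+{\cal O}(1/\sqrt{|{\cal V}|})$, uniformly over all decoding orders $\pi$, which is the claim.

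The one place that calls for genuine care is the choice of $\lambda$. For $M=2$ the paper solves the minimax $\min_{\lambda\geq0}\max_i\{s_i-\lambda g_i\}$ exactly and finds $\lambda^{*}=\frac14=\frac{1}{2M}$; for general $M$ the exact minimizer is not needed, because Lemma~\ref{lem:LP-abc} is valid for every $\lambda\geq0$ and any such choice already produces an upper bound. The natural guess $\lambda=\frac{1}{2M}$ is exactly what exposes $f_M$, and the only unavoidable computation is verifying that this value yields the clean coefficient $\frac{5M}{4}$ on the boundary term. One should therefore not expect the resulting bound to be tight for $M>2$ unless one separately checks that $\lambda=\frac{1}{2M}$ also solves the corresponding minimax.
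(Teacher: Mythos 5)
Your proposal is correct and follows essentially the same route as the paper's own proof: the identical chain of relaxations ${\rm Q}_1({\cal G}_\pi)\to{\rm Q}_2({\cal G}_\pi)\to{\rm LP}({\cal G}_\pi)$, the same invocation of Lemma~\ref{lem:LP-abc} with $\alpha=\frac{|{\cal T}|}{3|{\cal V}|}$, $\beta=\frac{M|{\cal V}_{\rm ex}|}{|{\cal V}|}$, $\gamma=\frac{3M^2|{\cal V}_{\rm ex}|}{2|{\cal T}|}$ and the multiplier $\lambda=\frac{1}{2M}$ that exposes $f_M$ and yields the coefficient $\frac{5M}{4}$, and the same appeal to Lemma~\ref{construction1} to absorb the boundary terms into ${\cal O}\bigl(1/\sqrt{|{\cal V}|}\bigr)$. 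Your closing caveat—that $\lambda=\frac{1}{2M}$ need not solve the minimax for $M>2$ but any $\lambda\geq 0$ still gives a valid upper bound—matches how the paper handles the general case as well.
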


\begin{table}[h]
    \caption{ The sets ${\cal T}_{\rm D}$ and corresponding values of $f_M{(i,j,k)}$ for $M=2,3$ and $4$.
The asterisks indicate the configurations $[i,j,k]\in \cal D$ that attain the maximum of $f_M{(i,j,k)}$.}
    \begin{minipage}{\linewidth}
      
      \centering
      \begin{tabular}[t]{ |c|c| }
  \hline
  \multicolumn{2}{|c|}{$M=2$} \\
  \hline\hline
  ${\cal D}$ & $f_M{(i,j,k)}$\\
  \hline
  
  $[0,0,0]$ & $0$ \\
  $[0,0,1]$ & ${1}/{2}$ \\
  $[0,0,2]$ & ${2}/{3}$ \\
  $\bf[0,1,1]$ & $\bf{3}/{4}^{*}$ \\
  $\bf[1,1,1]$ & $\bf{3}/{4}^{*}$ \\
  \hline
\end{tabular}
\vspace{1in}
\quad
\begin{tabular}[t]{ |c|c| }
  \hline
  \multicolumn{2}{|c|}{$M=3$} \\
  \hline\hline
  ${\cal D}$ & $f_M{(i,j,k)}$\\
  \hline
  
  $[0,0,0]$ & $0$ \\
  $[0,0,1]$ & ${5}/{9}$ \\
  $[0,0,2]$ & ${8}/{9}$ \\
  $[0,0,3]$ & $1$ \\
  $[0,1,1]$ & ${17}/{18}$ \\
  $[0,1,2]$ & ${10}/{9}$ \\
  $\bf[1,1,1]$ & $\bf{7}/{6}^{*}$ \\
  $\bf[1,1,2]$ & $\bf{7}/{6}^{*}$ \\
  \hline
\end{tabular}
\quad
\begin{tabular}[t]{ |c|c| }
  \hline
  \multicolumn{2}{|c|}{$M=4$} \\
  \hline\hline
  ${\cal  D}$ & $f_M{(i,j,k)}$\\
  \hline
  
  $[0,0,0]$ & $0$ \\
  $[0,0,1]$ & ${7}/{12}$ \\
  $[0,0,2]$ & $1$ \\
  $[0,0,3]$ & $5/4$ \\
  $[0,0,4]$ & $4/3$ \\
  $[0,1,1]$ & ${25}/{24}$ \\
  $[0,1,2]$ & ${4}/{3}$ \\
  $[0,1,3]$ & ${35}/{24}$ \\
  $[0,2,2]$ & ${3}/{2}$ \\
  $[1,1,1]$ & ${11}/{8}$ \\
  $[1,1,2]$ & ${37}/{24}$ \\
  $[1,1,3]$ & ${37/24}$ \\
  $\bf[1,2,2]$ & $\bf{19}/{12}^{*}$ \\
  $[2,2,2]$ & ${3}/{2}$ \\
  \hline
\end{tabular}
   \end{minipage} 
  \vspace{-1in}
\end{table}

\begin{cor}We have that $\displaystyle \max_{[i,j,k]\in {\cal T}_{\rm D}}f_M{(i,j,k)}\leq \frac{2M}{5}, \;\;\mbox{for all}\; M.$
\end{cor}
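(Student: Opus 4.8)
The plan is to prove the sharper statement that the \emph{continuous relaxation} of the maximization---in which $i,j,k$ are allowed to be any nonnegative reals subject only to the pairwise constraints $i+j,\,i+k,\,j+k\le M$---already has optimal value $\frac{2M}{5}$. Since every integer configuration $[i,j,k]\in{\cal D}$ is a feasible point of this relaxation, and since $f_M$ is symmetric in its three arguments (so the ordering $i\le j\le k$ built into ${\cal D}$ is irrelevant to its value), the corollary is then immediate.

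First I would put $f_M$ into a transparent form. Starting from (\ref{FM}) and using the identity $(i+j)^2+(i+k)^2+(j+k)^2=2(i^2+j^2+k^2)+2(ij+ik+jk)$, a one-line computation gives
\begin{equation*}
f_M(i,j,k)=\frac{2(i+j+k)}{3}-\frac{i^2+j^2+k^2}{3M}-\frac{ij+ik+jk}{2M}.
\end{equation*}
Next I would show that $f_M$ is \emph{strictly concave}: its Hessian is $-\frac{1}{M}\bigl(\tfrac16\Id+\tfrac12\Jm\bigr)$, where $\Jm$ denotes the $3\times3$ all-ones matrix, and $\tfrac16\Id+\tfrac12\Jm$ has eigenvalues $\tfrac53$ (eigenvector $\onev$) and $\tfrac16$ (multiplicity two), hence is positive definite and the Hessian is negative definite.

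With concavity established, I would find the unique stationary point. By symmetry the gradient condition $\frac{2}{3}-\frac{2i}{3M}-\frac{j+k}{2M}=0$ (together with its two permutations) forces $i=j=k=\frac{2M}{5}$. This point lies strictly inside the relaxed feasible region, since each pairwise sum equals $\frac{4M}{5}<M$; by concavity it is therefore the global maximizer over the whole region and not merely a boundary-constrained one. Substituting it into the simplified expression yields $\frac{4M}{5}-\frac{4M}{25}-\frac{6M}{25}=\frac{2M}{5}$, so the relaxed optimum equals $\frac{2M}{5}$ and hence $\max_{[i,j,k]\in{\cal D}}f_M(i,j,k)\le\frac{2M}{5}$.

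There is no genuine obstacle; the one point deserving care is checking that the stationary point $(\frac{2M}{5},\frac{2M}{5},\frac{2M}{5})$ lands in the \emph{interior} of the feasible polytope, which is exactly what allows strict concavity to upgrade stationarity to global optimality with no boundary-case analysis. I note that this maximizer is integral only when $5\mid M$, so for generic $M$ the integer maximum over ${\cal D}$ is strictly below $\frac{2M}{5}$ (e.g.\ $19/12<8/5$ at $M=4$), consistent with the table above; the remaining work is just the routine simplification of $g$ and the final substitution.
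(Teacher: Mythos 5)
Your proof is correct and follows essentially the same route as the paper: the paper's proof also rests on concavity of $f_M$ and the fact that its unconstrained global maximum over $\mathbb{R}^3$ is $\frac{2M}{5}$, attained at $[i,j,k]=\left[\frac{2M}{5},\frac{2M}{5},\frac{2M}{5}\right]$, which dominates the maximum over any subset. Your additional details (the explicit Hessian computation and the check that the stationary point lies in the interior of the relaxed polytope) simply fill in the assertions the paper leaves implicit; the interior check is not even needed for the upper bound itself, since bounding by the unconstrained maximum over $\mathbb{R}^3$ already suffices.
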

\begin{proof}The function $f_M{(i,j,k)}$ is  concave  with global maximum $\displaystyle\max_{[i,j,k]\in \mathbb{R}^{3}}f_M{(i,j,k)} = {2M}/{5}$ at $[i,j,k] = \left[\frac{2M}{5},\frac{2M}{5},\frac{2M}{5}\right]$. 
Since $\displaystyle\max_{[i,j,k]\in {\cal T}_{\rm D}}f_M{(i,j,k)}\leq \max_{[i,j,k]\in \mathbb{R}^{3}}f_M{(i,j,k)}$ for all $M$, the corollary follows.
\end{proof}
\clearpage

\bibliographystyle{ieeetr}
\bibliography{referencesIT}

\begin{thebibliography}{10}

\bibitem{cj08}
V.~Cadambe and S.~Jafar, ``Interference alignment and degrees of freedom of the
  {K}-user interference channel,'' {\em IEEE Transactions on Information
  Theory}, vol.~54, no.~8, pp.~3425--3441, 2008.

\bibitem{mgmk09}
A.~Motahari, S.~Gharan, M.~Maddah-Ali, and A.~Khandani, ``Real interference
  alignment: Exploiting the potential of single antenna systems,'' {\em to
  appear, IEEE Transactions on Information Theory}, 2014.
\newblock Available online http://arxiv.org/abs/0908.2282.

\bibitem{ergodic}
B.~Nazer, M.~Gastpar, S.~Jafar, and S.~Vishwanath, ``Ergodic interference
  alignment,'' {\em IEEE Transactions on Information Theory}, vol.~58,
  pp.~6355--6371, October 2012.

\bibitem{mmk08}
M.~Maddah-Ali, A.~Motahari, and A.~Khandani, ``Communication over {MIMO} {X}
  channels: Interference alignment, decomposition, and performance analysis,''
  {\em IEEE Transactions on Information Theory}, vol.~54, pp.~3457--3470,
  August 2008.

\bibitem{sy13}
G.~Sridharan and W.~Yu, ``Degrees of freedom of {MIMO} cellular networks:
  Decomposition and linear beamforming design,'' {\em CoRR},
  vol.~abs/1312.2681, December 2013.

\bibitem{sht11}
C.~Suh, M.~Ho, and D.~Tse, ``Downlink interference alignment,'' {\em IEEE
  Transactions on Communications}, vol.~59, pp.~2616--2626, September 2011.

\bibitem{katabi09}
S.~Gollakota, S.~D. Perli, and D.~Katabi, ``Interference alignment and
  cancellation,'' {\em SIGCOMM Computer Communication Review}, vol.~39,
  pp.~159--170, August 2009.

\bibitem{sy13isit}
G.~Sridharan and W.~Yu, ``Degrees of freedom of {MIMO} cellular networks with
  two cells and two users per cell,'' in {\em IEEE International Symposium on
  Information Theory Proceedings (ISIT), 2013}, pp.~1774--1778, July 2013.

\bibitem{ygjk10}
C.~Yetis, T.~Gou, S.~Jafar, and A.~Kayran, ``On feasibility of interference
  alignment in {MIMO} interference networks,'' {\em IEEE Transactions on Signal
  Processing}, vol.~58, pp.~4771--4782, September 2010.

\bibitem{Razaviyayn}
M.~Razaviyayn, G.~Lyubeznik, and Z.-Q. Luo, ``On the degrees of freedom
  achievable through interference alignment in a {MIMO} interference channel,''
  {\em IEEE Transactions on Signal Processing}, vol.~60, pp.~812--821, February
  2012.

\bibitem{Bresler}
G.~Bresler, D.~Cartwright, and D.~Tse, ``Interference alignment for the {MIMO}
  interference channel,'' {\em CoRR}, vol.~abs/1303.5678, 2013.

\bibitem{nmc14}
V.~Ntranos, M.~A. Maddah-Ali, and G.~Caire, ``Cellular interference
  alignment,'' {\em CoRR}, vol.~abs/1402.3119, 2014.

\bibitem{extendedkumar04}
L.-L. Xie and P.~Kumar, ``A network information theory for wireless
  communication: scaling laws and optimal operation,'' {\em IEEE Transactions
  on Information Theory}, vol.~50, pp.~748--767, May 2004.

\bibitem{extendedtelatar05}
O.~Leveque and I.~Telatar, ``Information-theoretic upper bounds on the capacity
  of large extended ad hoc wireless networks,'' {\em IEEE Transactions on
  Information Theory}, vol.~51, pp.~858--865, March 2005.

\bibitem{extendedtse10}
A.~Ozgur, R.~Johari, D.~Tse, and O.~Leveque, ``Information-theoretic operating
  regimes of large wireless networks,'' {\em IEEE Transactions on Information
  Theory}, vol.~56, pp.~427--437, Jan 2010.

\bibitem{kfv06}
M.~Karakayali, G.~Foschini, and R.~Valenzuela, ``Network coordination for
  spectrally efficient communications in cellular systems,'' {\em Wireless
  Communications, IEEE}, vol.~13, pp.~56--61, August 2006.

\bibitem{fkv06}
G.~Foschini, K.~Karakayali, and R.~Valenzuela, ``Coordinating multiple antenna
  cellular networks to achieve enormous spectral efficiency,'' {\em
  Communications, IEEE Proceedings}, vol.~153, pp.~548--555, August 2006.

\bibitem{multicell10}
D.~Gesbert, S.~Hanly, H.~Huang, S.~Shamai~Shitz, O.~Simeone, and W.~Yu,
  ``Multi-cell {MIMO} cooperative networks: A new look at interference,'' {\em
  Selected Areas in Communications, IEEE Journal on}, vol.~28, pp.~1380--1408,
  December 2010.

\bibitem{htc12}
H.~Huh, A.~Tulino, and G.~Caire, ``Network {MIMO} with linear zero-forcing
  beamforming: Large system analysis, impact of channel estimation, and
  reduced-complexity scheduling,'' {\em IEEE Transactions on Information
  Theory}, vol.~58, pp.~2911--2934, May 2012.

\bibitem{wyner94}
A.~Wyner, ``Shannon-theoretic approach to a {G}aussian cellular multiple-access
  channel,'' {\em IEEE Transactions on Information Theory}, vol.~40,
  pp.~1713--1727, November 1994.

\bibitem{shamaiwyner97}
S.~Shamai and A.~Wyner, ``Information-theoretic considerations for symmetric,
  cellular, multiple-access fading channels,'' {\em IEEE Transactions on
  Information Theory}, vol.~43, pp.~1877--1894, November 1997.

\bibitem{andrews2012femtocells}
J.~G. Andrews, H.~Claussen, M.~Dohler, S.~Rangan, and M.~C. Reed, ``Femtocells:
  Past, present, and future,'' {\em {IEEE J. on Selected. Areas in Commun.
  (JSAC)}}, vol.~30, pp.~497--508, April 2012.

\bibitem{gnaj13}
C.~Geng, N.~Naderializadeh, A.~S. Avestimehr, and S.~A. Jafar, ``On the
  optimality of treating interference as noise,'' {\em CoRR},
  vol.~abs/1305.4610, May 2013.

\bibitem{nmc14duality}
V.~Ntranos, M.~A. Maddah-Ali, and G.~Caire, ``On uplink-downlink duality for
  cellular ia,'' {\em CoRR}, vol.~abs/1407.3538, 2014.

\bibitem{costaDPC}
M.~H.~M. Costa, ``Writing on dirty paper (corresp.),'' {\em Information Theory,
  IEEE Transactions on}, vol.~29, pp.~439--441, May 1983.

\end{thebibliography}

\end{document}